\def\BibTeX{{\rm B\kern-.05em{\sc i\kern-.025em b}\kern-.08em
    T\kern-.1667em\lower.7ex\hbox{E}\kern-.125emX}}
\def\>{\ensuremath{\rangle}}
\def\<{\ensuremath{\langle}}
\newcommand{\hs}{\mathcal{H}}
\newcommand {\tr} {{\mathit{tr}}}
\newtheorem{thm}{Theorem}[section]
\newtheorem{lem}{Lemma}[section]
\newtheorem{defn}{Definition}[section]
\newtheorem{prop}{Proposition}[section]
\newtheorem{exam}{Example}[section]
\newtheorem{rem}{Remark}[section]
\journal{}
\begin{document}

\begin{frontmatter}

\title{A Practical Quantum Hoare Logic with Classical Variables, I}

\author{Mingsheng Ying}

\affiliation{organization={Centre for Quantum Software and Information, University of Technology Sydney}, 
           addressline={15 Broadway}, 
           city={Ultimo},
           postcode={2007}, 
          state={NSW},
            country={Australia}}

\begin{abstract}
In this paper, we present a Hoare-style logic for reasoning about quantum programs with classical variables. Our approach offers several improvements over previous work:
\begin{enumerate} \item \textbf{\textit{Enhanced expressivity of the programming language}}: Our logic applies to quantum programs with classical variables that incorporate quantum arrays and parameterized quantum gates, which have not been addressed in previous research on quantum Hoare logic, either with or without classical variables. 
\item \textbf{\textit{Intuitive correctness specifications}}: In our logic, preconditions and postconditions for quantum programs with classical variables are specified as a pair consisting of a classical first-order logical formula and a quantum predicate formula (possibly parameterized by classical variables). These specifications offer greater clarity and align more closely with the programmer's intuitive understanding of quantum and classical interactions.

\item \textbf{\textit{Simplified proof system}}: By introducing a novel idea in formulating a proof rule for reasoning about quantum measurements, along with (2), we develop a proof system for quantum programs that requires only minimal modifications to classical Hoare logic. Furthermore, this proof system can be effectively and conveniently combined with classical first-order logic to verify quantum programs with classical variables. 
\end{enumerate}
As a result, the learning curve for quantum program verification techniques is significantly reduced for those already familiar with classical program verification techniques, and existing tools for verifying classical programs can be more easily adapted for quantum program verification. 
\end{abstract}

\begin{keyword} Quantum programming \sep quantum variables \sep classical variables \sep quantum arrays \sep parameterised quantum gates \sep semantics \sep quantum predicates \sep assertion language \sep Hoare logic \sep proof system.
\end{keyword}

\end{frontmatter}

\section{Introduction}\label{sec-intro}
Hoare logic has played a foundational role in classical programming methodology and program verification techniques. With the advent of quantum computing, there has been a natural desire to establish Hoare-style logic for quantum programs \cite{Valiron1, Valiron2, Ying24, Zuliani}. 

{\vskip 3pt}

\textbf{Quantum Hoare logic and assertion logic}: A program logic is usually built upon an assertion logic, which is used to specify the pre/postconditions of a program that describe the properties of program variables before and after its execution. So, the first step of defining quantum Hoare logic (QHL for short) is to choose an assertion language for describing the properties of quantum states (i.e. the states of quantum program variables). The natural choice for this purpose should be the standard quantum logic of Birkhoff and von Neumann \cite{BvN36}, in which a proposition about a quantum system is modelled as a closed subspace of (equivalently, projection operator on) the system's Hilbert space. Interestingly, the earliest approaches to QHL did not adopt it: \begin{itemize}\item Chadha, Mateus and Sernadas~\cite{CMS} presented a proof system for reasoning about quantum programs in which the assertion language is exogenous quantum logic \cite{Mat06}, where a superposition of classical semantic models is conceived as a semantic model of quantum logic. 
\item Kakutani \cite{Kaku09} proposed an extension of Hartog's probabilistic Hoare logic \cite{Hartog} for reasoning about quantum programs, in which probabilistic assertion logical formulas are generalised with unitary and measurement operators.  
\end{itemize}

It was first proposed by D'Hondt and Panangaden \cite{DP06} to use a special class of observables as assertions (e.g. pre/postconditions) for quantum programs, called quantum predicates. This approach enjoys an elegant physical interpretation for Hoare triples (correctness formulas) of quantum programs. On the other hand, it aligns with Birkhoff-von Neumann quantum logic. The observables used in \cite{DP06} are mathematically modelled as Hermitian operators between the zero and identity operators. They are indeed called quantum effects in the quantum foundations literature, and are considered as propositions in unsharp quantum logic \cite{unsharp} - an extension of Birkhoff-von Neumann logic (note that projection operators are a special kind of quantum effects). Afterwards,  some useful proof rules for reasoning about quantum programs were introduced by Feng et. al ~\cite{FDJY07} using quantum predicates as pre/postconditions.   

A main limitation of all proof systems for quantum programs mentioned above is that they lack (relative) completeness. The first QHL with (relative) completeness was established in~\cite{Ying11} using the notion of pre/postconditions as observables proposed in  \cite{DP06}. 

Research on QHL and related problems has become active in the last few years. Various variants and extensions of QHL have been proposed, including relational QHL \cite{Unruh19a, Barthe, Yangjia, Yu-cav}, QHL with ghost variables \cite{Unruh19}, QHL restricted to projective quantum predicates \cite{Zhou19}, quantum separation logic \cite{QSL1, QSL2}, local and modular reasoning \cite{Deng1, KarS}, dynamic quantum logic \cite{Tak}, incorrectness logic for quantum programs \cite{Yu-inc}, quantum temporal logic \cite{Yu-temp}, refinement calculus for quantum programs \cite{QbC, Feng-refine}, and verification of recursive quantum programs \cite{Xu}, parallel quantum programs \cite{Ying18} and nondeterministic quantum programs \cite{Feng-nd}. Also, a series of  verification tools for quantum programs based on QHL and other logics have been implemented, including QBricks \cite{Valiron0}, QHLProver \cite{Liujy}, CoqQ \cite{Coqq}, Qafny \cite{LiL}, verification of Shor's algorithm in Coq \cite{Peng, PengYX}, and implementation of dynamic quantum logic in Maude \cite{Do} as well as tools for verification of quantum compilers, e.g. VOQC \cite{Rand, Hie, Hie1}, Giallar \cite{Tao, Tao1, Shi} and VQO \cite{LiLY}. For more detailed discussions about research in this area, the readers can consult surveys \cite{Valiron1, Zuliani} and book \cite{Ying24}. 

{\vskip 3pt}
 
\textbf{Classical variables}: But the QHL in \cite{Ying11} also has a limitation; namely, it is defined only for purely quantum programs without classical variables. Theoretically, such QHL is strong enough because all classical computations can be efficiently simulated by quantum computations. For practical applications, however, a QHL with classical variables is more convenient because most of the known quantum algorithms involve classical computations.  

Indeed, the Hoare-style proof systems in \cite{CMS, Kaku09} and several others proposed recently allows classical variables, but all of them do not have (relative) completeness. A (relatively) complete QHL with classical variables was first presented in \cite{FengY} by generalising the techniques of \cite{Ying11}. The key idea of \cite{FengY} is to define a classical-quantum state (i.e. a state of a quantum program with classical variables; cq-state for short) as a mapping from classical states to (mixed) quantum states. Accordingly, a classical-quantum predicate (i.e. an assertion for a quantum program with classical variables; cq-predicate for short) is defined as a mapping from classical states to observables. This design decision was also adopted in \cite{Deng} for developing a satisfaction-based Hoare logic for quantum programs with classical variables, and further extended in \cite{Feng-D} for reasoning about distributed quantum programs with classical variables.

A benefit of the view of cq-states and cq-predicates in \cite{FengY} is that the QHL for purely quantum programs in \cite{Ying11} can be straightforwardly generalised to a Hoare logic for quantum programs with classical variables \cite{FengY} by a point-wise lifting from quantum predicates to cq-predicates. However, it has a drawback that traditional logic tools (e.g. classical first-order logic) cannot be conveniently used to specify and reason about the classical states in a cq-state or a cq-predicate because they appear in the domain of the cq-state or cq-predicate (as a mapping) and often must be treated point-wise.

{\vskip 3pt}

\textbf{Contributions of this paper}: This paper aims at developing a QHL for quantum programs with classical variables that overcomes the drawback of \cite{FengY} pointed out above, but inherits the advantages of quantum predicates as observables in \cite{DP06, Ying11}. To this end, we first return to the view of cq-state as a pair $(\sigma,\rho)$ of a classical state $\sigma$ and a quantum state $\rho$ adopted in \cite{CMS, Kaku09}. But different from \cite{CMS, Kaku09} as well as \cite{FengY}, we define an assertion for a quantum program with classical variables (a \textit{cq-assertion} for short) as a pair $(\varphi,A)$ of a classical first-order logical formula $\varphi$ and a quantum predicate (i.e. an observable) $A$. It should be pointed out that the classical part $\sigma$ and $\varphi$ and the quantum part $\rho$ and $A$ are not independent, as will be explained shortly. Intuitively, $\sigma$ and $\rho$ are the states of classical and quantum registers, and $\varphi$ and $A$ describes the properties of classical and quantum registers, respectively.    
Obviously, this is the most nature view of cq-states and cq-assertions. \begin{enumerate}
\item  \textbf{\textit{More practical proof system}}: Why has no QHL based on the above view of cq-assertions been given in the previous literature? The main reason seems that Hoare triples with such cq-assertions cannot properly cope with the probability distributions arisen from quantum measurements. Indeed, exogenous quantum logic employed in \cite{CMS}, probabilistic assertions used in \cite{Kaku09} and defining cq-predicates as mappings from classical states to quantum predicates \cite{FengY} are all for handling these probability distributions. In this paper, however, we find a novel (and simple) proof rule for reasoning about quantum measurements (see (Axiom-Meas) in Table \ref{proof-system}). Incorporating it with the proof rules for other program constructs, we are able to build a QHL for quantum programs with classical variables that is much more convenient and easier to use in practical applications than the previous work for the same purpose. 
 
\item \textbf{\textit{Syntax of quantum predicates}}: Quantum predicates in \cite{DP06, Ying11, FengY} and other research along this line are defined as observables. Inherited from physics, an observable is modelled as a Hermitian operator on the Hilbert space of the system under consideration, which is essentially an semantic entity from the logical point of view. For example, a quantum predicate $A$ for $n$ qubits is a $2^n\times 2^n$ matrix. Its dimension exponentially increases with the number $n$ of qubits. This makes verification of large quantum programs based on QHL impractical because a huge amount of matrix calculations will be unavoidable. The author of \cite{Ying22} proposed an expansion of Birkhoff-von Neumann quantum logic as the assertion language of QHL for purely quantum programs, in which quantum predicates can be symbolically and compactly represented; in particular, by introducing their syntax, each quantum predicate $A$ can be constructed from atomic ones using logical connectives in a way similar to that in classical first-order logic (also see Sections 7.5 and 7.6 of \cite{Ying24}).  
In this paper,  the syntax of quantum predicates defined in \cite{Ying22, Ying24} is tailored and further expanded for specifying and reasoning about quantum programs with classical variables. Thus, in a cq-assertion $(\varphi,A)$, $\varphi$ is a classical first-order logical formula, and $A$ can be written as a quantum predicate formula, which is usually more compact and economic than a large matrix representation.    

\item \textbf{\textit{Stronger expressive power of programming language}}: The quantum programming languages on which Hoare-style proof systems with classical variables were established in the previous literature (e.g. \cite{CMS, Kaku09, FengY}) are relatively simple. In particular, the only connection between their classical and quantum parts is a command of the form $x:=M[\overline{q}]$ in which the outcome of a measurement $M$ performed on quantum register $\overline{q}$ is stored in a classical variable $x$. In this paper, quantum arrays and  parameterised quantum gates are introduced into our programming language. Then there will be more connections from the classical part to the quantum part: classical terms (and thus classical variables) are used as the subscripts of quantum array variables, and quantum gates are parameterised by classical terms. Consequently, the framework of our QHL must be meticulously designed to effectively accommodate quantum arrays and parameterised quantum gates. Fortunately, the cq-assertions discussed earlier provide an appropriate logical tool for addressing this issue.
 \end{enumerate}

In summary, we expect that the above contributions can significantly improve the applicability of QHL in the practical verification of quantum programs with classical variables.  

{\vskip 3pt}

\textbf{Organisation of the paper}: The exposition of our QHL with classical variables is divided into two parts, with this paper serving as the first instalment. Here, we establish the framework and introduce the proof rules of QHL, as well as demonstrate the soundness of the logic. However, we defer the discussion of (relative) completeness and its proof to a companion paper \cite{Ying24+}, as these topics require more extensive theoretical treatment and analysis. This division allows readers primarily interested in the practical applications of QHL to focus on this paper without needing to engage with the more theoretical aspects presented in the second part.

This paper is organised as follows. Quantum arrays are defined in Section \ref{sec-arrays}. We introduce our quantum programming language with classical variables and parameterised quantum gates in Section \ref{sec-lang}. The language of cq-assertions and its semantics are defined in Section \ref{sec-assert}. The axioms and proof rules of our QHL for quantum programs with classical variables are presented in Section \ref{sec-logic}. The soundness theorem of the QHL is stated in Section \ref{section-sound}. A brief conclusion is drawn in Section \ref{sec-concl}. Since the main aim of this paper is to introduce the conceptual framework of QHL, all proofs are deferred to the Appendix. 
 
\section{Quantum Arrays and Subscripted Quantum Variables}\label{sec-arrays}

Quantum arrays have already been introduced in several quantum programming languages (e.g. OpenQASM \cite{openqsam} and Q\# \cite{qsharp}). In this section, we recall from \cite{YingZ23, YingZ24} the formal definitions of quantum arrays and subscripted quantum variables, which will be needed in defining our programming language in the next section.    

\subsection{Quantum types} As a basis, let us first define the notion of quantum type. Recall that a basic classical type $T$ denotes an intended set of values. Similarly, a basic quantum type $\hs$ denotes an intended Hilbert space. It will be considered as the state space of a simple quantum variable. Throughout this paper, we adopt the following notions:\begin{itemize}\item \textit{Tensor product types}: if $\hs_1,...,\hs_n$ $(n>1)$ are basic quantum types, then their tensor product $\hs_1\otimes ... \otimes\hs_n$ is a quantum type.
\item \textit{Higher quantum types}: $T_1\times ...\times T_n\rightarrow\hs$, 
where $T_1,...,T_n$ are basic classical types, and $\hs$ is a basic quantum type. Mathematically, this type denotes the following tensor power of Hilbert space $\hs$ (i.e. tensor product of multiple copies of $\hs$):
\begin{equation}\label{h-space}\hs^{\otimes (T_1\times ...\times T_n)}=\bigotimes_{v_1\in T_1,...,v_n\in T_n}\hs_{v_1,...,v_n}\end{equation} where $\hs_{v_1,...,v_n}=\hs$ for all $v_1\in T_1,...,v_n\in T_n$. 
\end{itemize}

Intuitively, if we have $n$ quantum systems $q_1,...,q_n$ and their state spaces are $\hs_1,...,\hs_n$, respectively, then according to the basic postulates of quantum mechanics, the composite system of $q_1,...,q_n$ has $\hs_1\otimes ...\otimes\hs_n$ as its state space. If $\hs$ is the state space of a quantum system $q$, then the Hilbert space (\ref{h-space}) is the state space of the composite system consisting of those quantum systems indexed by $(v_1,...,v_n)\in T_1\times ...\times T_n$, each of which has the same state space as $q$. 
There is a notable basic difference between  the product of classical types and the tensor product of quantum types: there are not only product states of the form $|\psi_1\rangle ...|\psi_n\rangle=|\psi_1\rangle\otimes ...\otimes |\psi_n\rangle$ in $\hs_1\otimes ...\otimes\hs_n$ with $|\psi_i\rangle\in\hs_i$ $(1\leq i\leq n)$ but also entangled states that cannot be written as a tensor product of respective states in $\hs_i$'s. A similar difference between higher classical types and quantum ones is that some states in the space (\ref{h-space}) are entangled between the quantum systems indexed by different tuples $(v_1,...,v_n)\in T_1\times ...\times T_n$. It should be noted that (\ref{h-space}) may be an infinite tensor product of Hilbert spaces \cite{vN39} when one of $T_1,...,T_n$ is infinite (e.g., the type $\mathbf{integer}\rightarrow\hs_2$). However, most applications only need finite tensor products, as shown in the following:

\begin{exam}\label{exam-array} Let $\hs_2$ be the qubit type denoting the $2$-dimensional Hilbert space. If $q$ is a qubit array of type $\mathbf{integer}\rightarrow\hs_2$, where $\mathbf{integer}$ is the (classical) integer type, then for any two integers $k\leq l$, section $q[k:l]$ stands for the restriction of $q$ to the interval $[k:l]=\{{\rm integer}\ i|k\leq i\leq l\}.$ The state space of this section is $\hs_2^{\otimes (l-k+1)}$. In particular, if $k=l$, then $q[k:l]$ actually denotes single qubit $q[k]$. Furthermore,  
the GHZ (Greenberger–Horne–Zeilinger) state $$\frac{\bigotimes_{i=k}^l|0\rangle_i+\bigotimes_{i=k}^l|1\rangle_i}{\sqrt{2}}$$ is an entangled state of the qubits labelled $k$ through $l$.\end{exam}

\begin{exam}\label{QFT-data} Let us consider the data structure used in quantum Fourier transform on $n$ qubits. We introduce a classical bit array $j$ of type $\mathbf{integer}\rightarrow\mathbf{Boolean}$. For any integers $k,l$ with $1\leq k\leq l\leq n$, we use $0.j[k:l]$ to denote the binary representation $$0.j_k ...j_l=\sum_{r=k}^l j[r]\cdot 2^{k-r-1}.$$ Then we can define quantum arrays $|(j,k:l)\rangle$ and $|\mathit{QFT}(j,k:l)\rangle$ as states of qubits $q[k:l]$ by induction on the length $k-l$ of the arrays:  
\begin{align*}&\begin{cases}|(j,k:l)\rangle=|j[l]\rangle\ \mbox{if}\ k=l,\\ |(j,k:r+1)\rangle=|(j,k:r)\rangle\otimes|j[r+1]\rangle\ \mbox{for}\ k\leq r\leq l-1;\end{cases}\\ 
&\begin{cases}|\mathit{QFT}(j,k:l)\rangle=\frac{1}{\sqrt{2}}\left(|0\rangle+e^{2\pi i 0.j[l]}|1\rangle\right)\ \mbox{if}\ k=l,\\ 
|\mathit{QFT}(j,k:r+1)\rangle=|\mathit{QFT}(j,k:r)\rangle\otimes\frac{1}{\sqrt{2}}\left(|0\rangle+e^{2\pi i 0.j[l-r:l]}|1\rangle\right)\ \mbox{for}\ k\leq r\leq l-1. 
\end{cases}
\end{align*} By equation (5.4) in textbook \cite{NC00}, the quantum Fourier transform can be rewritten as: 
$$|(j,1:n)\rangle\mapsto |\mathit{QFT}(j,1: n)\rangle.$$
\end{exam}

\subsection{Quantum variables}\label{sec-qvar}

Only simple quantum variables are allowed in previous QHL (e.g. \cite{Ying11, FengY}). In this paper, however, we will use two sorts of quantum variables:\begin{itemize}\item simple quantum variables, of a basic quantum type, say $\hs$; \item array quantum variables, of a higher quantum type, say $T_1\times ...\times T_n\rightarrow\hs$.  
\end{itemize}
Furthermore, we introduce:
\begin{defn}\label{def-subscripted}Let $q$ be an array quantum variable of the type $T_1\times ...\times T_n\rightarrow\hs$, and for each $1\leq i\leq n$, let $s_i$ be a classical expression of type $T_i$. Then $q[s_1,...,s_n]$ is called a subscripted quantum variable of type $\hs$.
\end{defn}

Intuitively, array variable $q$ denotes a quantum system composed of subsystems indexed by tuples $(v_1,...,v_n)\in T_1\times ...\times T_n$. Thus, whenever expression $s_i$ is evaluated to a value $v_i\in T_i$ for each $i$, then $q[s_1,...,s_n]$ indicates the system of index tuple $(v_1,...,v_n)$. For example, let $q$ be a qubit array of type $\mathbf{integer}\times\mathbf{integer}\rightarrow\hs_2$. Then $q[2x+y,7-3y]$ is a subscripted qubit variable; in particular, if $x=5$ and $y=-1$ in the current classical state, then it stands for the qubit $q[9,10]$; i.e. the qubit in the memory cell with coordinates $(9,7)$. To simplify the presentation, we often treat a simple variable $q$ as a subscripted variable $q[s_1,...,s_n]$ with $n=0$. 

It is usually required that quantum operations are performed on a tuple of distinct quantum variables; for example, a controlled-NOT gate $\mathit{CNOT}[q_1,q_2]$ is meaningful only when $q_1\neq q_2$. To describe this kind of conditions, 
let $\overline{q} = q_1[s_{11},...,s_{1n_1}],...,q_k[s_{k1},...,s_{kn_k}]$ be a sequence of simple or subscripted quantum variables. Then we define the distinctness formula $\mathit{Dist}(\overline{q})$ as the following first-order logical formula:
\begin{equation}\label{distinctness}\mathit{Dist}(\overline{q})=(\forall i,j\leq k)\left[(q_i=q_j)\rightarrow (\exists l\leq n_i)\left(s_{il}\neq s_{jl}\right)\right].\end{equation}
Thus, for a given classical structure as an interpretation of the first-order logic and a classical state $\sigma$ in it, the satisfaction relation: $$\sigma\models\mathit{Dist}(q_1[t_{11},...,s_{1n_1}],...,q_k[s_{k1},...,s_{kn_k}])$$ means that in the classical state $\sigma$, $q_1[s_{11},...,s_{1n_1}],$ $...,q_k[s_{k1},...,s_{kn_k}]$ denotes $k$ distinct quantum systems. This condition will be frequently used in this paper.  

\section{A Quantum Programming Language with Classical Variables}\label{sec-lang}

In this section, we define a language $\mathit{qWhile}^+$ for quantum programming with classical variables. It is a further extension of the purely quantum variant $\mathit{qWhile}$ (see for example, Chapter 5 of \cite{Ying24}) of the classical $\mathit{While}$-language (see for example, Chapter 3 of \cite{Apt09}). In particular, $\mathit{qWhile}^+$ allows to use quantum arrays and parameterised quantum gates. 

\subsection{Parameterised Quantum Gates}\label{sec-gate}

Various quantum gates with certain parameters are implemented in quantum hardware. In addition, parameterised quantum gates and circuits are widely used in variational quantum algorithms \cite{Varia} and quantum machine learning \cite{QML}. Consequently, almost all quantum programming languages support parameterised quantum gates. In our language $\mathit{qWhile}^+$, we assume a set $\mathcal{U}$ of basic parameterised quantum gate symbols. Each gate symbol $U\in\mathcal{U}$ is equipped with: \begin{enumerate}\item[(i)] a classical type of the form $T_1\times ...\times T_m$, called the parameter type, where $m\geq 0$; and \item[(ii)] a quantum type of the form $\hs_1\otimes ...\otimes\hs_n$, where $n\geq 1$ is called the arity of $U$.\end{enumerate} 
In the case of $m=0$, $U$ is a non-parameterised gate symbol.  
If for each $k\leq m$, $t_k$ is a classical expression of type $T_k$, and for each $i\leq n$, $q_i$ is a simple or subscripted quantum variable with type $\hs_i$, then $$U(t_1,...,t_m)[q_1,...,q_n]$$ is called a quantum gate. Intuitively, it means that the quantum gate $U$ with parameters $t_1,...,t_m$ acts on quantum variables $q_1,...,q_n$. The following are two examples of parameterised basic quantum gates:
\begin{exam} 
\begin{enumerate}\item The rotation about $x$-axis is defined as 
\begin{align*}R_x(\theta)=\exp(-iX\theta/2)=\left(\begin{array}{cc}\cos\frac{\theta}{2} & -i\sin\frac{\theta}{2}\\ -i\sin\frac{\theta}{2} & \cos\frac{\theta}{2}\end{array}\right)
\end{align*} where parameter $\theta$ denotes the rotation angle. Let $q$ be a qubit variable. Then $R_x(3\theta+\frac{\pi}{2})[q]$ stands for the rotation of angle $3\theta+\frac{\pi}{2}$ about $x$ applied to qubit $q$.  
\item The $XX$ interaction is a two-qubit interaction defined as 
\begin{align*}R_{xx}(\theta)=\exp(-i\frac{\theta}{2}X\otimes X)=\left(\begin{array}{cccc}\cos\frac{\theta}{2} & & & -i\sin\frac{\theta}{2}\\  &\cos\frac{\theta}{2} & -i\sin\frac{\theta}{2} & \\  & -i\sin\frac{\theta}{2} &\cos\frac{\theta}{2} &\\ -i\sin\frac{\theta}{2} & & &\cos\frac{\theta}{2} \end{array}\right)
\end{align*} where parameter $\theta$ is an index of entangling. Let $q$ be a qubit array. Then $R_{xx}(\theta)[q[m+3],q[2m-7]]$ stands for the $XX$ interaction between qubits $q[m+3]$ and $q[2m-7]$.  
\end{enumerate} 
\end{exam}

\subsection{Quantum Measurements}\label{sec-measure}

In the language $\mathit{qWhile}^+$, we also assume a set $\mathcal{M}$ of basic quantum measurement symbols. Each measurement symbol $M\in\mathcal{M}$ is equipped with: \begin{enumerate}\item[(i)] a quantum type of the form $\hs_1\otimes...\otimes\hs_n$, where $n\geq 1$ is called the arity of $M$; and \item[(ii)] a classical type $T$, called the outcome type of $M$.\end{enumerate} If for each $i\leq n$, $q_i$ is a quantum variable with type $\hs_i$ and $x$ a classical variable with type $T$, then we will use $$x: =M[q_1,...,q_n]$$ to express the basic command that the measurement denoted by symbol $M$ is performed on quantum variables $q_1,...,q_n$, and its outcome is stored in classical variable $x$.  

\begin{rem}Indeed, we can also introduce parameterized measurements. We choose not to do it in this paper because it will make the notations too complicated. At this moment, various quantum algorithms mainly use the measurements in the computational basis, and parameterized measurements are rarely employed in practical applications. But when needed in the future, the parameterization of measurements can be handled in a way similar to the parameterization of quantum gates defined in the above subsection.    
\end{rem}

\subsection{Syntax of $\mathit{qWhile}^+$}\label{sec-qwhile}

Now we are ready to define our quantum programming language $\mathit{qWhile}^+$. Essentially, it is the classical while-language expanded by quantum gates and quantum measurement commands introduced in the previous subsections. 
The alphabet of $\mathit{qWhile}^+$ consists of: \begin{enumerate}\item[(i)] the alphabet of a classical while-language (see for example Chapter 3 of \cite{Apt09} and Section 3.2 of \cite{LSie});
\item[(ii)] a countably infinite set $\mathcal{QV}$ of (simple and array) quantum variables (as described in Subsection \ref{sec-qvar}); \item[(iii)] a set $\mathcal{U}$ of basic (parameterized) quantum gate symbols (as described in Subsections \ref{sec-gate}); and \item[(iv)] a set $\mathcal{M}$ of basic quantum measurement symbols (as described in Subsection \ref{sec-measure}).\end{enumerate}
 The language $\mathit{qWhile}^+$ is built upon the classical while-language given in (i). For simplicity of the presentation, we omit here the standard syntax of classical expressions and programs; the reader can consult textbook \cite{Apt09, LSie}. Then we can introduce:
\begin{defn} The syntax of $\mathit{qWhile}^+$ is given as follows: 
\begin{equation}\label{syntax}\begin{split}P::=\ & \mathbf{skip}\ |\ x:=e\ |\ q:=|0\rangle\\ &|\ U(t_1,...,t_m)[q_1,...,q_n]\ |\ x:=M[q_1,...,q_n]\\ &|\ P_1;P_2\ |\ 
\mathbf{if}\ b\ \mathbf{then}\ P_1\ \mathbf{else}\ P_0\ |\ \mathbf{while}\ b\ \mathbf{do}\ P
\end{split}\end{equation} where $x$ is a classical variable, $e$ and $t_1,...,t_m$ are classical expressions, $q, q_1,...,q_n$ are simple or subscripted quantum variables (see their definitions at the beginning of Subsection \ref{sec-qvar} and in Definition \ref{def-subscripted}), $U\in\mathcal{U}$, $M\in\mathcal{M}$, and $b$ is a classical boolean expression. \end{defn}

A simpler version of the language $\mathit{qWhile}^+$ was already introduced in the previous literature (see for example \cite{YingF, Deng, FengY}), but here we add parameterised gate $U(t_1,...,t_n)$ as well as quantum array into it so that $q,q_1,...,q_n$ in (\ref{syntax}) can be (subscripted) array variables. On the other hand, as mentioned in Section \ref{sec-arrays}, quantum arrays have been already introduced in previous quantum programming languages (e.g. OpenQASM \cite{openqsam} and Q\# \cite{qsharp}). Parameterised quantum gates are also supported by previous quantum programming languages, and they are recently generalised to differential quantum programming \cite{Xiaodi, Wang}. 

Intuitively, $q:=|0\rangle$ means that quantum variable $q$ is initialised in a basis state $|0\rangle$. The quantum gate $U(t_1,...,t_m)[q_1,...,q_n]$ and measurement command $x:=M[q_1,...,q_n]$ were already explained in Subsections \ref{sec-gate} and \ref{sec-measure}, respectively. The meanings of other commands are the same as in a classical while-language. It is worth examining the interface between the classical and quantum parts of $\mathit{qWhile}^+$. In the quantum to classical direction, the outcome of a quantum measurement $M[q_1,...,q_n]$ is stored in a classical variable $x$, which can appear in a classical expression $e$ and thus participate in a classical computation, or in a classical boolean expression $b$ and thus is used in the control of subsequent computation. In the classical to quantum direction, a classical boolean expression $b$ may be used in determining the control flow of quantum computation. In addition, classical variables may appear in the subscripts of some quantum array variables in $U(t_1,...,t_m)[q_1,...,q_n]$ or $M[q_1,...,q_n]$ (see Definition \ref{def-subscripted}) and thus determine the locations that quantum gate $U$ or measurement $M$ performs on.  

\begin{exam}\label{QFT-program} Let us consider an efficient circuit for the quantum Fourier transform on $n$ qubits $q_1,...,q_n$ presented in Figure 5.1 of textbook \cite{NC00}. It can be equivalently written as the program in Table \ref{QFT-table}, 
\begin{table*}[t]
\begin{align*}\mathit{QFT}[q_1,...,q_n]::=\ &H[q_1]; C(R_2)[q_2,q_1];C(R_3)[q_3,q_1];...;C(R_{n-1})[q_{n-1},q_1];C(R_n)[q_n,q_1];\\
&H[q_2];C(R_2)[q_3,q_2];C(R_3)[q_4,q_2];...;C(R_{n-1})[q_n,q_2];\\
&\qquad\qquad\qquad\qquad ..............................\\
&H[q_{n-1}];C(R_2)[q_n,q_{n-1}];\\
&H[q_n]; \\
&\mathit{Reverse}[q_1,...,q_n]
\end{align*}
\caption{A quantum circuit for quantum Fourier transform.}\label{QFT-table}
\end{table*}
where $C(R_l)[q_i,q_j]$ stands for the controlled-$R_l$ with $q_i$ as its control qubit and $q_j$ as its target qubit, $R_l$ denotes the single qubit gate (parameterised by classical integer variable $l$): $$R_l=\left(\begin{array}{cc} 1 & 0\\ 0 & e^{2\pi i/2^l}\end{array}\right)$$ 
 and $\mathit{Reverse}[q_1,...,q_n]$ is the quantum gate that reverses the order of qubits $q_1,...,q_n$. As in Example \ref{exam-array}, we introduce a qubit array $q$ of type $\mathbf{integer}\rightarrow\hs_2$. Then the quantum circuit in Table \ref{QFT-table} can be defined by induction on the number $n-m$ of qubits:\begin{align*}&\mathit{QFT}[q[1:n]]::=\mathit{QFT}^\ast[q[1:n]];\mathit{Reverse}[q[1:n]],\\ 
 &\mathit{QFT}^\ast[q[m:n]]::= \mathbf{if}\ m=n\ \mathbf{then}\ H[q[n]]\\ &\qquad\qquad\qquad\qquad\qquad\quad\ \ \ \mathbf{else}\ H[Q[m]];\mathit{CR}[q[m:n]];\mathit{QFT}^\ast[q[m+1:n]],\\
 &\mathit{CR}[q[m:n]]::=\mathbf{if}\ m=n\ \mathbf{then\ skip}\\ &\qquad\qquad\qquad\qquad\qquad\ \ \ \mathbf{else}\ \mathit{CR}[q[m:n-1]];C(R_{n-m+1})[q_n,q_m].
 \end{align*}
\end{exam}

The readers can find many more quantum program examples using quantum arrays in the previous literature (for example \cite{openqsam, qsharp}). Also, variational quantum algorithms \cite{Varia} and quantum machine learning algorithms \cite{QML} can be conveniently written as quantum programs with parameterised quantum gates as defined above.

\subsection{Semantics of $\mathit{qWhile}^+$}\label{sec-sem}

\subsubsection{Interpretation of programming language} As pointed out at the beginning of Subsection \ref{sec-qwhile}, $\mathit{qWhile}^+$ is obtained by expanding the classical while-language. Thus, the semantics of quantum programming language $\mathit{qWhile}^+$ is defined in a given classical structure as the interpretation of its classical part (i.e. the while-language) together with a quantum structure, as described in the following definition. For simplicity of presentation, we omit the definition of a classical structure, which is standard (for example, see Chapter 3 of \cite{Apt09} and Section 3.2 of \cite{LSie}). 
\begin{defn}A quantum structure consists of: 
\begin{enumerate}\item each quantum (simple or array) variable is assigned a quantum type $\hs_q$. Thus, the Hilbert space of all quantum variables is the tensor product: \begin{equation}\label{def-all}\hs_{\mbox{all}}=\bigotimes_{q\in\mathcal{QV}}\hs_q.\end{equation} \item each symbol $U\in\mathcal{U}$ with parameter type $T_1\times ...\times T_m$ and quantum type $\hs_1\otimes ...\otimes\hs_n$ is interpreted as a family of unitary operators, $$\left\{U(v_1,...,v_m)\right\}_{(v_1,...,v_m)\in T_1\times ...\times T_m},$$ on $\hs_1\otimes ...\otimes\hs_n$. 
Equivalently, parameterised gate symbol $U$ is interpreted as a mapping:
\begin{align*}U:\ &T_1\times ...\times T_m\rightarrow \mathbb{U}\left(\hs_1\otimes...\otimes\hs_n\right).\\ &(v_1,...,v_m)\mapsto U(v_1,...,v_m).\end{align*}
Here, $\mathbb{U}\left(\hs_1\otimes...\otimes\hs_n\right)$ stands for the group of unitary operators on $\hs_1\otimes...\otimes\hs_n$. For simplicity of presentation, we slightly abuse the notation so that the interpretation of $U$ is also written as $U$. Thus, $U$ satisfies the unitarity: $$U(v_1,...,v_m)^\dag U(v_1,...,v_m)=U(v_1,...,v_m)U(v_1,...,v_m)^\dag =I$$ for any $(v_1,...,v_m)\in T_1\times...\times T_m$, where $I$ is the identity operator, and $^\dag$ stands for the adjoint of an operator.  
\item each symbol $M\in\mathcal{M}$ with quantum type $\hs_1\otimes ...\otimes\hs_n$ and classical type $T$ is interpreted as a quantum measurement on $\hs_1\otimes ...\otimes\hs_n$, which is also denoted as $M$ for simplicity; that is, $$M=\{M_m\}_{m\in T},$$  and all $M_m$ are operators on $\hs_1\otimes ...\otimes\hs_n$ such that $\sum_{m\in T} M_m^\dag M_m=I$ (the identity operator). Obviously, the classical type $T$ of $M$ denotes the set of possible measurement outcomes $m$. In this paper, we always assume that the classical (outcome) type $T$ of any measurement $M$ is finite.  
\end{enumerate}\end{defn}

\subsubsection{Operational semantics} Now we assume that a pair of classical structure and quantum structure is given as an interpretation of language $\mathit{qWhile}^+$. We define a \textit{classical-quantum state} (\textit{cq-state} or \textit{state} for short) as a pair $(\sigma,\rho)$, where $\sigma$ is a classical state (i.e. a state of classical variables defined in the classical structure, as a mapping from classical variables to the domain of the classical structure), and $\rho$ is a partial density operator on $\hs_\mathit{all}$, i.e. a positive operator with trace $\tr(\rho)\leq 1$ (following Selinger \cite{Selinger}), that is used to describe the state of quantum variables. The set of all cq-states is denoted as $\Omega$. Then a configuration is defined as a triple $(P,\sigma,\rho)$, where $P$ is a quantum program written in $\mathit{qWhile}^+$ or termination symbol $\downarrow$,  and $(\sigma,\rho)\in\Omega$ is a cq-state. Here, $\rho$ denotes the state of quantum variables in the sense that the quantum variables are in a (possibly mixed) state $\rho/\tr(\rho)$ with probability $\tr(\rho)$, and the probability $\tr(\rho)$ comes from the measurements occurring in the previous computation. Thus, we have:
\begin{defn}\label{def-operational} The operational semantics of $\mathit{qWhile}^+$ is given as the transition relation $\rightarrow$ between configurations defined by the transition rules in Table \ref{trans-rules}. 
\begin{table*}[t]
\begin{equation*}\begin{split}&(\mbox{Ski})\ \ \ (\mathbf{skip},\sigma,\rho)\rightarrow (\downarrow,\sigma,\rho)\qquad\qquad (\mbox{Ass})\ \ (x:=e,\sigma,\rho)\rightarrow (\downarrow,\sigma[x:=\sigma(e)],\rho)\\ 
&(\mbox{Init})\ \ {\rm If}\ B=\{|n\rangle\}\ {\rm is\ an\ orthonormal\ basis\ of}\ \hs_q\ {\rm and}\ |0\rangle\ {\rm is\ a\ basis\ state\ in}\ B,\ {\rm then:}\\  
&\qquad\qquad (q:=|0\rangle, \sigma,\rho)\rightarrow \left(\downarrow,\sigma,\sum_n|0\rangle_{\sigma(q)}\langle n|\rho|n\rangle_{\sigma(q)}\langle 0|\right)\\ 
& 
(\mbox{Uni})\ \ \ 
\frac{\sigma\models\mathit{Dist}(q_1[s_{11},...,s_{1n_1}],...,q_k[s_{k1},...,s_{kn_k}])}{\left(U(t_1,...,t_m)[q_1[s_{11},...,s_{1n_1}],...,q_k[s_{k1},...,s_{kn_k}]],\sigma, \rho\right)\rightarrow \left(
\downarrow, \sigma, U_\sigma\rho U_\sigma^\dag\right)}\\ 
&(\mbox{Meas})\ \ {\rm If\ measurement\ symbol}\ M\ {\rm is\ interpreted\ as}\ M=\{M_m\}\ {\rm (in\ the\ given}\\ &\qquad\ \ \ \ \ \ {\rm quantum\ structure), then\ for\ every\ possible\ outcome}\ m:\\ 
&\qquad\qquad \frac{\sigma\models\mathit{Dist}(q_1[s_{11},...,s_{1n_1}],...,q_k[s_{k1},...,s_{kn_k}])}{(x:=M[q_1[s_{11},...,s_{1n_1}],...,q_k[s_{k1},...,s_{kn_k}]],\sigma, \rho)\rightarrow \left(
\downarrow, \sigma[x:=m], M_m^\sigma\rho (M_m^\sigma)^\dag\right)}\\ 
&(\mbox{Seq})\ \ \ \ \frac{(P_1,\sigma,\rho)\rightarrow (
P_1^{\prime},\sigma^{\prime},\rho^\prime)}{(
P_1;P_2,\sigma,\rho)\rightarrow (P_1^{\prime};P_2,\sigma^\prime,\rho^\prime)}\\
&(\mbox{Cond)}\ \frac{\sigma\models b}{(\mathbf{if}\ b\ \mathbf{then}\ P_1\ \mathbf{else}\ P_0,\sigma,\rho)\rightarrow (P_1,\sigma,\rho)}\qquad \frac{\sigma\models \neg b}{(\mathbf{if}\ b\ \mathbf{then}\ P_1\ \mathbf{else}\ P_0,\sigma,\rho)\rightarrow (P_0,\sigma,\rho)}\\
&(\mbox{Loop})\ \ \frac{\sigma\models b}{(\mathbf{while}\ b\ \mathbf{do}\ P,\sigma,\rho)\rightarrow (P;\mathbf{while}\ b\ \mathbf{do}\ P,\sigma,\rho)}\qquad \frac{\sigma\models\neg b}{(\mathbf{while}\ b\ \mathbf{do}\ P,\sigma,\rho)\rightarrow (\downarrow,\sigma,\rho)}
\end{split}\end{equation*}
\caption{Transition Rules.}\label{trans-rules}
\end{table*}
\end{defn}

The operational semantics defined above is a natural quantum extension of the operational semantics of classical while-language. It is the same as that for quantum programs with classical variables given in \cite{YingF} except that here we need to carefully deal with quantum arrays and parameterised quantum gates, but \cite{YingF} does not. However, it is different from those defined in \cite{Deng, FengY} due to the treatment of cq-states (as explained in Section \ref{sec-intro}).
Some notations used in Figure \ref{trans-rules} are explained as follows:
\begin{itemize}\item In the transition rule (Ass), $\sigma(e)$ denotes the value of classical expression $e$ in the classical state $\sigma$ (in the given classical structure), and $\sigma[x:=\sigma(e)]$ stands for the classical state in which the value of $x$ is $\sigma(e)$ but the values of other classical variables are the same as in $\sigma$. 
\item In the rule (Init), if $q$ is a simple quantum variable, then $\sigma(q)=q$, and if $q$ is a subscripted quantum variable $q^{\prime}[s_1,...,s_n]$, then $\sigma(q)$ denotes the quantum system $q^{\prime}[\sigma(s_1),...,\sigma(s_n)]$. 
\item The distinctness formula $\mathit{Dist}(\overline{q})$ in the premises of the rules (Uni) and (Meas) means that $\overline{q}$ is a string of distinct (simple or subscripted) quantum variables (see equation (\ref{distinctness}) for its precise definition). 
This condition is introduced to guarantee that the quantum gates and measurements under consideration are well-defined. For example, let $q$ be an array of qubits. Then the controlled-NOT gate $\mathit{CNOT}[q[2k+1],q[4k-3]]$ is meaningful in a classical state $\sigma$ if and only if $\sigma(k)\neq 2$.    
\item In the conclusion of the rules (Uni) and (Meas), it should be noted that in the left hand side of transition $\rightarrow$, $U\in\mathcal{U}$ and $M\in\mathcal{M}$ are a gate symbol and a measurement symbol, respectively, but in the right hand side of $\rightarrow$, $U$ and $M$ stands for the interpretation of symbols $U$ and $M$ (in the given quantum structure), respectively. Moreover, $U_\sigma$ means the unitary operator $U(\sigma(t_1),...,\sigma(t_m))$ acting on quantum systems $q_i[\sigma(s_{i1}),...,\sigma(s_{in_i})]$ $(i=1,...,k)$, and $M_m^\sigma$ means the measurement operator $M_m$ acting on quantum systems $q_i[\sigma(s_{i1}),...,\sigma(s_{in_i})]$ $(i=1,...,k)$. 
\end{itemize}

\subsubsection{Denotational semantics} Building upon the above operational semantics, we can introduce the denotational semantics of quantum programs with classical variables:  

\begin{defn}\label{def-denotational}Let $P$ be a quantum program, and let $(\sigma,\rho)\in\Omega$ be a cq-state. Then with input $(\sigma,\rho)$, the output of $P$ is defined to be the multi-set of cq-states in $\Omega$:
\begin{equation}\llbracket P\rrbracket(\sigma,\rho)=\left\{|(\sigma^\prime,\rho^\prime) | (P,\sigma,\rho)\rightarrow^\ast(\downarrow,\sigma^\prime,\rho^\prime)|\right\}
\end{equation} where $\rightarrow^\ast$ stands for the reflexive and transitive closure of transition relation $\rightarrow$, and $\{|\cdot|\}$ denotes a multi-set. \end{defn}

Therefore, the denotational semantics of a quantum program $P$ is defined as a mapping from cq-states $\Omega$ to multi-sets of cq-states in $\Omega$: $$\llbracket P\rrbracket:(\sigma,\rho)\mapsto \llbracket P\rrbracket(\sigma,\rho).$$
Furthermore, for any quantum program $P$, and for any states $(\sigma,\rho),(\sigma^\prime,\rho^\prime)\in\Omega$, we write:
\begin{equation}\label{eq-Right}(\sigma,\rho)\stackrel{P}{\Rightarrow} (\sigma^\prime,\rho^\prime)\end{equation} if and only if $(\sigma^\prime,\rho^\prime)\in\llbracket P\rrbracket(\sigma,\rho).$ The denotational semantics can also be understood in a slightly different way. Let $\Theta$ be a multi-set of cq-states in $\Omega$. Then for each classical state $\sigma$, we define:
\begin{equation}\label{eq-Theta}\Theta(\sigma)=\sum\{|\rho\ |\ (\sigma,\rho)\in\Theta|\}.\end{equation}
We further define the normalisation of $\Theta$ as the set $$N(\Theta)=\{(\sigma,\Theta(\sigma))\ |\ \Theta(\sigma)\neq 0\}.$$ Obviously, for any classical state $\sigma$, there exists at most one quantum state $\rho$ such that $(\sigma,\rho)\in N(\Theta).$ 
Now given a quantum program $P$, for each input state $(\sigma,\rho)\in\Omega$, and for each classical state $\sigma^\prime$, $\llbracket P\rrbracket(\sigma,\rho)(\sigma^\prime)$ is the summation of (partial) quantum states $\rho^\prime$ such that $(\sigma^\prime,\rho^\prime)$ is outputted by $P$ though different execution paths. Thus, the denotational semantics of $P$ can be thought of as a mapping from $\Omega$ to $2^\Omega$:
$$\llbracket P\rrbracket: (\sigma,\rho)\mapsto N(\llbracket P\rrbracket(\sigma,\rho)).$$ 

The following proposition gives an explicit representation of the denotational semantics of quantum programs in terms of their structures. It is often more convenient in applications than Definition \ref{def-denotational}. In particular, it will be used in the proofs of soundness and (relative) completeness of our QHL (quantum Hoare logic).    

\begin{prop}[Structural Representation of Denotational Semantics]\label{prop-structure}
For any input state $(\sigma,\rho)\in\Omega$, we have:
\begin{enumerate}\item $\llbracket\mathbf{skip}\rrbracket(\sigma,\rho)=\{|(\sigma,\rho)|\}.$
\item $\llbracket x:=e\rrbracket(\sigma,\rho)=\{|(\sigma[x:=\sigma(e)],\rho)|\}.$
\item $\llbracket q:=|0\rangle\rrbracket(\sigma,\rho)=\left\{|(\sigma,\sum_n|0\rangle_{\sigma(q)}\langle n|\rho|n\rangle_{\sigma(q)}\langle 0|)\right\}.$
\item $$\llbracket U(t_1,...,t_m)[\overline{q}]\rrbracket(\sigma,\rho)=\begin{cases}\left\{|(\sigma, U_\sigma\rho U_\sigma^\dag)|\right\}\ &{\rm if}\ \sigma\models \mathit{Dist}(\overline{q})\\ \emptyset\ &{\rm otherwise}\end{cases}$$ where $U_\sigma$ denotes quantum gate $U(\sigma(t_1),...,\sigma(t_m))$ acting on $\sigma(\overline{q})$.  
\item $$\llbracket x:=M[\overline{q}]\rrbracket(\sigma,\rho)=\begin{cases}\left\{|\left(\sigma[x:=m],M^\sigma_m\rho (M^\sigma_m)^\dag\right)|\right\}\ &{\rm if}\ \sigma\models \mathit{Dist}(\overline{q})\\ \emptyset\ &{\rm otherwise}\end{cases}$$ where $m$ ranges over all possible outcomes of the measurement $M$, and $M_m^\sigma$ denotes measurement operator $M_m$ acting on $\overline{q}$. 
\item $$\llbracket P_1;P_2\rrbracket(\sigma,\rho)=\llbracket P_2\rrbracket\left(\llbracket P_1\rrbracket(\sigma,\rho)\right)=\bigcup_{(\sigma^\prime,\rho^\prime)\in \llbracket P_1\rrbracket(\sigma,\rho)}\llbracket P_2\rrbracket(\sigma^\prime,\rho^\prime).$$
\item \begin{equation*}\llbracket\mathbf{if}\ b\ \mathbf{then}\ P_1\ \mathbf{else}\ P_0\rrbracket(\sigma,\rho)=\begin{cases}\llbracket P_1\rrbracket(\sigma,\rho)\ {\rm if}\ \sigma\models b,\\
\llbracket P_0\rrbracket(\sigma,\rho)\ {\rm if}\ \sigma\models \neg b. 
\end{cases}\end{equation*}
\item \begin{align*}\llbracket\mathbf{while}\ b\ \mathbf{do}\ P\rrbracket(\sigma,\rho)=\{| (\sigma^\prime,\rho^\prime)| &(\sigma,\rho)= (\sigma_0,\rho_0)\stackrel{P}{\Rightarrow}...\stackrel{P}{\Rightarrow}(\sigma_n,\rho_n)=(\sigma^\prime,\rho^\prime),\\ & n\geq 0, \sigma_i\models b\ (0\leq i<n)\ {\rm and}\ \sigma_n\models\neg b|\}
\end{align*} where the relation $\stackrel{P}{\Rightarrow}$ is defined as (\ref{eq-Right}). 
\end{enumerate}\end{prop}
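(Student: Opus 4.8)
The plan is to prove each of the eight clauses by unfolding Definition~\ref{def-denotational} and the transition rules in Table~\ref{trans-rules}. The first five clauses concern atomic commands; for these I would argue that the only possible transitions from the configuration $(P,\sigma,\rho)$ are exactly the ones prescribed by the corresponding transition rule, so the computation terminates in a single step (in the case of (Ski), (Ass), (Init), (Uni), (Meas)), and the multi-set $\llbracket P\rrbracket(\sigma,\rho)$ collects precisely the successor configurations produced. For $q:=|0\rangle$ this is (Init); for $U(t_1,\dots,t_m)[\overline q]$ one must additionally observe that rule (Uni) fires if and only if $\sigma\models\mathit{Dist}(\overline q)$, and when the premise fails the configuration is stuck, so the output multi-set is empty --- this gives the case split in clause~4, and clause~5 for measurements is identical except that (Meas) produces one successor per outcome $m$, all collected into the multi-set $\{|(\sigma[x:=m],M^\sigma_m\rho(M^\sigma_m)^\dag)|\}$ indexed over $m$. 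I would be careful here to note that $\sigma(q)$, $U_\sigma$ and $M^\sigma_m$ are exactly the objects named in the bullet points following Definition~\ref{def-operational}, so no new content is needed beyond matching notation.

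For the sequential composition clause~6, the key observation is rule (Seq): every terminating computation $(P_1;P_2,\sigma,\rho)\rightarrow^\ast(\downarrow,\sigma',\rho')$ factors uniquely as a (possibly empty) prefix of (Seq)-steps simulating $P_1$, reaching a configuration $(P_2,\sigma_1,\rho_1)$ with $(P_1,\sigma,\rho)\rightarrow^\ast(\downarrow,\sigma_1,\rho_1)$, followed by a terminating computation of $P_2$ from $(\sigma_1,\rho_1)$. This decomposition is a bijection on computation paths, so summing (as multi-sets) over all intermediate $(\sigma_1,\rho_1)\in\llbracket P_1\rrbracket(\sigma,\rho)$ and then over the outputs of $P_2$ yields the stated identity; I would state this factorisation as a small lemma (existence and uniqueness of the split point, namely the first configuration whose program component is $P_2$ rather than some $P_1';P_2$) and verify it by induction on the length of the computation. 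Clause~7 for the conditional is immediate from the two (Cond) rules: exactly one of $\sigma\models b$, $\sigma\models\neg b$ holds, and the single first step rewrites to $P_1$ or $P_0$ with the cq-state unchanged, so $\llbracket\mathbf{if}\ b\ \mathbf{then}\ P_1\ \mathbf{else}\ P_0\rrbracket(\sigma,\rho)$ equals $\llbracket P_1\rrbracket(\sigma,\rho)$ or $\llbracket P_0\rrbracket(\sigma,\rho)$ accordingly --- here one uses clause~6 lightly, or just re-runs the path-decomposition argument one step.

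The loop clause~8 is the main obstacle and needs the most care. Using the (Loop) rules, a terminating computation of $\mathbf{while}\ b\ \mathbf{do}\ P$ from $(\sigma,\rho)$ must, as long as the current classical state satisfies $b$, take a (Loop)-unfolding step to $P;\mathbf{while}\ b\ \mathbf{do}\ P$, then run $P$ to completion via (Seq)-steps, and repeat; it terminates only when the classical state satisfies $\neg b$, via the second (Loop) rule. I would make this precise by decomposing an arbitrary terminating path into ``rounds,'' each round consisting of one (Loop)-step plus a complete terminating run of $P$ (invoking the same split lemma used for clause~6, applied to $P;\mathbf{while}\ b\ \mathbf{do}\ P$), and observe that the number of rounds $n\geq 0$ is finite because the path is finite. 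Conversely, any finite sequence $(\sigma_0,\rho_0)\stackrel{P}{\Rightarrow}\cdots\stackrel{P}{\Rightarrow}(\sigma_n,\rho_n)$ with $\sigma_i\models b$ for $i<n$ and $\sigma_n\models\neg b$ assembles into a terminating computation of the loop, and this correspondence between paths is again a bijection, so it respects multi-set multiplicities. The subtlety to watch is that a single round's run of $P$ may itself branch (through measurements), so one is really reasoning about a tree of computations and must check that the multi-set union in the definition of $\stackrel{P}{\Rightarrow}$-chains exactly tracks these branches; the argument is a straightforward but bookkeeping-heavy induction on path length, using that $\rightarrow^\ast$ is the reflexive transitive closure and that at each configuration the set of enabled transitions is determined syntactically. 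Formally, one could also phrase clause~8 as the least fixed point unfolding $\llbracket\mathbf{while}\ b\ \mathbf{do}\ P\rrbracket = \llbracket\mathbf{if}\ b\ \mathbf{then}\ (P;\mathbf{while}\ b\ \mathbf{do}\ P)\ \mathbf{else}\ \mathbf{skip}\rrbracket$ and iterate, but the direct path-decomposition argument is cleaner here since the definition of $\llbracket\cdot\rrbracket$ is already operational.
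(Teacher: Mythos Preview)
Your proposal is correct and takes essentially the same approach as the paper, which simply says ``Routine by Definitions~\ref{def-operational} and~\ref{def-denotational}.'' Your plan is a faithful and careful elaboration of what that routine unfolding amounts to---in particular your path-decomposition lemma for (Seq) and the round-by-round analysis of the loop are exactly the standard arguments one would supply if asked to spell out the details, and nothing in them deviates from the paper's (implicit) line of reasoning.
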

\begin{proof} Routine by Definitions \ref{def-operational} and \ref{def-denotational}.
\end{proof}

The next proposition indicates that the trace of the state of quantum variables does not increase during the execution of a program. 
\begin{prop}\label{trace-decrease}For any quantum program $P$ and for any input state $(\sigma,\rho)\in\Omega$, we have:
\begin{align*}\sum\left\{|\tr(\rho^\prime)|(\sigma^\prime,\rho^\prime)\in\llbracket P\rrbracket(\sigma,\rho)|\right\}\leq\tr(\rho).
\end{align*}
\end{prop}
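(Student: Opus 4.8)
The plan is to prove Proposition~\ref{trace-decrease} by structural induction on the program $P$, using the structural representation of the denotational semantics given in Proposition~\ref{prop-structure}. For each program construct, the claim reduces to a statement about how the trace of the (partial) density operator transforms; in most cases this transformation is trace-preserving, and the only place where the trace can genuinely drop is the measurement command, where it is preserved only after summing over all outcomes.

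First I would dispatch the base cases. For $\mathbf{skip}$ and $x:=e$ the output multi-set is a singleton with the same $\rho$, so equality holds trivially. For $q:=|0\rangle$, the output is the single state $\sum_n |0\rangle_{\sigma(q)}\langle n|\rho|n\rangle_{\sigma(q)}\langle 0|$; its trace equals $\sum_n \langle n|\rho|n\rangle_{\sigma(q)}\cdot\langle 0|0\rangle = \tr(\rho)$ by cyclicity of trace and completeness of the basis $\{|n\rangle\}$, so again equality holds. For the unitary command $U(t_1,\dots,t_m)[\overline q]$, when $\sigma\models\mathit{Dist}(\overline q)$ the output is $\{|(\sigma, U_\sigma\rho U_\sigma^\dagger)|\}$ and $\tr(U_\sigma\rho U_\sigma^\dagger)=\tr(\rho)$ by unitarity and cyclicity; otherwise the output is empty and the left-hand side is $0\leq\tr(\rho)$. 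For the measurement command $x:=M[\overline q]$, when $\sigma\models\mathit{Dist}(\overline q)$ the left-hand side is $\sum_m \tr\!\bigl(M_m^\sigma\rho (M_m^\sigma)^\dagger\bigr)=\tr\!\bigl(\bigl(\sum_m (M_m^\sigma)^\dagger M_m^\sigma\bigr)\rho\bigr)=\tr(\rho)$ using $\sum_m M_m^\dagger M_m = I$ from the quantum structure; otherwise it is $0$. So all base cases in fact give equality (hence $\leq$).

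For the inductive step I would treat the three compound constructs. For sequential composition $P_1;P_2$, using $\llbracket P_1;P_2\rrbracket(\sigma,\rho)=\bigcup_{(\sigma',\rho')\in\llbracket P_1\rrbracket(\sigma,\rho)}\llbracket P_2\rrbracket(\sigma',\rho')$ (as a multi-set union), the total output trace is $\sum_{(\sigma',\rho')\in\llbracket P_1\rrbracket(\sigma,\rho)}\ \sum\{|\tr(\rho'')\mid (\sigma'',\rho'')\in\llbracket P_2\rrbracket(\sigma',\rho')|\}$, which by the induction hypothesis applied to $P_2$ is $\leq \sum_{(\sigma',\rho')\in\llbracket P_1\rrbracket(\sigma,\rho)}\tr(\rho')\leq\tr(\rho)$ by the induction hypothesis applied to $P_1$. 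For the conditional, $\llbracket\mathbf{if}\ b\ \mathbf{then}\ P_1\ \mathbf{else}\ P_0\rrbracket(\sigma,\rho)$ equals either $\llbracket P_1\rrbracket(\sigma,\rho)$ or $\llbracket P_0\rrbracket(\sigma,\rho)$ depending on whether $\sigma\models b$, so the bound is immediate from the induction hypothesis for the selected branch. For the while loop, I would use the characterisation in Proposition~\ref{prop-structure}(8): an output arises from a finite chain $(\sigma,\rho)=(\sigma_0,\rho_0)\stackrel{P}{\Rightarrow}\cdots\stackrel{P}{\Rightarrow}(\sigma_n,\rho_n)$. Grouping the outputs by the number of iterations $n$ and repeatedly applying the sequential-composition argument (equivalently, the induction hypothesis for the loop body $P$ together with an inner induction on $n$), each fixed-$n$ stratum contributes total trace at most $\tr(\rho)$ minus what has escaped at earlier stages, and the disjoint strata sum to at most $\tr(\rho)$; a clean way to phrase this is to unfold the loop and observe that the partial sums over bounded iteration counts are non-decreasing and bounded above by $\tr(\rho)$.

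The main obstacle is the while case: unlike the finitary constructs, its output multi-set is defined via arbitrarily long transition chains, so one must argue that summing over all these chains still stays below $\tr(\rho)$. The key point making this work is that at each iteration the body $P$ consumes some trace (by the induction hypothesis) and the portions routed to the $\sigma\models\neg b$ exit at different iteration depths correspond to \emph{disjoint} execution paths, so their traces add without double counting; the total is the limit of the non-decreasing, $\tr(\rho)$-bounded sequence of partial sums over bounded depth, hence $\leq\tr(\rho)$. Care is needed to phrase the induction so that the "remaining" trace after each iteration is tracked correctly — the cleanest formulation is an auxiliary lemma that for the $n$-fold guarded unfolding of the loop, the total exit trace plus the trace still circulating is $\leq\tr(\rho)$, proved by induction on $n$ using the sequential-composition case.
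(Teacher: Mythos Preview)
Your proposal is correct and follows exactly the approach the paper indicates: structural induction on $P$ using the case analysis of Proposition~\ref{prop-structure}. The paper's own proof is just the one-line remark that the result is routine by induction (or directly from Proposition~\ref{prop-structure}), so you have simply spelled out the details the paper omits, including the appropriate inner induction on the iteration depth for the $\mathbf{while}$ case.
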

\begin{proof} It can be easily proved by induction on the structure of $P$. But it also can be derived directly from Proposition \ref{prop-structure}. 
\end{proof}

\subsubsection{Termination of programs} Let us define: \begin{equation}\label{def-NT}\mathit{NT}(P)(\sigma,\rho)=\tr(\rho)-\sum\left\{|\tr(\rho^\prime)|(\sigma^\prime,\rho^\prime)\in\llbracket P\rrbracket(\sigma,\rho)|\right\}\end{equation} 
Then by Proposition \ref{trace-decrease} we know that $0\leq \mathit{NT}(P)(\sigma,\rho)\leq 1$. Furthermore, it is clear from Definition \ref{def-denotational} that $\mathit{NT}(P)(\sigma,\rho)$ is the probability that program $P$ starting in state $(\sigma,\rho)$ does not terminate. In particular, if $\mathit{NT}(P)(\sigma,\rho)=0$; that is, 
$$(\mbox{Termination condition})\qquad\quad \sum\left\{|\tr(\rho^\prime)|(\sigma^\prime,\rho^\prime)\in\llbracket P\rrbracket(\sigma,\rho)|\right\}=\tr(\rho),$$ then we say that program $P$ starting in state $(\sigma,\rho)$ almost surely terminates. 

\subsubsection{Equivalence of programs} Before concluding this section, let us introduce the notion of equivalence between quantum programs. Two multi-sets $\Theta_1,\Theta_2$ of cq-states are said to be equivalent, written $\Theta_1\approx\Theta_2$, if for every classical state $\sigma$, it holds that $\Theta_1(\sigma)=\Theta_2(\sigma)$ according to the defining equation (\ref{eq-Theta}). It is easy to see that any multiset $\Theta$ of cq-states is equivalent to its normalisation $N(\Theta)$; that is, $\Theta\approx N(\Theta)=\{(\sigma,\Theta(\sigma))|\Theta(\sigma)\neq 0\}.$
With this notation, we have:

\begin{defn}\label{def-equiv} Two quantum programs $P_1,P_2$ are equivalent, written $P_1\approx P_2$, if for any input state $(\sigma,\rho)\in\Omega$, it holds that $\llbracket P_1\rrbracket(\sigma,\rho)\approx\llbracket P_2\rrbracket(\sigma,\rho).$
\end{defn}
 
 It is easy to see that whenever $P_1\approx P_2$, when for any state $(\sigma, \rho)$, $P_1$ starting in $(\sigma,\rho)$ almost surely terminates if and only if so does $P_2$. Moreover, it will be shown in Section \ref{sec-logic} that equivalent quantum programs enjoy the same program logical properties specified as Hoare triples (see Lemma \ref{lem-equivalence}). 
 
\section{An Assertion Language for Quantum Programs with Classical Variables}\label{sec-assert}

In this section, we turn to define an assertion language for specifying preconditions and postconditions of the quantum programs with classical variables described in Section \ref{sec-lang}. In this language, as briefly discussed in Section \ref{sec-intro}, an assertion is a pair consisting of a classical first-order logical formula that describes the property of classical variables and a quantum predicate formula that describes the property of quantum variables. 

Quantum predicates were first introduced in \cite{DP06} as observables represented by Hermitian operators between the zero and identity operators. This purely semantic view of quantum predicates has been adopted in the previous research on quantum Hoare logic. However, it is hard to use in practical verification of quantum programs, in particular, when formalised in a proof assistant (e.g. Coq, Isabelle/HOL or Lean). Our experience in classical logics indicates that syntactic representations of predicates are often much more economic, because they can be constructed from atomic propositions using propositional connectives, and in particular, they allow symbolic reasoning. Therefore, a syntax of quantum predicates was defined in Section 7.6 of \cite{Ying24}. In this section, we extend it to include quantum predicates parameterized by classical variables needed in the next section. 

\subsection{Formal Quantum States}\label{formal-states} Let us first introduce the notion of formal quantum state that will be used to define some atomic quantum predicates in our assertion language.  
 \begin{defn}Formal quantum states are inductively defined as follows:\begin{enumerate}\item If $t$ is a classical expression of type $T$, and $q$ is a simple or subscripted quantum variable of type $\hs_T={\rm span} \{|v\rangle:v\in T\}$ (the Hilbert space with $\{|v\rangle:v\in T\}$ as its orthonormal basis), then $s=|t\rangle_q$ is a formal quantum state, and its signature is $\mathit{Sig}(s)=\{q\}$;
 
 \item If $s_1,s_2$ are formal quantum states, then $s_1\otimes s_2$ (often simply written $s_1s_2$) is a formal quantum state too, and its signature is $\mathit{Sig}(s_1\otimes s_2)=\mathit{Sig}(s_1)\cup\mathit{Sig}(s_2)$;
 
 \item If $s_1, s_2$ are formal quantum states such that $\mathit{Sig}(s_1)=\mathit{Sig}(s_2)$, and $\alpha_1,\alpha_2$ are two classical expressions of type $\mathbb{C}$ (complex numbers), then $\alpha_1s_1+\alpha_2 s_2$ is a formal quantum states, and its signature is $\mathit{Sig}(\alpha_1s_1+\alpha_2s_2)=\mathit{Sig}(s_1)=\mathit{Sig}(s_2)$; and 
 
 \item If $s$ is a formal quantum state and $U(\overline{t})[\overline{q}]$ is a (parameterised) quantum gate defined in Subsection \ref{sec-gate} with $\overline{q}\subseteq\mathit{Sig}(s)$, then $(U(\overline{t})[\overline{q}])s$ is a formal quantum state, and its signature is $\mathit{Sig}((U(\overline{t})[\overline{q}])s)=\mathit{Sig}(s).$
 \end{enumerate}
 \end{defn}
 
 Intuitively, the formula $(U(\overline{t})[\overline{q}])s$ in the clause (4) of the above definition means that quantum gate $U(\overline{t})[\overline{q}]$ is applied to formal quantum state $s$.  
 
 The formal quantum states defined above are not a novel concept. In fact, in the quantum information literature, it is common to represent a quantum state in a form such as \begin{equation}\label{eq-fs-exam}\cos\frac{\theta}{2}|2k+1\rangle_{q[m+1]}|n-5\rangle_{q[3m-4]}+\sin\frac{\theta}{2}|2k-1\rangle_{q[m+1]}|n+3\rangle_{q[3m-4]}\end{equation} where the subscripts $q[m+1]$ and $q[3m-4]$ stand for the $(m+1)$st and $(3m-4)$th subsystems, respectively, of a composite quantum system. Definition \ref{formal-states} simply formalises this commonly used practice as exemplified in (\ref{eq-fs-exam}).
 
Now we assume that a classical structure and a quantum structure are given as an interpretation of our programming language $\mathit{qWhile}^+$ (see Subsection \ref{sec-sem}). Then we can define the semantics of formal quantum states in this interpretation. 
 
 \begin{defn}\label{def-state-sem}Let $s$ be a formal quantum state. For any classical state $\sigma$, if the semantics $\sigma(s)$ of $s$ in classical state $\sigma$ is well-defined, then it is a vector in Hilbert space $$\hs_{\sigma(\mathit{Sig}(s))}=\bigotimes_{q\in\sigma(\mathit{Sig}(s))}\hs_q$$ where $\sigma(\mathit{Sig}(s))=\{\sigma(q^\prime)|q^\prime\in\mathit{Sig}(s)\}$, and $\sigma(q^\prime)$ is defined as in Subsection \ref{sec-sem}. The vector $\sigma(s)$ is inductively defined as follows:   
 \begin{enumerate}\item $\sigma\left(|t\rangle_q\right)=|\sigma(t)\rangle_{\sigma(q)}\in\hs_{\sigma(q)}$; 
 \item $\sigma(s_1\otimes s_2)=\sigma(s_1)\otimes \sigma(s_2)$ provided $\sigma(\mathit{Sig}(s_1))\cap\sigma(\mathit{Sig}(s_2))=\emptyset$;
 \item $\sigma(\alpha_1s_1+\alpha_2s_2)=\sigma(\alpha_1)\sigma(s_1)+\sigma(\alpha_2)\sigma(s_2)$; and
 \item if $\sigma\models\mathit{Dist}(\overline{q})$ (see equation (\ref{distinctness}) for its definition), then: $$\sigma\left((U(\overline{t})[\overline{q}])s\right)=U(\sigma(\overline{t}))_{\sigma(\overline{q})}\sigma(s)$$ where $U(\sigma(\overline{t}))_{\sigma(\overline{q})}$ stands for the unitary operator $U(\sigma(\overline{t}))$ performing on $\sigma(\overline{q})$, $\sigma(\overline{t})=\sigma(t_1),...,\sigma(t_m)$ if $\overline{t}=t_1,...,t_m$, and $\sigma(\overline{q})=\sigma(q_1),...,\sigma(q_n)$ if $\overline{q}=q_1,...,q_n$. 
\end{enumerate}
 \end{defn}
 
It should be noticed that $\sigma(s)$ is not always well-defined because the disjoint condition in clause (2) and the distinction condition in clause (4) may be violated. Whence these conditions are satisfied, it is a vector in the Hilbert space $\hs_{\sigma(\mathit{Sig}(s))}$, but it may not be a quantum state. If it is further the case that the length $\|\sigma(s)\|=1$, then $\sigma(s)$ is a pure quantum state, and we say that $\sigma(s)$ is well-defined. Therefore, the semantics of a formal quantum state $s$ can be considered as a partial function from classical states to (pure) quantum states in $\hs_{\sigma(\mathit{Sig}(s))}$: $$\llbracket s\rrbracket: \sigma\mapsto \sigma(s)\ \mbox{whenever}\ \sigma(s)\ \mbox{is well-defined}.$$ This semantic view indicates that a formal quantum state can be seen as a quantum state parameterised by classical variables, and it coincides with the definition of cq-states in \cite{FengY}.

\subsection{Syntax of Quantum Predicates}\label{sec-pred-syn}
Now let us start to consider general quantum predicates. We assume a set $\mathcal{P}$ of \textit{atomic quantum predicate symbols}. Each quantum predicate symbol $K\in\mathcal{P}$ is equipped with a family of classical variables $x_1,...,x_m$ as its formal parameters, and it is also equipped with a quantum type $\hs_1\otimes ...\otimes\hs_n$, where $n$ is called the arity of $K$. 
We often write $K=K(x_1,...,x_m)$ in order to show the parameters. If for any $i\leq m$, $t_i$ is a classical expression with the same type of $x_i$, and for each $j\leq n$, $q_j$ is a simple or subscripted quantum variables with type $\hs_j$, then \begin{equation}\label{basic-pre}K(\overline{t})[\overline{q}]=K(t_1,...,t_m)[q_1,...,q_n]\end{equation} is called an \textit{atomic quantum predicate}, where $\overline{t}=t_1,...,t_m$ and $\overline{q}=q_1,...,q_n$. 

In particular,  for any quantum type $\hs_1\otimes ...\otimes\hs_n$, we introduce quantum predicate symbol $I_{\hs_1\otimes ...\otimes\hs_n}$. It will be interpreted as the identity operator on Hilbert space $\hs_1\otimes ...\otimes\hs_n$. Indeed, it can be viewed as the quantum version of \textquotedblleft true\textquotedblright\ in the domain $\hs_1\otimes ...\otimes\hs_n$. For simplicity, the subscript $\hs_1\otimes ...\otimes\hs_n$ of $I_{\hs_1\otimes ...\otimes\hs_n}$ is often omitted. For each formal quantum state $s$, we introduce an atomic quantum predicate $[s]$. Intuitively, the quantum predicate $[s]$ means \textquotedblleft to be in the state $s$\textquotedblright. As we will see later, mathematically, it denotes (the projection operator onto) the one-dimensional subspace spanned by the state $s$. Several other basic quantum predicates that are useful in practically specifying correctness of quantum programs were introduced in Section 3 of \cite{Ying19}; for example, unary quantum predicates defined by a subspace and a unitary operator, and symmetric and anti-symmetric operators as quantum counterparts of equality and inequality. We expect that more basic quantum predicates will be conceived in future applications of quantum Hoare logic and other program logics for quantum programs \cite{Unruh19, QSL1, QSL2, Yu-inc, Yu-temp, Yangjia, Unruh19, Barthe}. 

Recall that in classical logics, one uses connectives to construct compound propositions from atomic propositions. Here, we use the following connectives for quantum predicates: \begin{enumerate}\item[(i)] negation $\neg$; \item[(ii)] tensor product $\otimes$; and \item[(iii)] a family $\mathcal{F}$ of \textit{Kraus operator symbols}. Each Kraus operator symbol $F\in\mathcal{F}$ is assigned a rank $k$, equipped with a family of classical variables $x_1,...,x_m$ as its formal parameters, and equipped with a quantum type of the form $\hs_1\otimes...\otimes\hs_n$, where $n$ is called the arity of $F$. 
We often write $F=F(x_1,...,x_m)$ to show the parameters.  

In particular, we require: \begin{itemize}\item for each basic quantum type (i.e. Hilbert space) $\hs$ and an orthonormal basis $B$ of it, a Kraus operator symbol $F_B\in\mathcal{F}$ of rank $d$ is designated, where $d=\dim\hs$; 
\item for each (parameterized) quantum gate symbol $U\in\mathcal{U}$ with (classical) parameter type $T_1\times ...\times T_m$, a Kraus operator symbol $F_U(x_1,...,x_m)$ of rank $1$ is designated, where $x_i$ is a classical variable of type $T_i$ for $1\leq i\leq m$; 
\item for each measurement symbol $M\in\mathcal{M}$, a Kraus operator symbol $F_M(x)\in\mathcal{F}$ of rank $k$ (with a formal parameter $x$) is designated, where $k$ is the number of possible measurement outcomes. 
\end{itemize} 
If $F(x_1,...,x_m)$ is a Kraus operator symbol with type $\hs_1\otimes...\otimes\hs_n$, and for each $i\leq n$, $q_i$ is a simple or subscripted quantum variables with type $\hs_i$, then \begin{equation}\label{eq-Kraus}F(\overline{t})[\overline{q}]=F(t_1,...,t_m)[q_1,...,q_n]\end{equation} is called a \textit{Kraus operator}, where $t_i$ is a classical expression with the same type as $x_i$ for $1\leq i\leq m$, called the actual parameters.\end{enumerate}

Upon the alphabet of quantum predicates described above, we can introduce the syntax of quantum predicates:
\begin{defn}\label{def-predicates} Quantum predicate formulas are inductively defined as follows: 
$$A::=K(\overline{t})[\overline{q}]\ |\ \neg A\ |\ A_1\otimes A_2\ |\ F(\overline{t})[\overline{q}](\{A_i\}).$$
\begin{enumerate}\item An atomic quantum predicate $K(\overline{t})[\overline{q}]$ of the form given in equation (\ref{basic-pre}) is a quantum predicate formula, and its signature is $\mathit{Sig}(K(\overline{t})[\overline{q}])=\left\{\overline{q}\right\};$
\item If $A$ is a quantum predicate formula, then so is $\neg A$, and $\mathit{Sig}(\neg A)=\mathit{Sig}(A)$;
\item If $A_1$ and $A_2$ are quantum predicate formulas, then $A_1\otimes A_2$ is a quantum predicate formula, and $\mathit{Sig}(A_1\otimes A_2)=\mathit{Sig}(A_1)\cup\mathit{Sig}(A_2)$;  
\item If Kraus operator symbol $F$ has rank $k$, $\{A_i\}$ is a family of $k$ quantum predicate formulas with the same signature $\mathit{Sig}(A_i)=X$, and $\overline{q}\subseteq X$, then $$F(\overline{t})[\overline{q}](\{A_i\})$$ is a quantum predicate formula, and $\mathit{Sig}(F(\overline{t})[\overline{q}](\{A_i\})=X$. In particular, if all $A_i$ are the same quantum predicate formula $A$, then $F(\overline{t})[\overline{q}](\{A_i\})$ is abbreviated to $F(\overline{t})[\overline{q}](A).$
\end{enumerate}
\end{defn} 

With the above syntax, one can use connectives to construct various quantum predicate formulas from atomic quantum predicates (e.g. those $[s]$ defined by formal quantum states $s$). 
We will see how quantum predicate formulas, constructed from atomic quantum predicates with Kraus operator symbols $F_B$, $F_U$ and $F_M$, can be used in formulating the axioms and proof rules of our quantum Hoare logic with classical variables in the next section. Furthermore, using the other connectives of quantum predicates introduced above, various useful auxiliary proof rules can be formulated to ease the verification of quantum programs with classical variables (for example, quantum generalisations of the rules for classical programs given in Section 3.8 of \cite{Apt09} and generalisations of the rules for purely quantum programs presented in Section 6.7 of \cite{Ying24} to quantum programs with classical variables). 

\subsection{Semantics of Quantum Predicates}\label{sec-pred-sem}
As in Subsection \ref{sec-sem}, we assume a classical structure and a quantum structure as the interpretation of our language for quantum predicates. To define the semantics of quantum predicates, however, we need extend the quantum structure by adding the following:
\begin{enumerate}\item each quantum predicate symbol $K(x_1,...,x_m)\in\mathcal{P}$ with type $\hs_1\otimes...\otimes\hs_n$ is interpreted as a family of effects, i.e. Hermitian operators between $0$ and $I$ on $\hs_1\otimes...\otimes\hs_n$, where $0$ and $I$ stand for the zero and identity operators, respectively. For simplicity of presentation, the interpretation of $K$ is written as $K$ too. So, we have: $$K=\left\{K(a_1,...,a_m)\right\},$$ where for $i\leq m$, $a_i$ ranges over the values of classical variable $x_i$, and for each fixed tuple $(a_1,...,a_n)$ of parameters, $K(a_1,...,a_m)$ is an operator such that $0\leq K(a_1,...,a_m)\leq I$, where $\leq$ stands for the L\"{o}wner order; that is, $A\leq B$ if and only if $B-A$ is a positive operator. Consequently,  $K(a_1,...,a_m)^\dag =K(a_1,...,a_m)$; that is, $K(a_1,...,a_m)$ is a Hermitian operator and can be seen as an observable in physics. It should be pointed out that this interpretation of $K$ is consistent with the notion of quantum predicate defined in \cite{DP06} in the sense that for a fixed tuple $(a_1,...,a_m)$, $K(a_1,...,a_m)$ is exactly a quantum predicate in the sense of \cite{DP06}. 

In particular, for each formal quantum state $s$, the atomic quantum predicate $[s]$ is interpreted as the projection operator onto the one-dimensional subspace spanned by quantum state $\sigma(s)$, where $\sigma(s)$ is the semantics $s$ in $\sigma$ (see Definition \ref{def-state-sem}). For example, if formal quantum state $$s=\cos\frac{\theta}{2}|x-\frac{1}{2}\rangle_{q[3n-2]}+\sin\frac{\theta}{2}|x+\frac{1}{2}\rangle_{q[3n-2]},$$ and classical state $\sigma$ is given by $\sigma(\theta)=\frac{\pi}{2}, \sigma(x)=\frac{1}{2}, \sigma(n)=3$, then $[s]$ is interpreted as operator $|+\rangle\langle+|$ on the state space of qubit $q[7]$, where $|+\rangle=\frac{1}{\sqrt{2}}(|0\rangle+|1\rangle)$.

\item each Kraus operator symbol $F(x_1,...,x_m)\in\mathcal{F}$ with rank $k$ and type $\hs_1\otimes ...\otimes\hs_n$ is interpreted as a family $$F=\{F_i(a_1,...,a_m)\}$$ of $k$ operators on $\hs_1\otimes ...\otimes\hs_n$, where $i$ ranges over $k$ indices. Furthermore, each $F_i(a_1,...,a_m)$ is parameterised by $a_1,...,a_m$, where for $j\leq m$, $a_j$ ranges over the values of classical variable $x_j$. It is required that for a fixed tuple $(a_1,...,a_m)$ of parameters, $\left\{F_i(a_1,...,a_m)\right\}$ satisfies the normalization condition: \begin{equation}\label{eq-pre-normal}\sum_i F_i(a_1,...,a_n)F_i^\dag(a_1,...,a_n) = I.\end{equation} In particular, \begin{itemize}\item for an orthonormal basis $B=\{|n\rangle\}$, the Kraus operator symbol $F_B$ is interpreted as $F_B=\{|0\rangle\langle n|\},$ where $|0\rangle$ is a fixed basis state in $B$; \item for any (parameterised) quantum gate symbol $U\in\mathcal{U}$, if in the given quantum structure, $U$ is interpreted a family $\left\{U(v_1,...,v_m)\right\}$ of unitary operators (see Subsection \ref{sec-sem}), then the Kraus operator symbol $F_U$ is interpreted as $F_U=\left\{U(v_1,...,v_m)^\dag\right\}$; 
\item for any measurement symbol $M\in\mathcal{M}$, if in the quantum structure, $M$ is interpreted as $M=\left\{M_m\right\}$, then the Kraus operator symbol $F_M(m)$ is interpreted as $F_M(m)=\left\{M_m^\dag\right\}$.\end{itemize}
\end{enumerate}

We should notice the dualities between basis $B$, unitary operator $U$, measurement $M$ and their corresponding Kraus operators $F_B$, $F_U$, $F_M$, respectively.  

Given the above classical structure and (extended) quantum structure as the interpretation of our language for quantum predicates, we can introduce the following: 

\begin{defn}\label{def-sig}For each quantum predicate formula $A$ and for a classical state $\sigma$, the interpretation $\sigma(A)$ of $A$ in $\sigma$ is inductively defined as follows:
\begin{enumerate}\item If $A=K(\overline{t})[\overline{q}]$ is an atomic quantum predicate, then:
\begin{enumerate}\item whenever $\sigma\not \models\mathit{Dist}(\overline{q})$, then $\sigma(A)$ is not well-defined;
\item whenever $\sigma\models\mathit{Dist}(\overline{q})$, then 
\begin{equation}\label{eq-K}\sigma(A)=K(\sigma(\overline{t}))_{\sigma(\overline{q})}\end{equation} 
Note that $K$ in the right hand side of equation (\ref{eq-K}) denotes the interpretation of atomic predicate symbol $K$. Furthermore, $K(\sigma(\overline{t}))_{\sigma(\overline{q})}$ stands for the operator $K(\sigma(\overline{t}))$ acting on the quantum systems $\sigma(\overline{q})$. \end{enumerate} In particular, if $A=[s]$ is an atomic quantum predicate defined by a formal quantum state $s$, and $\sigma(s)=|\psi\rangle$ is well-defined (see Definition \ref{def-state-sem}), then $\sigma(A)=|\psi\rangle\langle\psi|$.  
\item If $A=\neg B$, then: \begin{enumerate}\item $\sigma(A)$ is not well-defined when $\sigma(B)$ is not well-defined; and \item $\sigma(A)= I -\sigma(B)$ when $\sigma(B)$ is well-defined. Here, $I$ stands for the identity operator on Hilbert space $\hs_{\sigma(\mathit{Sig}(B))}.$\end{enumerate}
\item If $A=A_1\otimes A_2$, then: \begin{enumerate}\item $\sigma(A)$ is not well-defined when either $\sigma(A_1)$ or $\sigma(A_2)$ is not well-defined, or $\sigma(\mathit{Sig}(A_1))\cap\sigma(\mathit{Sig}(A_2))\neq \emptyset$; and \item $\sigma(A)=\sigma(A_1)\otimes\sigma(A_2)$ when both $\sigma(A_1)$ and $\sigma(A_2)$ are well-defined, and $\sigma(\mathit{Sig}(A_1))\cap\sigma(\mathit{Sig}(A_2))=\emptyset$.\end{enumerate}
\item If $A=F(\overline{t})[\overline{q}](\{B_i\})$ and Kraus operator symbol $F$ is interpreted as $F=\{F_i(\overline{a})\}$, then:\begin{enumerate}\item whenever $\sigma(B_i)$ is not well-defined for some $i$ or $\sigma\not \models\mathit{Dist}(\overline{q})$, then $\sigma(A)$ is not well-defined; \item whenever $\sigma(B_i)$ is well-defined for every $i$ and $\sigma\models\mathit{Dist}(\overline{q})$, then
\begin{equation}\label{eq-kraus-sem}\sigma(A)=\sum_i F_i(\sigma(\overline{t}))_{\sigma(\overline{q})}\cdot \sigma(B_i)\cdot F_i^\dag(\sigma(\overline{t}))_{\sigma(\overline{q})}\end{equation}
where $F_i(\sigma(\overline{t}))_{\sigma(\overline{q})}$ stands for the operator $F_i(\sigma(\overline{t}))$ acting on quantum systems $\sigma(\overline{q})$. 
\end{enumerate}
\end{enumerate}\end{defn}

The above definition is a parameterised extension of Definition 7.23 in \cite{Ying24}. Now it is a right time to explain why we use the term \textquotedblleft Kraus operator\textquotedblright. Equation (\ref{eq-kraus-sem}) is essentially motivated by Kraus' operator-sum representation of quantum operations (i.e. quantum channels) (see Section 8.2.3 of \cite{NC00}). Kraus operators will be used in defining proof rules (Rule-Accum1) and (Rule-Accum2) of our logic $\mathit{QHL}^+$.  

To illustrate the concepts introduced above, let us see the following simple example:  

\begin{exam}\label{exam-pred} We consider an $n$-dimensional Hilbert space with basis $\{|i\rangle|i=0,1,...,n-1\}$. Let formal quantum state $s_i$ denote the basis state $|i\rangle$. Then:\begin{enumerate}\item Atomic quantum predicate $A_i=[s_i]$ stands for observable (projection operator) $|i\rangle\langle i|$;\item $\neg A_{n-1}$ denotes observable (projection operator) $\sum_{i=0}^{n-2}|i\rangle\langle i|$. 

\item If we use Kraus operator symbol $F(x_0,...,x_{n-1})$ to denote a family $\left\{\sqrt{x_i}\right\}$ of real numbers with $\sum_ix_i\leq 1$; that is, $$F(x_0,...,x_{n-1}) =\left\{F_i(x_0,...,x_{n-1})\right\}$$ with $F_i(x_0,...,x_{n-1})=\sqrt{x_i}$ (viewed as a $0$-dimensional operator) for $0\leq i\leq n-1$, then $F(p_1,...,p_n)(\{A_i\})$ stands for observable (positive operator) $\sum_ip_i|i\rangle\langle i|$.\end{enumerate}\end{exam}

The following lemma shows that every quantum predicate formula indeed represents a quantum predicate parameterised by classical variables.

\begin{lem} For any quantum predicate formula $A$ and for any classical sate $\sigma$, whenever $\sigma(A)$ is well-defined, then it is an effect; that is, an observable (i.e. a Hermitian operator) on $\hs_{\sigma(\mathit{Sig}(A))}$ between the zero operator and the identity operator. 
\end{lem}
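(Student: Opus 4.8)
The plan is to prove the statement by structural induction on the quantum predicate formula $A$, following the inductive definition of $\sigma(A)$ given in Definition \ref{def-sig}. At each step we assume $\sigma(A)$ is well-defined (otherwise there is nothing to prove) and verify two properties: (a) $\sigma(A)$ is Hermitian, and (b) $0 \leq \sigma(A) \leq I$ in the L\"owner order, both on the Hilbert space $\hs_{\sigma(\mathit{Sig}(A))}$. It is convenient to also track along the induction that $\sigma(A)$ acts on exactly the space $\hs_{\sigma(\mathit{Sig}(A))}$, so that the tensor-product and negation clauses typecheck.

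First, for the base case $A = K(\overline{t})[\overline{q}]$: when $\sigma \models \mathit{Dist}(\overline{q})$, we have $\sigma(A) = K(\sigma(\overline{t}))_{\sigma(\overline{q})}$, which by the interpretation requirement in Subsection \ref{sec-pred-sem}(1) satisfies $0 \leq K(\sigma(\overline{t})) \leq I$ and is Hermitian; since $\sigma(\overline{q})$ is a tuple of distinct quantum systems, embedding this operator into $\hs_{\sigma(\mathit{Sig}(A))}$ (tensoring with identity on the complementary factors) preserves both Hermiticity and the bounds $0 \leq \cdot \leq I$. The special case $A = [s]$ gives $\sigma(A) = |\psi\rangle\langle\psi|$, a rank-one projection, which is trivially an effect. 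Next, for $A = \neg B$: by the induction hypothesis $0 \leq \sigma(B) \leq I$ and $\sigma(B)$ is Hermitian on $\hs_{\sigma(\mathit{Sig}(B))}$, hence $\sigma(A) = I - \sigma(B)$ is Hermitian and $0 \leq I - \sigma(B) \leq I$. For $A = A_1 \otimes A_2$: using the induction hypothesis on each factor and the disjointness $\sigma(\mathit{Sig}(A_1)) \cap \sigma(\mathit{Sig}(A_2)) = \emptyset$, the tensor product $\sigma(A_1) \otimes \sigma(A_2)$ is an operator on $\hs_{\sigma(\mathit{Sig}(A_1))} \otimes \hs_{\sigma(\mathit{Sig}(A_2))} = \hs_{\sigma(\mathit{Sig}(A))}$; it is Hermitian because the tensor product of Hermitian operators is Hermitian, and it satisfies $0 \leq \sigma(A_1) \otimes \sigma(A_2) \leq I \otimes I = I$ since the tensor product of positive operators is positive and $I - A_1 \otimes A_2 = (I-A_1)\otimes A_2 + I \otimes (I - A_2) \geq 0$ (a standard identity; one may also argue via eigenvalues).

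The key case is $A = F(\overline{t})[\overline{q}](\{B_i\})$ with $F$ interpreted as $\{F_i(\overline{a})\}$. When well-defined, $\sigma(A) = \sum_i F_i(\sigma(\overline{t}))_{\sigma(\overline{q})} \cdot \sigma(B_i) \cdot F_i^\dag(\sigma(\overline{t}))_{\sigma(\overline{q})}$. Hermiticity is immediate since each summand $F_i X F_i^\dag$ is Hermitian when $X = \sigma(B_i)$ is. Positivity follows because $\sigma(B_i) \geq 0$ by the induction hypothesis implies $F_i \sigma(B_i) F_i^\dag \geq 0$ for each $i$, and a sum of positive operators is positive. The upper bound $\sigma(A) \leq I$ is where the sub-normalisation condition (\ref{eq-pre-normal}), $\sum_i F_i(\overline{a}) F_i^\dag(\overline{a}) \leq I$, is used: from $\sigma(B_i) \leq I$ we get $F_i \sigma(B_i) F_i^\dag \leq F_i I F_i^\dag = F_i F_i^\dag$, hence $\sigma(A) = \sum_i F_i \sigma(B_i) F_i^\dag \leq \sum_i F_i F_i^\dag \leq I$. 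One subtlety to address carefully: condition (\ref{eq-pre-normal}) as stated is for the $F_i$ acting on their native space $\hs_1 \otimes \cdots \otimes \hs_n$, and we must lift it to the larger space $\hs_{\sigma(\mathit{Sig}(A))} \supseteq \hs_{\sigma(\overline{q})}$ on which the $B_i$ and hence $\sigma(A)$ live; since $\overline{q} \subseteq X = \mathit{Sig}(B_i)$, the operators $F_i(\sigma(\overline{t}))_{\sigma(\overline{q})}$ are $F_i$ tensored with identity on $\hs_{\sigma(X) \setminus \sigma(\overline{q})}$, and both the positivity and the sub-normalisation bound are preserved under such ampliation, since $A \leq B$ implies $A \otimes I \leq B \otimes I$. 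The main obstacle I anticipate is not any deep inequality but rather the bookkeeping of the ambient Hilbert spaces — ensuring at every inductive step that the operators are being compared on the correct common space $\hs_{\sigma(\mathit{Sig}(A))}$, and that the monotonicity $X \leq Y \Rightarrow F X F^\dag \leq F Y F^\dag$ and the ampliation lemma $X \leq Y \Rightarrow X \otimes I \leq Y \otimes I$ are invoked cleanly. These are routine facts about the L\"owner order, so the proof is essentially mechanical once the space-tracking is set up correctly.
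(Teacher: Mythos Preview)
Your proposal is correct and follows exactly the approach the paper indicates: the paper's own proof is simply ``Routine by induction on the structure of $A$,'' and you have spelled out precisely that induction with the appropriate case analysis. The details you supply (monotonicity of $X\mapsto FXF^\dag$, preservation of effect bounds under tensor/ampliation, and the use of the sub-normalisation condition (\ref{eq-pre-normal}) for the upper bound in the Kraus case) are the right ones.
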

\begin{proof}Routine by induction on the structure of $A$. 
\end{proof}

Thus, the semantics of a quantum predicate formula $A$ can be understood as a partial function from classical states to Hermitian operator on $\hs_{\sigma(\mathit{Sig}(A))}$:
$$\llbracket A\rrbracket:\sigma\mapsto\sigma(A)\ \mbox{whenever}\ \sigma(A)\ \mbox{is well-defined.}$$ This understanding is consistent with the notion of cq-predicate defined in \cite{FengY}.

To conclude this subsection, we introduce the following:

\begin{defn}\label{qp-entail}Let $A,B$ be two quantum predicate formulas and $\varphi$ a classical first-order logical formula. Then:\begin{enumerate}\item By the entailment under condition $\varphi$: $$\varphi\models A\leq B$$ we mean that for any classical state $\sigma$, whenever $\sigma\models\varphi$, then either both $\sigma(A)$ and $\sigma(B)$ are not well-defined, or both of them are well-defined and $\sigma(A)\leq\sigma(B)$; that is, $\sigma(B)-\sigma(A)$ is a positive operator. 
\item By equivalence under condition $\varphi$: $$\varphi\models A\equiv B$$ we mean that $\varphi\models A\leq B$ and $\varphi\models B\leq A$; that is,  
whenever $\sigma\models\varphi$, then $\sigma(A)$ is well-defined if and only if so is $\sigma(B)$, and in this case $\sigma(A)=\sigma(B).$
\end{enumerate} 
\end{defn}

For example, we use the same notation as in Example \ref{exam-pred}, and let Kraus operator symbol $F^\prime$ denote a family $\left\{\sqrt{p_i^\prime}\right\}$ of real numbers with $\sum_ip_i^\prime\leq 1$, then it holds that \begin{equation}\label{entail-prob}(\forall i)(p_i\leq p_i^\prime)\models F(p_0,...,p_{n-1})(\{A_i\})\leq F(p_0^\prime,...,p_{n-1}^\prime)(\{A_i\}).\end{equation}

Some basic properties of the entailment are given in the following: 

\begin{prop} \begin{enumerate}
\item Entailment is reflexive, anti-symmetric and transitive: \begin{enumerate}\item $\varphi\models A\leq B$ and $\varphi\models B\leq A$ if and only if $\varphi\models A\equiv B$;
\item If $\varphi\models A\leq B$ and $\varphi\models B\leq C$, then $\varphi\models A\leq C.$
\end{enumerate}
\item Entailment is preserved by quantum predicate connectives: \begin{enumerate}\item If $\varphi\models A\leq B$, then $\varphi\models\neg B\leq \neg A$ and $
\varphi\models A\otimes C\leq B\otimes C$; 
\item If $\varphi\models A_i\leq B_i$ for every $i$, then $\varphi\models F(\overline{t})[\overline{q}](\{A_i\})\leq F(\overline{t})[\overline{q}](\{B_i\}).$
\end{enumerate}\end{enumerate}\end{prop}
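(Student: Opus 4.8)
The plan is to prove both parts of the proposition by unwinding Definition \ref{qp-entail} and reducing everything to pointwise statements about the Löwner order on operators, handled by induction on the structure of quantum predicate formulas where needed. For part (1), anti-symmetry and transitivity are immediate: fix an arbitrary classical state $\sigma$ with $\sigma\models\varphi$; the hypotheses give us pointwise facts, and the conclusion is a pointwise fact, so each case follows from the corresponding elementary property of $\leq$ on Hermitian operators (anti-symmetry of the Löwner order, and transitivity of the Löwner order) together with the bookkeeping that ``well-defined'' is a symmetric/transitive condition under the hypotheses. Reflexivity is trivial since $\sigma(A)\leq\sigma(A)$ always holds when $\sigma(A)$ is well-defined. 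The only subtlety is the well-definedness bookkeeping: in each implication one must check that the domains of the relevant partial functions match up, which is exactly what the definition of $\varphi\models A\leq B$ already guarantees.

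For part (2), I would fix $\sigma$ with $\sigma\models\varphi$ and treat each connective in turn, using the inductive semantics from Definition \ref{def-sig}. For negation: if $\sigma(A)\leq\sigma(B)$ (both well-defined, same signature), then $\sigma(\neg B)=I-\sigma(B)\leq I-\sigma(A)=\sigma(\neg A)$ since subtracting a common operator from both sides of a Löwner inequality preserves it; well-definedness of $\sigma(\neg A)$ and $\sigma(\neg B)$ is equivalent to that of $\sigma(A)$ and $\sigma(B)$, and their signatures agree, so the identity operator $I$ is the same on both sides. For tensor product with a fixed $C$: if $\sigma(A)\leq\sigma(B)$, then one needs $\sigma(A)\otimes\sigma(C)\leq\sigma(B)\otimes\sigma(C)$, i.e. $(\sigma(B)-\sigma(A))\otimes\sigma(C)$ is positive, which holds because the tensor product of two positive operators is positive ($\sigma(C)$ is an effect, hence positive). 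Here well-definedness of $\sigma(A\otimes C)$ and $\sigma(B\otimes C)$ is equivalent, since $\sigma(A)$ and $\sigma(B)$ have the same signature, so the disjointness condition with $\sigma(\mathit{Sig}(C))$ either holds for both or fails for both.

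For the Kraus-operator case: assuming $\sigma(A_i)\leq\sigma(B_i)$ for every $i$, and assuming the side conditions ($\sigma\models\mathit{Dist}(\overline{q})$, all $\sigma(A_i)$ and hence all $\sigma(B_i)$ well-defined with the common signature $X$) which make both sides well-defined, I would use equation (\ref{eq-kraus-sem}) to write $\sigma(F(\overline{t})[\overline{q}](\{B_i\}))-\sigma(F(\overline{t})[\overline{q}](\{A_i\}))=\sum_i F_i(\sigma(\overline{t}))_{\sigma(\overline{q})}\,(\sigma(B_i)-\sigma(A_i))\,F_i^\dag(\sigma(\overline{t}))_{\sigma(\overline{q})}$. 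Each summand is positive because conjugating a positive operator $\sigma(B_i)-\sigma(A_i)$ by any operator $F_i(\sigma(\overline{t}))_{\sigma(\overline{q})}$ yields a positive operator (the map $X\mapsto FXF^\dag$ is positive), and a sum of positive operators is positive. Hence the difference is positive, giving the desired inequality.

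The main obstacle, such as it is, is purely administrative rather than mathematical: carefully verifying at each inductive step that ``$\sigma(A)$ well-defined'' and ``$\sigma(B)$ well-defined'' rise and fall together, so that the hypothesis $\varphi\models A\leq B$ (with its built-in either-both-undefined-or-both-defined clause) is correctly propagated through the connectives. Once that bookkeeping is in place, every analytic step reduces to one of three standard facts about the Löwner order: it is translation-invariant (hence the negation case), tensoring with a fixed positive operator is order-preserving (the tensor case), and conjugation $X\mapsto FXF^\dag$ is positive (the Kraus case). I do not expect any genuine difficulty, which is consistent with the paper's own assessment that such facts are ``routine.''
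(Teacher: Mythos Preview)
Your proposal is correct and is exactly the routine unwinding the paper has in mind: the paper's own proof reads in its entirety ``Routine by definition,'' and your plan simply spells out what that means (pointwise reduction to anti-symmetry and transitivity of the L\"owner order, plus translation-invariance, positivity of tensor products of positive operators, and positivity of $X\mapsto FXF^\dag$ for the three connectives), together with the matching well-definedness bookkeeping you correctly identify as the only administrative point to track.
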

\begin{proof} Routine by definition. 
\end{proof}

\subsection{Substitutions for Quantum Predicates}\label{sec-subst}

In this subsection, we introduce the notion of substitution in a quantum predicate formula, which is needed in formulating the axioms and proof rules of our quantum Hoare logic in the next section. It can be defined in a manner familiar to us from classical logic: 

\begin{defn}Let $A$ be a quantum predicate formula, $x$ a classical variable and $e$ a classical expression. Then the substitution $A[e/x]$ of $e$ for $x$ in $A$ is a quantum predicate formula inductively defined as follows:
\begin{enumerate}
\item If $A=K(\overline{t})[\overline{q}]$, then 
\begin{equation}\label{eq-K-s}A[e/x]=K(\overline{t}[e/x])[\overline{q}[e/x]].\end{equation}
\item If $A=\neg B$, then $A[e/x]= \neg (B[e/x])$. 
\item If $A=F(\overline{t})[\overline{q}](\{B_i\})$, then 
\begin{equation}A[e/x]=F(\overline{t}[e/x])[\overline{q}[e/x]](B_i[e/x])\end{equation}
where:\begin{enumerate}\item[(a)] for $\overline{t}=t_1,...,t_m$, we set $\overline{t}[e/x]=t_1[e/x],...,t_m[e/x]$; and \item[(b)] for $\overline{q}=q_1[s_{11},...,s_{1n_1}],...,q_k[s_{k1},...,s_{kn_k}]$, we set: 
$$\overline{q}[e/x]=q_1[s_{11}[e/x],...,s_{1n_1}[e/x]],...,q_k[s_{k1}[e/x],...,s_{kn_k}[e/x]].$$\end{enumerate}
\end{enumerate}\end{defn}

As in the case of classical logic, we can prove the following substitution lemma. It will be used in proving the soundness of our quantum Hoare logic with classical variables to be presented in the next section.  
\begin{lem}[Substitution for Quantum Predicates]\label{lem-sub} For any quantum predicate formula $A$ and classical state $\sigma$, we have:\begin{enumerate}\item $\sigma(A[e/x])$ is well-defined if and only if so is $\sigma[x:=\sigma(e)](A)$; and \item if they are well-defined, then $$\sigma(A[e/x])=\sigma[x:=\sigma(e)](A),$$\end{enumerate}
where $\sigma[x:=\sigma(e)]$ is the update of $\sigma$ that agrees with $\sigma$ except for $x$ where its value is $\sigma(e)$.\end{lem}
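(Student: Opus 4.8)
The plan is to prove the Substitution Lemma for Quantum Predicates by structural induction on the quantum predicate formula $A$, running in parallel with an appeal to the analogous (standard) substitution lemma for classical expressions: for any classical expression $t$ and any classical state $\sigma$, $\sigma(t[e/x]) = \sigma[x:=\sigma(e)](t)$, and similarly $\sigma \models \varphi[e/x]$ iff $\sigma[x:=\sigma(e)] \models \varphi$ for a first-order formula $\varphi$. Write $\tau = \sigma[x:=\sigma(e)]$ throughout. The key observation that makes the induction go through is that substitution acts on a quantum predicate formula only by substituting into the classical data it carries — the actual parameters $\overline{t}$ and the subscript expressions inside $\overline{q}$ — while leaving the quantum-variable symbols, the atomic predicate symbols $K$, the Kraus operator symbols $F$, and the overall connective structure untouched. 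So at each step the quantum-operator-valued interpretation is computed from the same interpreted symbol applied to classically-evaluated arguments, and the two sides agree precisely because the classical arguments agree by the classical substitution lemma.

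Concretely I would proceed as follows. For the base case $A = K(\overline{t})[\overline{q}]$: by equation (\ref{eq-K-s}), $A[e/x] = K(\overline{t}[e/x])[\overline{q}[e/x]]$. First handle well-definedness: $\sigma(A[e/x])$ is well-defined iff $\sigma \models \mathit{Dist}(\overline{q}[e/x])$, and since $\mathit{Dist}(\overline{q})$ is itself a first-order formula built from the subscript expressions and the quantum variable identities, the classical substitution lemma for formulas gives $\sigma \models \mathit{Dist}(\overline{q}[e/x])$ iff $\tau \models \mathit{Dist}(\overline{q})$, which is exactly the well-definedness condition for $\tau(A)$; here one must also check that $\sigma(\overline{q}[e/x]) = \tau(\overline{q})$, i.e. that evaluating the subscripts after syntactic substitution in state $\sigma$ gives the same physical quantum systems as evaluating them in state $\tau$ — again immediate from the classical substitution lemma applied componentwise to each $s_{il}$. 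When well-defined, equation (\ref{eq-K}) gives $\sigma(A[e/x]) = K(\sigma(\overline{t}[e/x]))_{\sigma(\overline{q}[e/x])}$ and $\tau(A) = K(\tau(\overline{t}))_{\tau(\overline{q})}$; these coincide because $\sigma(\overline{t}[e/x]) = \tau(\overline{t})$ (classical substitution, componentwise) and $\sigma(\overline{q}[e/x]) = \tau(\overline{q})$ as just noted. The special case $A = [s]$ for a formal quantum state $s$ needs the same statement one level down — namely that $\sigma(s[e/x])$ is well-defined iff $\tau(s)$ is, and they are equal — which is itself proved by a routine sub-induction on the structure of $s$ (clauses $|t\rangle_q$, $s_1 \otimes s_2$, $\alpha_1 s_1 + \alpha_2 s_2$, $(U(\overline{t})[\overline{q}])s$ of Definition \ref{def-state-sem}), reusing the classical substitution lemma for the expressions $t$, $\alpha_i$, $\overline{t}$ and the formula $\mathit{Dist}(\overline{q})$.

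For the inductive cases: if $A = \neg B$, then $A[e/x] = \neg(B[e/x])$, and by the induction hypothesis $\sigma(B[e/x])$ is well-defined iff $\tau(B)$ is, with equal values and equal signatures (note $\mathit{Sig}$ is unchanged by substitution, so $\hs_{\sigma(\mathit{Sig}(B[e/x]))} = \hs_{\tau(\mathit{Sig}(B))}$, hence the identity operators on the two sides agree); then $\sigma(\neg(B[e/x])) = I - \sigma(B[e/x]) = I - \tau(B) = \tau(\neg B)$. The tensor case $A = A_1 \otimes A_2$ is handled the same way, additionally using that the disjointness side-condition $\sigma(\mathit{Sig}(A_i[e/x])) \cap \cdots = \emptyset$ is equivalent to $\tau(\mathit{Sig}(A_i)) \cap \cdots = \emptyset$ because substitution fixes the signature and, for any $q' \in \mathit{Sig}(A_i)$, $\sigma(q'[e/x]) = \tau(q')$. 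The Kraus case $A = F(\overline{t})[\overline{q}](\{B_i\})$ combines the atomic-case reasoning (for $\overline{t}[e/x]$, $\overline{q}[e/x]$ and the condition $\mathit{Dist}(\overline{q})$) with the induction hypothesis applied to each $B_i$, and then equation (\ref{eq-kraus-sem}) expresses $\sigma(A[e/x])$ and $\tau(A)$ as the same operator sum $\sum_i F_i(\cdot)_{(\cdot)} \cdot (\cdot) \cdot F_i^\dag(\cdot)_{(\cdot)}$ with matching arguments in all three slots.

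The only real subtlety — and the step I would flag as the main obstacle — is keeping the bookkeeping of \emph{well-definedness} and \emph{signatures} perfectly synchronized between the two sides at every clause, since each of the operations $\neg$, $\otimes$, $F(\cdot)[\cdot](\cdot)$, and formal-state formation carries a partiality condition (distinctness, disjointness, unit length); the proof is entirely routine once one records the invariant $\mathit{Sig}(A[e/x]) = \mathit{Sig}(A)$ and $\sigma(q'[e/x]) = \tau(q')$ for every quantum variable $q'$ occurring in $A$, and then the classical substitution lemma discharges every side-condition and every classical argument uniformly. No new ideas beyond the classical substitution lemma and careful case analysis are required.
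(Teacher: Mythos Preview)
Your proposal is correct and follows essentially the same approach as the paper: structural induction on $A$, reducing well-definedness to the classical substitution lemma applied to the first-order formula $\mathit{Dist}(\overline{q})$, and reducing the operator equality to the classical substitution lemma applied componentwise to $\overline{t}$ and to the subscript expressions in $\overline{q}$. Your treatment is in fact slightly more complete than the paper's own proof, which handles only the cases $K(\overline{t})[\overline{q}]$, $\neg B$, and $F(\overline{t})[\overline{q}](\{B_i\})$ and omits the tensor case $A_1\otimes A_2$ as well as the sub-induction on formal quantum states for $[s]$; your explicit bookkeeping of signatures and of the invariant $\sigma(q'[e/x])=\tau(q')$ is exactly what is needed to discharge the disjointness side-condition in that omitted case.
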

\begin{proof} For readability, we deferred the tedious proof to \ref{proof-substitute}
\end{proof}

\subsection{Classical-Quantum Assertions}\label{subsec-assertions}

Now we are ready to define an assertion language for quantum programs with classical variables by combining classical first-order logic and quantum predicate formulas introduced above. 

\begin{defn}A classical-quantum assertion (cq-assertion or simply assertion for short) is a pair $(\varphi, A)$, where:\begin{enumerate}\item $\varphi$ is a classical first-order logical formula; and \item$A$ is a quantum predicate formula (see Definition \ref{def-predicates}).
\end{enumerate}\end{defn}

As pointed out in Section \ref{sec-intro}, these cq-assertions $(\varphi,A)$ will be employed in specifying preconditions and postconditions of quantum programs with classical variables. Intuitively, $\varphi$ describes the properties of classical variables and $A$ describes that of quantum variables in a program. It should be particularly noticed that $\varphi$ and $A$ are not independent of each other. Indeed, quantum predicate $A$ is parameterised by classical variables, as seen in the previous subsections. Thus,    
 classical logical formula $\varphi$ puts certain constraints on quantum predicate $A$.

The entailment and equivalence relations between cq-assertions can be defined as follows: 

\begin{defn}\label{cq-entail}\begin{enumerate}\item Entailment: $(\varphi,A)\models (\psi,B)$ if $\varphi\models\psi$ (entailment in classical first-order logic) and $\varphi\models A\leq B$ (see Definition \ref{qp-entail}(1)).
\item Equivalence: $(\varphi,A)\equiv (\psi,B)$ if $\varphi\equiv \psi$ (equivalence in classical first-order logic) and $\varphi\models A\equiv B$ (see Definition \ref{qp-entail}(2)).
\end{enumerate}
\end{defn}

The following proposition asserts that the entailment between cq-assertions is a reflexive, anti-symmetric and transitive relation:
\begin{prop}\begin{enumerate}\item $(\varphi,A)\models (\psi,B)$ and $(\psi,B)\models (\varphi,A)$ if and only if $(\varphi,A)\equiv (\psi,B)$;
\item If $(\varphi,A)\models (\psi,B)$ and $(\psi,B)\models (\chi,C)$, then $(\varphi,A)\models (\chi,C)$.
\end{enumerate}\end{prop}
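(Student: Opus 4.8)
The plan is to reduce both claims to the analogous facts about the two constituent relations -- entailment in classical first-order logic and the parameterised L\"{o}wner-order entailment $\varphi\models A\leq B$ of Definition \ref{qp-entail} -- and then combine them. By Definition \ref{cq-entail}, the statement $(\varphi,A)\models(\psi,B)$ unfolds into the conjunction of $\varphi\models\psi$ and $\varphi\models A\leq B$, so the whole proposition is really a bookkeeping exercise over these two components.

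For part (1), I would argue as follows. Assume $(\varphi,A)\models(\psi,B)$ and $(\psi,B)\models(\varphi,A)$. Unfolding, this gives $\varphi\models\psi$, $\psi\models\varphi$, $\varphi\models A\leq B$ and $\psi\models B\leq A$. From $\varphi\models\psi$ and $\psi\models\varphi$ we get $\varphi\equiv\psi$ in classical first-order logic. For the quantum part, I need $\varphi\models A\equiv B$, i.e. both $\varphi\models A\leq B$ and $\varphi\models B\leq A$; the first is given directly, and the second follows from $\psi\models B\leq A$ together with $\varphi\models\psi$ (any $\sigma\models\varphi$ also satisfies $\psi$, hence $\sigma(B)\leq\sigma(A)$ with the well-definedness clauses matching). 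Then Definition \ref{cq-entail}(2) yields $(\varphi,A)\equiv(\psi,B)$. The converse direction is immediate: if $(\varphi,A)\equiv(\psi,B)$ then $\varphi\equiv\psi$ gives entailment both ways classically, and $\varphi\models A\equiv B$ unpacks to the two L\"{o}wner inequalities, which (again using $\varphi\equiv\psi$ to swap the side condition) give $(\varphi,A)\models(\psi,B)$ and $(\psi,B)\models(\varphi,A)$.

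For part (2), suppose $(\varphi,A)\models(\psi,B)$ and $(\psi,B)\models(\chi,C)$. The classical components give $\varphi\models\psi$ and $\psi\models\chi$, hence $\varphi\models\chi$ by transitivity of classical entailment. For the quantum components we have $\varphi\models A\leq B$ and $\psi\models B\leq C$; since $\varphi\models\psi$, every $\sigma\models\varphi$ also satisfies $\psi$, so on such $\sigma$ we get $\sigma(A)\leq\sigma(B)$ and $\sigma(B)\leq\sigma(C)$ (with the well-definedness conditions propagating consistently through Definition \ref{qp-entail}), and transitivity of the L\"{o}wner order gives $\sigma(A)\leq\sigma(C)$. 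Hence $\varphi\models A\leq C$, and with $\varphi\models\chi$ this is exactly $(\varphi,A)\models(\chi,C)$. Here I am implicitly invoking the already-established reflexivity, anti-symmetry and transitivity of the quantum-predicate entailment (the earlier unnumbered proposition), so part (2) is essentially inherited from those facts plus transitivity of classical entailment.

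The only mildly delicate point -- and the closest thing to an obstacle -- is the handling of the ``not well-defined'' cases in Definition \ref{qp-entail}: one must check that when the side condition is strengthened (replacing $\psi$ by the stronger $\varphi$), the ``either both not well-defined, or both well-defined with $\sigma(A)\leq\sigma(B)$'' clause is preserved, and that chaining two such clauses through a common middle predicate $B$ does not create a mismatch (e.g. $\sigma(B)$ well-defined from one side but not the other). Since $\varphi\models\psi$ means the set of models only shrinks, and the disjunctive well-definedness condition is inherited on any subset of models, this causes no trouble; I would state it explicitly once and then the rest is routine. Overall I expect this proposition to be proved in a few lines, essentially by unfolding Definition \ref{cq-entail} and citing the corresponding properties of its two coordinates.
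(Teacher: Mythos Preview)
Your proposal is correct and follows the same approach as the paper, which simply records the proof as ``Immediate by definition.'' You have merely spelled out in detail what that phrase encapsulates---unfolding Definition~\ref{cq-entail} into its classical and quantum components and checking each coordinate---including the well-definedness bookkeeping, which is indeed routine.
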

\begin{proof}Immediate by definition.\end{proof} 

For applications of our quantum Hoare logic to be presented in the next section, the entailment between cq-assertions and quantum Hoare triples (correctness formulas) will be used together in reasoning about the correctness of quantum programs. On the other hand, it is clear from Definition \ref{qp-entail} that classical first-order logic is helpful in proving the entailment between quantum predicate formulas (e.g. the entailment (\ref{entail-prob})) and thus in establishing the entailment between cq-assertions, as we can see from Definition \ref{cq-entail}. Therefore, we will need to exploit effective techniques and tools for verifying the entailment between quantum predicate formulas based on classical logic in order to implement practical tools for verification of quantum programs with classical variables.

\section{Proof System for Quantum Programs with Classical Variables}\label{sec-logic}

Building on the preparations in the previous sections, we are able to develop a Hoare-style logic in this section for reasoning about the correctness of quantum programs written in the programming language $\mathit{qWhile}^{+}$.

\subsection{Correctness Formulas}

As in the classical Hoare logic, we use Hoare triples to specify correctness of quantum programs with classical variables. The cq-assertions defined in Subsection \ref{subsec-assertions} are employed as preconditions and postconditions to define the correctness formulas of quantum programs. More precisely, we use a quantum Hoare triple of the form: \begin{equation}\label{Hoare-triple}\{\varphi,A\}\ P\ \{\psi,B\}\end{equation} as a correctness formula of a quantum program $P$, where $\varphi,\psi$ are first-order logical formulas, called the classical precondition and postcondition, respectively, and $A,B$ are quantum predicate formulas (see Definition \ref{def-predicates}), called quantum precondition and postcondition, respectively. A quantum Hoare triple of the form (\ref{Hoare-triple}) has two different interpretations: 
\begin{defn}\begin{enumerate}\item We say that Hoare triple $\{\varphi,A\}\ P\ \{\psi,B\}$ is true in the sense of total correctness, written $$\models_\mathit{tot}\{\varphi,A\}\ P\ \{\psi,B\},$$ if for any input $(\sigma,\rho)$, whenever $\sigma\models\varphi$ and $\sigma(A)$ is well-defined, then it holds that 
\begin{equation}\label{eq-total}\tr(\sigma(A)\rho)\leq\sum\left\{|\tr\left(\sigma^\prime(B)\rho^\prime\right) | (\sigma^\prime,\rho^\prime)\in\llbracket P\rrbracket(\sigma,\rho), \sigma^\prime\models\psi,\ \mbox{and}\ \sigma^{\prime}(B)\ \mbox{is\ well-defined}|\right\}.\end{equation}
\item The partial correctness: $$\models_\mathit{par}\{\varphi,A\}\ P\ \{\psi,B\}$$ is defined if inequality (\ref{eq-total}) is weakened by the following: 
\begin{equation}\label{eq-partial}\begin{split}\tr(\sigma(A)\rho)\leq \sum &\left\{|\tr\left(\sigma^\prime(B)\rho^\prime\right) | (\sigma^\prime,\rho^\prime)\in\llbracket P\rrbracket(\sigma,\rho),\ \sigma^\prime\models\psi,\ \mbox{and}\ \sigma^{\prime}(B)\ \mbox{is\ well-defined}|\right\} \\ &+\mathit{NT}(P)(\sigma,\rho)\end{split}\end{equation}
\end{enumerate} where $\mathit{NT}(P)(\sigma,\rho)$ stands for the probability that program $P$ starting in state $(\sigma,\rho)$ does not terminate (see its defining equation (\ref{def-NT})). 
\end{defn}

The above definition is a natural quantum generalisation of the notions of partial and total correctness in classical Hoare logic. First, we note that a premise of inequalities (\ref{eq-total}) and (\ref{eq-partial}) is $\sigma\models\varphi$, which means that the initial classical state $\sigma$ satisfies the precondition $\varphi$ for classical variables. On the other hand, the requirement that the final classical state $\sigma^\prime$ satisfies the postcondition $\psi$ for classical variables (i.e. $\sigma^\prime\models\psi$) is imposed in the right-hand side of 
inequalities (\ref{eq-total}) and (\ref{eq-partial}). 
Furthermore, according to Definition \ref{def-sig}, $\sigma(A)$ denotes the observable specified by quantum predicate $A$ in classical state $\sigma$.  
By the interpretation of quantum mechanics, we know that $\tr(\sigma(A)\rho)$ is the expectation of observable $\sigma(A)$ in quantum state $\rho$, which can be understood as the degree that quantum state $\rho$ satisfies quantum predicate $A$ in classical state $\sigma$. The same interpretation applies to $\tr\left(\sigma^\prime(B)\rho^\prime\right)$. Therefore, inequalities (\ref{eq-total}) and (\ref{eq-partial}) are essentially the quantitative versions of the following two statements, respectively:\begin{itemize}\item 
\textbf{\textit{Total correctness}}: for any input cq-state $(\sigma,\rho)$, whenever $(\sigma,\rho)$ satisfies cq-assertion $(\varphi,A)$, then program $P$ terminates and its output $(\sigma^\prime,\rho^\prime)$ satisfies cq-assertion $(\psi,B)$;
\item \textbf{\textit{Partial correctness}}: for any input cq-state $(\sigma,\rho)$, whenever $(\sigma,\rho)$ satisfies cq-assertion $(\varphi,A)$, then either program $P$ does not terminate, or it terminates and its output $(\sigma^\prime,\rho^\prime)$ satisfies cq-assertion $(\psi,B)$. 
\end{itemize}

\begin{exam}\label{QFT-correct} Let $|(j,k:l)\rangle$ and $|\mathit{QFT}(j,k:l)\rangle$ be defined as in Example \ref{QFT-data}. They can be seen as formal quantum states. Thus, 
\begin{align*}&A(j,k:l)\equiv|(j,k:l)\rangle_{q[k:l]}\langle(k,l)|,\\ &B(j,k:l)\equiv|\mathit{QFT}(j,k:l)\rangle_{q[k:l]}\langle\mathit{QFT}(j,k:l)|\end{align*} are two atomic quantum predicate formulas. All of these quantum states and predicates are parameterised by classical variables $j$, $k$ and $l$. 
Then the correctness of the program $\mathit{QFT}[q[1:n]]$ in Example \ref{QFT-program} can be specified by the following Hoare triple:
\begin{equation}\label{eq-QFT-correct}\{1\leq n,A(j,1:n)\}\ \mathit{QFT}[q[1:n]]\ \{\mathit{true},B(j,1:n)\}\end{equation} for any classical bit array $j$. 
\end{exam}

The following lemma asserts that both partial and total correctness are invariant under the equivalence of quantum programs.
\begin{lem}\label{lem-equivalence} If $P_1\approx P_2$ (see Definition \ref{def-equiv}), then for any precondition $(\varphi,A)$ and postcondition $(\psi,B)$, it holds that 
\begin{align}\label{lem-eq1}&\models_\mathit{par}\{\varphi,A\}\ P_1\ \{\psi,B\}\ \mbox{iff}\ \models_\mathit{par}\{\varphi,A\}\ P_2\ \{\psi,B\},\\
\label{lem-eq2}&\models_\mathit{tot}\{\varphi,A\}\ P_1\ \{\psi,B\}\ \mbox{iff}\ \models_\mathit{tot}\{\varphi,A\}\ P_2\ \{\psi,B\}.
\end{align}
\end{lem}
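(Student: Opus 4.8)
The plan is to reduce the statement to the definitions of partial/total correctness and the definition of program equivalence, observing that both sides of the Hoare triple interpretations only refer to the denotational semantics $\llbracket P\rrbracket(\sigma,\rho)$ through the quantities $\sum\{|\tr(\sigma'(B)\rho') \mid (\sigma',\rho')\in\llbracket P\rrbracket(\sigma,\rho),\ \sigma'\models\psi,\ \sigma'(B)\ \mbox{well-defined}|\}$ and $\mathit{NT}(P)(\sigma,\rho)$. The key point is that if $P_1\approx P_2$, then for every input $(\sigma,\rho)$ we have $\llbracket P_1\rrbracket(\sigma,\rho)\approx\llbracket P_2\rrbracket(\sigma,\rho)$, i.e. $\llbracket P_1\rrbracket(\sigma,\rho)(\sigma')=\llbracket P_2\rrbracket(\sigma,\rho)(\sigma')$ for every classical state $\sigma'$, where $\Theta(\sigma')=\sum\{|\rho'\mid(\sigma',\rho')\in\Theta|\}$ as in equation (\ref{eq-Theta}). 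So the whole argument hinges on showing that the two relevant quantities are functions of the ``marginals'' $\sigma'\mapsto\llbracket P\rrbracket(\sigma,\rho)(\sigma')$ only.

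First I would handle the sum appearing on the right-hand side of (\ref{eq-total})/(\ref{eq-partial}). Fix an input $(\sigma,\rho)$ with $\sigma\models\varphi$ and $\sigma(A)$ well-defined. For a fixed classical state $\sigma'$, the contribution of all pairs $(\sigma',\rho')\in\llbracket P\rrbracket(\sigma,\rho)$ to the sum is, by linearity of trace, exactly $\tr(\sigma'(B)\cdot\llbracket P\rrbracket(\sigma,\rho)(\sigma'))$ when $\sigma'\models\psi$ and $\sigma'(B)$ is well-defined, and $0$ otherwise. Summing over the (countably many) classical states $\sigma'$, the whole right-hand sum for $P_1$ equals that for $P_2$, because $\llbracket P_1\rrbracket(\sigma,\rho)(\sigma')=\llbracket P_2\rrbracket(\sigma,\rho)(\sigma')$ for every $\sigma'$ by $P_1\approx P_2$. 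This immediately gives (\ref{lem-eq1}) restricted to the total-correctness inequality shape, and (\ref{lem-eq2}) once we also deal with $\mathit{NT}$.

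Next I would treat $\mathit{NT}(P)(\sigma,\rho)$. By its defining equation (\ref{def-NT}), $\mathit{NT}(P)(\sigma,\rho)=\tr(\rho)-\sum\{|\tr(\rho')\mid(\sigma',\rho')\in\llbracket P\rrbracket(\sigma,\rho)|\}$, and the subtracted sum is again $\sum_{\sigma'}\tr(\llbracket P\rrbracket(\sigma,\rho)(\sigma'))$ by the same linearity-of-trace regrouping. Hence $\mathit{NT}(P_1)(\sigma,\rho)=\mathit{NT}(P_2)(\sigma,\rho)$ whenever $P_1\approx P_2$. Combining this with the previous paragraph, both inequalities (\ref{eq-total}) and (\ref{eq-partial}) have numerically identical right-hand sides for $P_1$ and $P_2$ (and identical left-hand sides, since the left side $\tr(\sigma(A)\rho)$ does not involve $P$ at all), so each of $\models_\mathit{par}$ and $\models_\mathit{tot}$ holds for $P_1$ iff it holds for $P_2$, proving (\ref{lem-eq1}) and (\ref{lem-eq2}).

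The only mildly delicate point—and the step I expect to require the most care—is the interchange of summations: the multiset $\llbracket P\rrbracket(\sigma,\rho)$ may be infinite (e.g. from a \textbf{while} loop with unboundedly many execution paths), so I would note that all terms $\tr(\rho')$ are nonnegative and bounded in total by $\tr(\rho)\le 1$ (Proposition \ref{trace-decrease}), and likewise $0\le\tr(\sigma'(B)\rho')\le\tr(\rho')$ since $0\le\sigma'(B)\le I$ when well-defined; absolute convergence then justifies regrouping the sum by the value of $\sigma'$ and replacing $\sum\{|\rho'\mid(\sigma',\rho')\in\Theta|\}$ by $\Theta(\sigma')$. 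With that justification in place, the argument is purely a rearrangement, and no induction on the structure of $P_1$ or $P_2$ is needed—the result follows directly from $P_1\approx P_2$ and the definitions.
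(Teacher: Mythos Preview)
Your proposal is correct and follows essentially the same approach as the paper's proof: regroup the multiset sums in (\ref{eq-total}), (\ref{eq-partial}) and (\ref{def-NT}) by the classical output state $\sigma'$, use linearity of trace to collapse each $\sigma'$-fibre to $\tr\bigl(\sigma'(B)\cdot\llbracket P\rrbracket(\sigma,\rho)(\sigma')\bigr)$ (respectively $\tr\bigl(\llbracket P\rrbracket(\sigma,\rho)(\sigma')\bigr)$), and then invoke $P_1\approx P_2$ to conclude these marginals coincide. Your explicit remark justifying the rearrangement via absolute convergence (nonnegative terms with total bounded by $\tr(\rho)$ from Proposition~\ref{trace-decrease}) is a point the paper leaves implicit, so if anything your write-up is slightly more careful on that step.
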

\begin{proof} By definitions of program equivalence and correctness and the linearity of trace (see \ref{proof-equivalence}).
\end{proof}

\subsection{Proof System}

Now we can present the proof system of our quantum Hoare logic with classical variables, denoted $\mathit{QHL}^+$. The axioms and proof rules for partial correctness are presented in Table \ref{proof-system}. The axioms and proof rules for total correctness are the same except that the rule for while-loop is changed to the rule (Rule-Loop-tot) in Table \ref{proof-system-1}. We write $\mathit{QHL}^+_\mathit{par}$ and $\mathit{QHL}^+_\mathit{tot}$ for the groups of axioms and proof rules for partial and total correctness, respectively.

\begin{table*}[!htbp]
\begin{equation*}\begin{split}&(\mbox{\textit{Axiom-Ski}})\quad\ \{\varphi,A\}\ \mathbf{skip}\ \{\varphi,A\}\\ &(\mbox{\textit{Axiom-Ass}})\quad\ \{\varphi[e/x],A[e/x]\}\ x:=e\ \{\varphi,A\}\\ 
&(\mbox{\textit{Axiom-Init}})\quad \left\{\varphi,F_B[q](A)\right\}\ q:=|0\rangle\ \{\varphi,A\}\\ 
&(\mbox{\textit{Axiom-Uni}})\quad 
\left\{\varphi,F_U(\overline{t})[\overline{q}](A)\right\}\ U(\overline{t})[\overline{q}]\ \{\varphi,A\}\\ 
&(\mbox{\textit{Axiom-Meas}})\quad \frac{y\notin\mathit{free}(\varphi)\cup\mathit{cv}(A)\cup\{x\}}{\left\{\varphi[y/x],F_M(y)[\overline{q}](A[y/x])\right\}\ x:=M[\overline{q}]\ \{\varphi\wedge x=y,A\}}\\ 
&(\mbox{\textit{Rule-Seq}})\qquad\ \ \ \frac{\{\varphi,A\}\ P_1\ \{\psi,B\}\qquad \{\psi,B\}\ P_2\ \{\theta,,C\}}{\{\varphi,A\}\ P_1;P_2\ \{\theta,C\}}\\
&(\mbox{\textit{Rule-Cond}})\qquad \frac{\{\varphi\wedge b,A\}\ P_1\ \{\psi,B\}\qquad \{\varphi\wedge\neg b, A\}\ P_0\ \{\psi,B\}}{\{\varphi, A\}\ \mathbf{if}\ b\ \mathbf{then}\ P_1\ \mathbf{else}\ P_0\ \{\psi,B\}}\\
&(\mbox{\textit{Rule-Loop-par}})\quad \frac{\{\varphi\wedge b,A\}\ P\ \{\varphi,A\}}{\{\varphi,A\}\ \mathbf{while}\ b\ \mathbf{do}\ P\ \{\varphi\wedge\neg b,A\}}\\
&(\mbox{\textit{Rule-Conseq}})\qquad\ \ \ \frac{(\varphi^\prime, A^\prime)\models (\varphi,A)\qquad \{\varphi,A\}\ P\ \{\psi,B\}\qquad (\psi,B)\models (\psi^\prime,B^\prime)}{\{\varphi^\prime,A^\prime\}\ P\ \{\psi^\prime,B^\prime\}}\\
&(\mbox{\textit{Rule-Accum1}})\ \frac{\begin{array}{cc}\{\varphi,A_i\}\ P\ \{\psi_i,B\}\ \mbox{for every}\ i\qquad (\forall i_1,i_2)\left(i_1\neq i_2\rightarrow\neg(\psi_{i_1}\wedge\psi_{i_2})\right)\\ \overline{q}\cap\mathit{qv}(P)=\emptyset\qquad [\mathit{var}(\overline{t})\cup\mathit{cv}(\overline{q})]\cap\mathit{change}(P)=\emptyset\qquad F(\overline{t})\propto F^\prime(\overline{t})\end{array}}{\left\{\varphi,F(\overline{t})[\overline{q}](\{A_i\})\right\}\ P\ \left\{\bigvee_{i=1}^n\psi_i,F^\prime(\overline{t})[\overline{q}](B)\right\}}\\
&(\mbox{\textit{Rule-Accum2}})\quad \frac{\begin{array}{cc}\{\varphi,A_i\}\ P\ \{\psi,B_i\}\ \mbox{for every}\ i\\ \overline{q}\cap\mathit{qv}(P)=\emptyset\qquad [\mathit{var}(\overline{t})\cup\mathit{cv}(\overline{q})]\cap\mathit{change}(P)=\emptyset\end{array}}{\left\{\varphi,F(\overline{t})[\overline{q}](\{A_i\})\right\}\ P\ \left\{\psi,F(\overline{t})[\overline{q}](\{B_i\})\right\}}
\end{split}\end{equation*}
\caption{Proof System $\mathit{QHL}^+_\mathit{par}$. In (Axiom-Init),  $B=\{|n\rangle\}$ is an orthonormal basis and $|0\rangle$ is a basis state in $B$. In (Axiom-Meas), $\mathit{free}(\varphi)$, $\mathit{cv}(A)$ stand for the set of free classical variables in classical first-order logical formula $\varphi$ and the set of classical variables in (the subscripts of) quantum predicate formula $A$, respectively. In (Rule-Accum1) and (Rule-Accum2), $\mathit{qv}(P)$ stands for the set of quantum variables in program $P$,
$\mathit{change}(P)$ for the set of classical variables that can be modified by program $P$, $\mathit{var}(\overline{t})$ for the set of (classical) variables in $\overline{t}$, and   
$\mathit{cv}(\overline{q})$ for the set of classical variables in the (possibly subscripted) quantum variables.}\label{proof-system}
\end{table*}

\begin{table*}[!htbp]
\begin{equation*}(\mbox{\textit{Rule-Loop-tot}})\qquad \frac{\begin{array}{ccc}\{\varphi\wedge b,A\}\ P\ \{\varphi,A\}\\
\{\varphi\wedge b\wedge t=z, I\}\ P\ \{t<z, I\}\\ \varphi\rightarrow t\geq 0
\end{array}}{\{\varphi,A\}\ \mathbf{while}\ b\ \mathbf{do}\ P\ \{\varphi\wedge\neg b,A\}}
\end{equation*}
\caption{Proof System $\mathit{QHL}^+_\mathit{tot}$. In (Rule-Loop-tot), $t$ is an integer expression, $z$ is an integer variable and $z\notin\mathit{free}(\varphi)\cup\mathit{var}(b)\cup\mathit{var}(t)\cup\mathit{cv}(P).$
The symbol $I$ stands for the identity quantum predicate on $\hs_{\mathit{qv}(P)}$.
}\label{proof-system-1}
\end{table*}

Most of the axioms and rules in our quantum Hoare logic $\mathit{QHL}^+_\mathit{par}$ and $\mathit{QHL}^+_\mathit{tot}$ look similar to the corresponding ones in classical Hoare logic.
But the following two ideas behind the design decision of our proof system are worth mentioning: \begin{itemize}
\item The substitution in quantum predicate formulas defined in Subsection \ref{sec-subst} are used in formulating (Axiom-Ass) and (Axiom-Meas), and Kraus operators on quantum predicates introduced in Subsections \ref{sec-pred-syn} and \ref{sec-pred-sem} are employed in formulating (Axiom-Init), (Axiom-Uni) and (Axiom-Meas). 
\item The design idea of (Axiom-Meas) is particularly interesting. Note that the measurement $M$ in the command $x:=M[\overline{q}]$ yields a probability distribution $\mathcal{D}=\{\mbox{Prob}[x=m]\}$ over possible outcomes $m$, 
where $\mbox{Prob}[x=m]$ stands for the probability that the measurement outcome is $m$. It was believed by the authors of \cite{FengY, Deng} that it is unavoidable to introduce a kind of substitution $[\mathcal{D}/x]$ of variable $x$ by probability distribution $\mathcal{D}$ in the precondition in order to properly formulate (Axiom-Meas). This will make this axiom much more complicated and harder to use in practical verification. In our logic $\mathit{QHL}^+$, this difficulty is circumvented by introducing equality $x=y$ in the postcondition, where $y$ is a fresh variable standing for a possible outcome. Thus, we only need to use the ordinary substitution $[y/x]$ of $x$ by a variable $y$ rather than by a probability distribution in the precondition.
Of course, there is a price for this simplification: the probabilities of different measurement outcomes stored in variable $y$ may need be accumulated in later reasoning, as discussed next. 
\item Since (Axiom-Meas) is designed so that different possible outcomes of a measurement can be treated separately, we need a mechanism to accumulated them together. For this purpose, (Rule-Accum1) is introduced in order to merge the postconditions for classical variables by the logical connective of disjunction. The formula $F(\overline{t})\propto F^\prime(\overline{t})$ in the premise of (Rule-Accum1) needs careful explanation. Here, $F(\overline{x})$ is a Kraus operator symbol of rank, say $k$ and $F^\prime(\overline{x})$ a Kraus operator symbol of rank $1$, and they have the same type, say $\hs_1\otimes ...\otimes\hs_n$. Given a classical structure and a quantum structure. Let $\sigma$ be a classical state. Then 
$\sigma\models F(\overline{t})\propto F^\prime(\overline{t})$ 
if for any $i\leq k$, we have: $$F_i^\dag(\sigma(\overline{t}))\cdot\rho\cdot F_i(\sigma(\overline{t}))\leq F^{\prime\dag}(\sigma(\overline{t}))\cdot\rho\cdot F^\prime(\sigma(\overline{t}))$$ for all density operators $\rho$ on $\hs_1\otimes ...\otimes\hs_n$. A special case of (Rule-Accum1) is the following rule:    
$$(\mbox{Rule-Convex1})\quad \frac{\begin{array}{cc}\{\varphi,A_i\}\ P\ \{\psi_i,B\}\ \mbox{for every}\ i\quad 0\leq p_i\ \mbox{for every}\ i\quad \sum_ip_i\leq 1\\ 
(\forall i_1,i_2)\left(i_1\neq i_2\rightarrow\neg(\psi_{i_1}\wedge\psi_{i_2})\right)
\end{array}}{\left\{\varphi,\sum_ip_iA_i\right\}\ P\ \left\{\bigvee_i\psi_i,\max_i p_i\cdot B\right\}}$$
\item As a complement to (Rule-Accum1), we propose (Rule-Accum2), which enables the combination of different preconditions and postconditions for quantum variables using a Kraus operator. A special case of (Rule-Accum2) is the following rule, which handles convex combinations of quantum predicates in the preconditions and postconditions, weighted according to a sub-probability distribution:
$$(\mbox{Rule-Convex2})\quad \frac{\{\varphi,A_i\}\ P\ \{\psi,B_i\}\ \mbox{for every}\ i\quad 0\leq p_i\ \mbox{for every}\ i\quad \sum_ip_i\leq 1}{\left\{\varphi,\sum_ip_iA_i\right\}\ P\ \left\{\psi,\sum_ip_iB_i\right\}}$$
\end{itemize}

It must be pointed out that there are some fundamental differences (e.g. nondeterminism caused by quantum measurements) between the semantics behind $\mathit{QHL}^+$ and classical Hoare logic, although their axioms and proof rules look similar. This should already been noticed through the discussions in the previous sections. It can be seen even more clearly through reading the proofs of the soundness theorem in \ref{proof-sound} and the (relative) completeness theorem in \cite{Ying24+}. 

A key advantage of the logic $\mathit{QHL}^+$ presented above over previous approaches (e.g. \cite{FengY}) is its ability to seamlessly integrate classical first-order logic into the specification and verification of quantum programs with classical variables (see also the discussions at the end of Subsection \ref{subsec-assertions}). The readers will recognise the value of this feature through their practical applications of $\mathit{QHL}^+$.

\begin{exam}\label{QFT-proof} We prove the correctness (\ref{eq-QFT-correct}) of the circuit realisation of quantum Fourier transform in the logic $\mathit{QHL}^+$. Here, we only present an outline of the proof, and some details are given in \ref{app-QFT}. From this example we can see that introducing the data structure of quantum array often enables us to verify quantum programs in a more convenient and elegant way. 

(1) We introduce atomic quantum predicate $$B^\ast(j,k:l)=|\mathit{QFT}^\ast(j,k:l)\rangle_{q[k:l]}\langle\mathit{QFT}^\ast(j,k:l)|$$ where formal quantum state $\mathit{QFT}^\ast(j,k:l)$ is inductively defined as follows:\begin{align*}\begin{cases}
|\mathit{QFT}^\ast(j,k:l)\rangle=\frac{1}{\sqrt{2}}\left(|0\rangle+e^{2\pi i 0.j[l]}|1\rangle\right)\ \mbox{if}\ k=l,\\ \mathit{QFT}^\ast(j,r-1:l)=\frac{1}{\sqrt{2}}\left(|0\rangle+e^{2\pi i 0.j[l-r:l]}|1\rangle\right)\otimes |\mathit{QFT}^\ast(j,r:l)\ \mbox{for}\ k+1\leq r\leq l. \end{cases}
\end{align*} Then using (Axiom-Uni) we can show that 
\begin{equation}\label{proof-reverse}\left\{\mathit{true},B^\ast(j,1:n)\right\}\ \mathit{Reverse}[q[1:n]]\ \{\mathit{true}, B(j,1:n)\}
\end{equation} (The detailed proof of (\ref{proof-reverse}) is deferred to \ref{app-QFT}).  

(2) Let $$|\varphi\rangle=\frac{1}{\sqrt{2}}|0\rangle+e^{2\pi i0.j[m]}|1\rangle).$$ Obviously, we have: 
$$\left\{m<n,|A[m]\rangle_{q[m]}\langle A[m]|\right\}\ H[q[m]]\ \left\{m<n,|\varphi\rangle_{q[m]}\langle\varphi|\right\}.$$ Since $A(j,m:n)=|A[m]\rangle_{q[m]}\langle A[m]|\otimes A(j,m+1:n)$, it holds that 
\begin{equation}\label{inductx-QFT}\{m<n,A(j,m:n)\}\ H[q[m]]\ \left\{m<n, |\varphi\rangle_{q[m]}\langle\varphi|\otimes A(j,m+1:n)\right\}.
\end{equation}

(3) Let $$|\psi\rangle=\frac{1}{\sqrt{2}}|0\rangle+e^{2\pi i0.j[m:n]}|1\rangle).$$ We can prove:\begin{equation}\label{inductxx-QFT}\begin{split}&\left\{m<n,|\varphi\rangle_{q[m]}\langle\varphi|\otimes A(j,m+1:n)\right\}\ \mathit{CR}[q[m:n]]\\ &\qquad\qquad\qquad\qquad\qquad\left\{m<n,|\psi\rangle_{q[m]}\langle\psi|\otimes A(j,m+1:n)\right\}\end{split}\end{equation} 
by induction on the length $n-m$ of quantum array $q[m:n]$ (The detailed proof of (\ref{inductxx-QFT}) can be found in \ref{app-QFT}). Using (Rule-Seq), a combination of (\ref{inductx-QFT}) and (\ref{inductxx-QFT}) yields
\begin{equation}\label{inductxxx-QFT}\{m<n,A(j,m:n)\}\ H[q[m]];\mathit{CR}[q[m:n]]\ \left\{m<n,|\psi\rangle_{q[m]}\langle\psi|\otimes A(j,m+1:n)\right\}.\end{equation} 

(4) Now we prove:\begin{equation}\label{induct-QFT}\{m\leq n,A(j,m:n)\}\ \mathit{QFT}^\ast[q[m:n]]\ \left\{\mathit{true},B^\ast(j,m:n)\right\}\end{equation}
by induction on the length $n-m$ of quantum array $q[m:n]$. First, it is obvious by definitions of $A(j,m:n)$ and $B^\ast(j,m:n)$ that 
\begin{equation}\label{induct0-QFT}\{m=n,A(j,m:n)\}\ H[q[n]]\ \left\{m=n,B^\ast(j,m:n)\right\}\models \left\{\mathit{true},B^\ast(j,m:n)\right\}.\end{equation} For the case of $m<n$, by the induction hypothesis we obtain: 
\begin{equation}\label{induct1-QFT}\{m+1\leq n,A(j,m+1:n)\}\ \mathit{QFT}^\ast[q[m+1:n]]\ \left\{\mathit{true},B^\ast(j,m+1:n)\right\}.\end{equation}
 By the inductive definition of $B^\ast(j,m:n)$ we have: 
$$B^\ast(j,m:n)=|\psi\rangle_{q[m]}\langle\psi|\otimes B^\ast(j,m+1:n).$$
Consequently, it holds that \begin{equation}\label{induct2-QFT}\left\{m<n,|\psi\rangle_{q[m]}\langle\psi|\otimes A(j,m+1:n)\right\}\ \mathit{QFT}^\ast[q[m+1:n]]\ \left\{\mathit{true},B^\ast(j,m:n)\right\}.\end{equation}
Then we can use (Rule-Seq) to combine (\ref{inductxxx-QFT}) and (\ref{induct2-QFT}) and thus derive: 
\begin{equation}\label{induct3-QFT}\{m<n,A(j,m:n)\}\ H[q[m]];\mathit{CR}[q[m:n]];\mathit{QFT}^\ast[q[m+1:n]]\ \{\mathit{true},B^\ast(j,m:n)\}.\end{equation}
Therefore, employing (Rule-Cond), we obtain:
\begin{equation}\label{final-QFT}
\{m\leq n,A(j,m:n)\}\ \mathit{QFT}^\ast[q[m:n]]\ \{\mathit{true},B^\ast(j,m:n)\}.\end{equation} from (\ref{induct0-QFT}), (\ref{induct3-QFT}) and the inductive definition of $\mathit{QFT}^\ast[q[m:n]]$. 

(5) Finally, using (Rule-Seq), we obtain (\ref{eq-QFT-correct}) from (\ref{proof-reverse}) and (\ref{final-QFT}). Thus, the correctness of the circuit implementation (Table \ref{QFT-table}) of quantum Fourier transform is verified. 
\end{exam}

\section{Soundness Theorem}\label{section-sound}

In this section, we present the soundness theorem of our quantum Hoare logic $\mathit{QHL}^+$.

\begin{thm}[Soundness]\label{thm-sound} For any quantum program $P$, classical first-order logical formulas $\varphi,\psi$, and quantum predicate formulas $A,B$, we have:
\begin{align*}&\vdash_{\mathit{QHL}^+_\mathit{par}} \{\varphi,A\}\ P\ \{\psi,B\}\ {\rm implies}\ \models_\mathit{par}\{\varphi,A\}\ P\ \{\psi,B\};\\
&\vdash_{\mathit{QHL}^+_\mathit{tot}}\{\varphi,A\}\ P\ \{\psi,B\}\ {\rm implies}\ \models_\mathit{tot}\{\varphi,A\}\ P\ \{\psi,B\}.
\end{align*}
\end{thm}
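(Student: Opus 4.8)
The plan is to prove soundness by structural induction on the quantum program $P$, establishing simultaneously that each axiom is valid and each proof rule preserves validity, for both partial and total correctness. The key semantic tool throughout is the \emph{Structural Representation of Denotational Semantics} (Proposition~\ref{prop-structure}), which lets us compute $\llbracket P\rrbracket(\sigma,\rho)$ compositionally, together with the linearity of the trace and the basic fact that for partial density operators $\rho'$ obtained as $E\rho E^\dag$ one has $\tr(A\cdot E\rho E^\dag)=\tr(E^\dag A E\cdot\rho)$. The latter identity is exactly what connects the "backward" action of a Kraus operator symbol on a quantum predicate (as in Definition~\ref{def-sig}(4)) with the "forward" action on the quantum state in the operational semantics, and it is the crux of the atomic axioms.

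First I would dispatch the easy atomic cases. For (Axiom-Ski) validity is immediate since $\llbracket\mathbf{skip}\rrbracket(\sigma,\rho)=\{|(\sigma,\rho)|\}$. For (Axiom-Ass) I would invoke the Substitution Lemma for classical formulas together with Lemma~\ref{lem-sub} (substitution for quantum predicates): $\sigma\models\varphi[e/x]$ iff $\sigma[x:=\sigma(e)]\models\varphi$, and $\sigma(A[e/x])=\sigma[x:=\sigma(e)](A)$, so the single output state $(\sigma[x:=\sigma(e)],\rho)$ carries exactly the required trace. For (Axiom-Init) and (Axiom-Uni) I would unfold the interpretations of $F_B$ and $F_U$ (namely $\{|0\rangle\langle n|\}$ and $\{U^\dag\}$ respectively) and check that $\sigma(F_B[q](A))$ and $\sigma(F_U(\overline t)[\overline q](A))$ are precisely the pullbacks of $\sigma(A)$ along the initialisation superoperator and along $U_\sigma(\cdot)U_\sigma^\dag$; here one must also track the well-definedness side conditions ($\sigma\models\mathit{Dist}(\overline q)$) so that the empty-output case of Proposition~\ref{prop-structure}(4)--(5) is handled, but in that case $\tr(\sigma(A)\rho)$ never even gets a chance to fail the inequality because the premise requires $\sigma(A)$ well-defined.

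The main obstacle, as the paper itself signals, is (Axiom-Meas). Here the subtlety is that the command $x:=M[\overline q]$ produces \emph{several} output states $(\sigma[x:=m],M_m^\sigma\rho(M_m^\sigma)^\dag)$, one per outcome $m$, and the postcondition $(\varphi\wedge x=y, A)$ with $y$ fresh must "select" exactly one of them in the sum on the right-hand side of inequality~(\ref{eq-partial}). The plan is: fix an input $(\sigma,\rho)$ with $\sigma\models\varphi[y/x]$ and $\sigma(F_M(y)[\overline q](A[y/x]))$ well-defined; let $v=\sigma(y)$; then among the output classical states $\sigma[x:=m]$, exactly the one with $m=v$ satisfies $\varphi\wedge x=y$ (using freshness of $y$ so that $\sigma[x:=m](y)=\sigma(y)=v$, and that $\sigma\models\varphi[y/x]$ gives $\sigma[x:=v]\models\varphi$). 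For that single term, Lemma~\ref{lem-sub} gives $\sigma(A[y/x])=\sigma[x:=v](A)$, and the interpretation $F_M(v)=\{M_v^\dag\}$ combined with the trace identity yields $\tr(\sigma(F_M(y)[\overline q](A[y/x]))\rho)=\tr(M_v^\sigma{}^\dag\,\sigma[x:=v](A)\,M_v^\sigma\cdot\rho)=\tr(\sigma[x:=v](A)\cdot M_v^\sigma\rho M_v^\sigma{}^\dag)$. Since all other measurement branches contribute non-negatively (and here there is no non-termination, so $\mathit{NT}=0$), the partial- and total-correctness inequalities both follow with equality on the selected branch and slack elsewhere. One must be careful that the freshness condition $y\notin\mathit{free}(\varphi)\cup\mathit{cv}(A)\cup\{x\}$ is used in exactly three places: to evaluate $y$ in the updated state, to ensure $A[y/x]$ is a genuine substitution not capturing anything, and to ensure $\varphi$ does not secretly depend on $x$ through $y$.

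The compositional rules are then routine inductions. (Rule-Seq) uses Proposition~\ref{prop-structure}(6) and chases the inequality through the intermediate assertion $(\psi,B)$, summing over the intermediate multiset; (Rule-Cond) splits on $\sigma\models b$ versus $\sigma\models\neg b$ using Proposition~\ref{prop-structure}(7) and the strengthened preconditions $\varphi\wedge b$, $\varphi\wedge\neg b$; (Rule-Conseq) is immediate from Definition~\ref{cq-entail} of cq-assertion entailment and monotonicity of $\tr(\cdot\,\rho)$ in the Löwner order. For (Rule-Loop-par) I would prove by induction on $n$ that the invariant $(\varphi,A)$ is preserved along any execution path $(\sigma_0,\rho_0)\stackrel{P}{\Rightarrow}\cdots\stackrel{P}{\Rightarrow}(\sigma_n,\rho_n)$ with $\sigma_i\models b$ for $i<n$, then sum over all terminating paths described by Proposition~\ref{prop-structure}(8); the partial-correctness inequality tolerates the lost mass of non-terminating paths via the $\mathit{NT}$ term. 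For (Rule-Loop-tot) the additional premises about the ranking term $t$ give a bound on path length that, by Proposition~\ref{trace-decrease} and an argument that the surviving trace cannot stall, forces $\mathit{NT}(\mathbf{while}\ b\ \mathbf{do}\ P)(\sigma,\rho)=0$, upgrading partial to total correctness. Finally (Rule-Accum1) and (Rule-Accum2) are handled by pushing the sum-over-$i$ inside the trace: using $\overline q\cap\mathit{qv}(P)=\emptyset$ the Kraus operator on $\overline q$ commutes with $\llbracket P\rrbracket$, so $\sigma(F(\overline t)[\overline q](\{A_i\}))$ evolves to $\sum_i F_i^\sigma\,\sigma'(B_i)\,F_i^{\sigma\dag}$ componentwise; for (Rule-Accum1) the mutual exclusivity $(\forall i_1,i_2)(i_1\neq i_2\to\neg(\psi_{i_1}\wedge\psi_{i_2}))$ ensures each output classical state satisfies at most one $\psi_i$, so the branch sums do not interfere, and $F(\overline t)\propto F'(\overline t)$ is precisely the inequality needed to dominate $\sum_i F_i^\sigma\rho' F_i^{\sigma\dag}$ by $F'^\sigma\rho'F'^{\sigma\dag}$ under the trace against $B$. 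In all compositional cases the total-correctness versions differ only in that the $\mathit{NT}$ term is absent and equality/termination must be tracked; no new ideas are needed there.
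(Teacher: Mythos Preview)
Your proposal is largely correct and tracks the paper's approach closely for the atomic axioms, (Rule-Seq), (Rule-Cond), (Rule-Conseq), (Rule-Loop-par), and the two accumulation rules. The one place with a genuine gap is (Rule-Loop-tot).

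You propose to show $\mathit{NT}(\mathbf{while}\ b\ \mathbf{do}\ P)(\sigma,\rho)=0$ and then upgrade partial to total correctness. This does not work: the premises of the rule are trace inequalities \emph{against $A$}, whereas $\mathit{NT}$ is defined via the full trace $\tr(\rho)$. If $A$ is the zero predicate, every premise is vacuously satisfied regardless of whether the loop terminates, so $\mathit{NT}$ need not vanish; the conclusion still holds in that case, but this shows your proposed mechanism cannot be the right one. What the paper does instead is track the $A$-weighted mass that is still inside the loop: using total correctness of the body (so with no $\mathit{NT}$ slack) one proves by induction on $n$ that
\[
\tr(\sigma(A)\rho)\ \le\ \Lambda_n + \Delta_n^\ast,
\]
where $\Lambda_n$ collects paths that have exited by step $n$ and $\Delta_n^\ast$ is the $A$-weighted mass of paths still cycling with $\sigma_i\models\varphi\wedge b$ for all $i\le n$. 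The remaining task is to prove $\lim_n\Delta_n^\ast=0$. Here the ranking premise $\{\varphi\wedge b\wedge t=z,A\}\ P\ \{t<z,A\}$, the premise $\varphi\to t\ge 0$, and the standing finiteness assumption on measurement outcome types are all used: assuming $\lim_n\Delta_n^\ast>0$ one constructs a finitely branching tree of execution prefixes along which $\sigma(t)$ strictly decreases yet stays nonnegative, and K\"onig's lemma produces an infinite strictly descending chain of nonnegative integers, a contradiction. Your phrase ``a bound on path length \ldots\ surviving trace cannot stall'' points in the right direction but misses this combinatorial step, which is the technical heart of the total-correctness case.

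A minor remark: you open with ``structural induction on the quantum program $P$'', but the argument you actually describe (and the correct one) is induction on the derivation; (Rule-Conseq), for instance, does not decompose $P$.
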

\begin{proof} The basic idea of the proof is the same as usual: we prove that every axiom in $\mathit{QHL}^+$ is valid, and the validity is preserved by each rule in $\mathit{QHL}^+$. But some calculations in the proof are quite involved. So, for readability, we defer the lengthy details of the proof into \ref{proof-sound}.
 \end{proof}
 
 The above soundness theorem warrants that if the correctness of a quantum program written in the programming language $\mathit{qWhile}^+$ can be proved in logic $\mathit{QHL}_\mathit{par}^+$ (respectively, $\mathit{QHL}_\mathit{tot}^+$), then it is indeed partially (respectively, totally) correct. Conversely, we can establish a (relative) completeness of our logic, which affirms that with a certain assumption about the expressive power of the assertion language, partial (respectively, total) correctness of quantum programs written in $\mathit{qWhile}^+$ can always be proved in $\mathit{QHL}_\mathit{par}^+$ (respectively, $\mathit{QHL}_\mathit{tot}^+$). But the presentation and proof of the (relative) completeness theorem require much more technical preparations and theoretical treatments. So, we postpone them to a companion paper \cite{Ying24+}. 

\section{Conclusion}\label{sec-concl}

In this paper, we develop a Hoare-style logic for quantum programs with classical variables, termed $\mathit{QHL}^+$. It incorporates the following key features: \begin{enumerate}\item[(i)] support for quantum arrays and parameterised quantum gates, enabling more versatile quantum programming; and \item[(ii)] a syntax for quantum predicates, allowing the construction of complex quantum predicate formulas from atomic predicates using logical connectives. These quantum predicate formulas can also be parameterised by classical variables. Consequently, preconditions and postconditions can be expressed as pairs consisting of a classical first-order logical formula (describing properties of classical variables) and a quantum predicate formula (describing properties of quantum variables).\end{enumerate}
The first feature enhances the convenience of quantum programming, while the second facilitates the derivation of a simple proof system for quantum programs through minimal modifications to classical Hoare logic. As a result, $\mathit{QHL}^+$ allows for more intuitive specification of quantum program correctness and enables more effective verification compared to previous approaches. In particular, classical first-order logic can be seamlessly integrated with $\mathit{QHL}^+$ for verifying quantum programs with  classical variables.

To conclude this paper, we would like to point out several topics for further research: \begin{enumerate}\item Formalisation of the logic $\mathit{QHL}^+$ for quantum programs with classical variables in proof assistants such as Coq, Isabelle/HOL, or Lean. 

\item Applications to the  verification of quantum algorithms, QEC (quantum error correction) codes and quantum communication and cryptographic protocols, leveraging the tools developed in point (1).

\item Extension of $\mathit{QHL}^+$ to parallel and distributed quantum programs \cite{Feng-D, Ying18} as well as its applications to relational reasoning \cite{Unruh19a, Barthe, Yangjia, Yu-cav}, local reasoning \cite{QSL1, QSL2}, and reasoning about incorrectness \cite{Yu-inc} of quantum programs.  

\item Broader applications of the quantum programming language $\mathit{qWhile}^+$ and the assertion language defined in Section \ref{sec-assert}, beyond verification. For instance, these could be utilised to develop more advanced techniques of abstract interpretation \cite{Yu-ab, abs-1, abs-2}, refinement \cite{QbC, Feng-refine} and symbolic execution \cite{Fang24, QSE1, QSE2, QSE3} for quantum programs. 
\end{enumerate}

\smallskip\

\textbf{Acknowledgement}: The author is very grateful to the referees for their invaluable comments and suggestions which help to improve the presentations significantly. The author also would like to thank the members of Tsinghua's quantum programming group for their valuable feedback, particularly for identifying several errors in the original version of this paper.

After this paper was published in \textit{Information and Computation} 309 (2026), Yuanjie Ren, Jinzheng Li, and Yidi Qi used their AI-agentic autoformalization framework to formalize the logic $\mathit{QHL}^+$ in Lean. During this process, a gap was found in the proof of the soundness theorem (Theorem \ref{thm-sound}) for the case of total correctness. The author is particularly grateful to them for bringing this gap to attention. In this revision, the gap has been fixed.

\appendix

\section{Appendix: Deferred Proofs}\label{app-proofs}

\subsection{Proof of Lemma \ref{lem-sub}}\label{proof-substitute}

We prove the substitution lemma by induction on the structure of quantum predicate formula $A$. 

{\vskip 3pt}

\textbf{Case 1}: $A=K(\overline{t})[\overline{q}]$. Then $A[e/x]=K(\overline{t}[e/x])[\overline{q}[e/x]]$. We assume that $$\overline{q}=q_1[s_{11},...,s_{1n_1}],...,q_k[s_{k1},...,s_{kn_k}].$$ Then 
\begin{align*}&\sigma(A[e/x])\ \mbox{is\ well-defined}\Leftrightarrow\sigma\models\mathit{Dist}(\overline{q}[e/x])\\ 
&\Leftrightarrow \sigma\models (\forall i,j\leq k)\left(q_i=q_j\rightarrow (\exists l\leq n_i)\left(s_{il}[e/x]\neq s_{jl}[e/x]\right)\right)\qquad ({\rm by\ Eq.(\ref{distinctness})})\\
&\qquad \ \ \ \ \ \ =((\forall i,j\leq k)\left(q_i=q_j\rightarrow (\exists l\leq n_i)\left(s_{il}\neq s_{jl}\right)\right))[e/x]\\
&\Leftrightarrow \sigma[x:=\sigma(e)]\models (\forall i,j\leq k)\left(q_i=q_j\rightarrow (\exists l\leq n_i)\left(s_{il}\neq s_{jl}\right)\right)\\ 
&\qquad\qquad\qquad\qquad\qquad\qquad\ \ \ \ \ \ (\mbox{by\ the\ substitution\ lemma\ in\ first-order\ logic})\\
&\Leftrightarrow \sigma[x:=\sigma(e)]\models\mathit{Dist}(\overline{q})\qquad\ \  ({\rm by\ Eq.(\ref{distinctness})})\\
&\Leftrightarrow \sigma[x:=\sigma(e)](A)\ \mbox{is\ well-defined}.
\end{align*}

Now assume that $\sigma(A[e/x])$ is well-defined. It follows from the substitution lemma for classical expressions that \begin{equation}\label{app-sub}\sigma[\overline{t}[e/x]]=\sigma[x:=\sigma(e)](\overline{t}).\end{equation} On the other hand, by the same lemma, it is easy to show that \begin{equation}\label{app-sub1}\sigma(\overline{q}[e/x])=\sigma[x:=\sigma(e)](\overline{q}).\end{equation} Therefore, by Definition \ref{def-sig}, we obtain:
\begin{align*}\sigma(A[e/x])&=K(\sigma(\overline{t}[e/x]))_{\sigma(\overline{q}[e/x])}\\
&=K(\sigma[x:=\sigma(e)](\overline{t}))_{\sigma[x:=\sigma(e)](\overline{q})}\\ 
&=\sigma[x:=\sigma(e)](A). 
\end{align*}

{\vskip 3pt}

\textbf{Case 2}: $A=\neg B$. Then $A[e/x]=\neg(B[e/x])$ and \begin{align*}\sigma(A[e/x])&=I-\sigma(B[e/x])\qquad\qquad\qquad\qquad ({\rm by\ Definition\ \ref{def-sig}})\\
&=I-\sigma[x:=\sigma(e)](B)\qquad ({\rm by\ the\ induction\ hypothesis\ on}\ B)\\ &=\sigma[x:=\sigma(e)](A).\qquad\qquad\qquad\qquad ({\rm by\ Definition\ \ref{def-sig}})
\end{align*}

{\vskip 3pt}

\textbf{Case 3}: $A=F(\overline{t})[\overline{q}](\{B_i\})$. Then $$A[e/x]=F(\overline{t}[e/x])[\overline{q}[e/x]](B_i[e/x])$$ and 
\begin{align*}&\sigma(A[e/x])=\sum_i F_i(\sigma(\overline{t}[e/x]))_{\sigma(\overline{q}[e/x])}\cdot \sigma(B_i[e/x])\cdot F_i^\dag(\sigma(\overline{t}[e/x]))_{\sigma(\overline{q}[e/x])}\\ &\qquad\qquad\qquad\qquad\qquad\qquad\qquad\qquad\qquad\qquad\qquad\qquad\qquad ({\rm by\ Definition\ \ref{def-sig}})\\
&=\sum_i F_i(\sigma[x:=\sigma(e)](\overline{t}))_{\sigma[x:=\sigma(e)](\overline{q})}\cdot \sigma[x:=\sigma(e)](B_i)\cdot F_i^\dag(\sigma[x:=\sigma(e)](\overline{t}))_{\sigma[x:=\sigma(e)](\overline{q})}\\ 
&\qquad\qquad\qquad\qquad\qquad\qquad ({\rm by\ Eqs. (\ref{app-sub})\ and\ (\ref{app-sub1})\ and\ the\ induction\ hypothesis\ on}\ B_i)\\ 
&=\sigma[x:=\sigma(e)](A). \qquad\qquad\qquad\qquad\qquad\qquad\qquad ({\rm by\ Definition\ \ref{def-sig}})
\end{align*}

\subsection{Proof of Lemma \ref{lem-equivalence}}\label{proof-equivalence}

Suppose that $P_1\approx P_2$. We only prove (\ref{lem-eq1}); (\ref{lem-eq2}) can be proved in a similar way. Assume cq-state $(\sigma,\rho)$ satisfies that $\sigma\models\varphi$ and $\sigma(A)$ is well-defined. Then for each $\sigma^\prime$, we set:
$$\Theta_i(\sigma^\prime)=\left\{|\rho^\prime\ |\ (\sigma^\prime,\rho^\prime)\in\llbracket P_i\rrbracket(\sigma,\rho)|\right\}$$ for $i=1,2$. By the definition of $P_1\approx P_2$, we have $\Theta_1(\sigma^\prime)=\Theta_2(\sigma^\prime)$. Then:
\begin{align*}\mathit{NT}(P_1)(\sigma,\rho)&=\tr(\rho)-\sum\left\{|\tr(\rho^\prime)\ |\ (\sigma^\prime,\rho^\prime)\in\llbracket P_1\rrbracket(\sigma,\rho)|\right\}\\ 
&=\tr(\rho)-\sum\left\{|\sum\left\{|\tr(\rho^\prime)\ |\ \rho^\prime\in \Theta_1(\sigma^\prime)|\right\}|\sigma^\prime\ \mbox{is a classical state}|\right\}\\ 
&=\tr(\rho)-\sum\left\{|\sum\left\{|\tr(\rho^\prime)\ |\ \rho^\prime\in \Theta_2(\sigma^\prime)|\right\}|\sigma^\prime\ \mbox{is a classical state}|\right\}\\ 
&=\tr(\rho)-\sum\left\{|\tr(\rho^\prime)\ |\ (\sigma^\prime,\rho^\prime)\in\llbracket P_2\rrbracket(\sigma,\rho)|\right\}\\ &=\mathit{NT}(P_2)(\sigma,\rho). 
\end{align*} 
Furthermore, let \textquotedblleft :WD\textquotedblright\ stand for \textquotedblleft is well-defined\textquotedblright. If $\models_\mathit{par}\{\varphi,A\}\ P_1\ \{\psi,B\}$, then:   
\begin{align*}\tr(\sigma(A)\rho)&\leq\sum\left\{|\tr(\sigma^\prime(B)\rho^\prime)\ |\ (\sigma^\prime,\rho^\prime)\in\llbracket P_1\rrbracket(\sigma,\rho), \sigma^\prime\models\psi\ \mbox{and}\ \sigma^\prime(B):WD|\right\}\\
&=\sum\left\{|\sum\left\{|\tr(\sigma^\prime(B)\rho^\prime)\ |\ \rho^\prime\in\Theta_1(\sigma^\prime)|\right\}\ |\ \sigma^\prime\models\psi\ \mbox{and}\ \sigma^\prime(B):WD|\right\}\\
&=\sum\left\{|\sum\left\{|\tr(\sigma^\prime(B)\rho^\prime)\ |\ \rho^\prime\in\Theta_2(\sigma^\prime)|\right\}\ |\ \sigma^\prime\models\psi\ \mbox{and}\ \sigma^\prime(B):WD|\right\}\\
&=\sum\left\{|\tr(\sigma^\prime(B)\rho^\prime)\ |\ (\sigma^\prime,\rho^\prime)\in\llbracket P_2\rrbracket(\sigma,\rho), \sigma^\prime\models\psi\ \mbox{and}\ \sigma^\prime(B):WD|\right\}
\end{align*} and $\models_\mathit{par}\{\varphi,A\}\ P_2\ \{\psi,B\}$. 

Conversely, we can prove that $\models_\mathit{par}\{\varphi,A\}\ P_2\ \{\psi,B\}$ implies $\models_\mathit{par}\{\varphi,A\}\ P_1\ \{\psi,B\}$. 

\subsection{Proof of Theorem \ref{thm-sound}}\label{proof-sound}

\subsubsection{The Case of Partial Correctness} We first prove the soundness of proof system $\mathit{QHL}^+_\mathit{par}$. It suffices to sow that every axiom in $\mathit{QHL}^+_\mathit{par}$ is valid in the sense of partial correctness, and the validity in the sense of partial correctness is preserved by each rule in $\mathit{QHL}^+_\mathit{par}$.

{\vskip 3pt}

(\textbf{Axiom-Ass}): We prove: $$\models_\mathit{par}\{\varphi[e/x],A[e/x]\}\ x:=e\ \{\varphi,A\}.$$ For any input state $(\sigma,\rho)\in\Omega$, by the transition rule (Ass), we have:
$$\llbracket x:=e\rrbracket(\sigma,\rho)=\{|(\sigma[x:=\sigma(e)],\rho)|\}.$$ 
If $\sigma\models\varphi[e/x]$ and $\sigma(A[e/x])$ is well-defined, then by the substitution lemma for classical first-order logic (Lemma 2.4 in \cite{Apt09}), it holds that $\sigma[x:=\sigma(e)]\models\varphi$. On the other hand, by Lemma \ref{lem-sub} (the substitution lemma for quantum predicates), we know that $\sigma[x:=\sigma(e)](A)$ is well-defined and $$\sigma[x:=\sigma(e)](A)=\sigma(A[e/x]).$$ Consequently, we have: 
\begin{align*}&\tr(\sigma(A[e/x])\rho)=\tr(\sigma[x:=\sigma(e)](A)\rho)\\
&=\sum\{|\tr(\sigma^\prime(A)\rho^\prime|(\sigma^\prime,\rho^\prime)\in\llbracket x:=e\rrbracket(\sigma,\rho), \sigma^\prime\models\varphi,\ \mbox{and}\ \sigma^\prime(A)\ \mbox{is\ well-defined}|\}. 
\end{align*}

{\vskip 3pt}

(\textbf{Axiom-Init}): We prove: $$\models_\mathit{par}\left\{\varphi, F_B[q](A)\right\}\ q:=|0\rangle\ \{\varphi,A\}.$$ For any input state $(\sigma,\rho)\in\Omega$, if $\sigma\models\varphi$ and $\sigma\left(F_B[q](A)\right)$ is well-defined, then according to the interpretation of modifier $F_B$ and clause (3) in Definition \ref{def-sig}, we know that $\sigma(A)$ is well-defined, and $$\sigma\left(F_B[q](A)\right)=\sum_n|n\rangle_{\sigma(q)}\langle 0|\sigma(A)|0\rangle_{\sigma(q)}\langle n|.$$
Therefore, we have: \begin{align}\label{proof-init-1}\tr\left(\sigma\left(F_B[q](A)\right)\rho\right)&=\sum_n\tr\left(|n\rangle_{\sigma(q)}\langle 0|\sigma(A)|0\rangle_{\sigma(q)}\langle n|\rho\right)\\
\label{proof-init-2}&=\sum_n\tr\left(\sigma(A)|0\rangle_{\sigma(q)}\langle n|\rho|n\rangle_{\sigma(q)}\langle 0|\right)\\
\label{proof-init-3}&=\tr\left(\sigma(A)\sum_n|0\rangle_{\sigma(q)}\langle n|\rho|n\rangle_{\sigma(q)}\langle 0|\right)\\ 
\label{proof-init-4}&=\sum\left\{|\tr(\sigma^\prime(A)\rho^\prime)|(\sigma^\prime,\rho^\prime)\in\llbracket q:=|0\rangle\rrbracket(\sigma,\rho)\ \mbox{and}\ \sigma^\prime(A)\ \mbox{is\ well-define}|\right\}
\end{align}
where (\ref{proof-init-1}) and (\ref{proof-init-3}) comes from the linearity of trace, (\ref{proof-init-2}) follows from the equality $\tr(AB)=\tr(BA)$, and (\ref{proof-init-4}) is from Proposition \ref{prop-structure}(2). 

{\vskip 3pt}

(\textbf{Axiom-Uni}): We prove: $$\models_\mathit{par}\left\{\varphi,F_U(\overline{t})[\overline{q}]\right\}\ U(\overline{t})[\overline{q}]\ \{\varphi,A\}.$$ 
For any input state $(\sigma,\rho)\in\Omega$, if $\sigma\models\varphi$ and $\sigma\left(F_U(\overline{t})[\overline{q}](A)\right)$ is well-defined, then by Definition \ref{def-sig}(3), we know that $\sigma\models\mathit{Dist}(\overline{q})$, $\sigma(A)$ is well-defined, and $$\sigma\left(F_U(\overline{t})[\overline{q}](A)\right)=U_\sigma^\dag \sigma(A)U_\sigma,$$ where $U_\sigma$ denotes the unitary operator $U(\sigma(\overline{t}))$ acting on quantum systems $\sigma(\overline{q})$. On the other hand, it follows from $\sigma\models\mathit{Dist}(\overline{q})$ and transition rule (Uni) that $$\llbracket U(\overline{t})[\overline{q}]\rrbracket(\sigma,\rho)=\left\{|(\sigma, U_\sigma\rho U_\sigma^\dag)|\right\}.$$ Therefore, we obtain:
\begin{align*}&\tr\left[\sigma(F_U(\overline{t})[\overline{q}](A))\rho\right]=\tr\left[(U_\sigma^\dag\sigma(A) U_\sigma)\rho\right]\\
&=\tr\left[\sigma(A)(U_\sigma\rho U_\sigma^\dag)\right]\\ 
&=\sum\{|\tr(\sigma^\prime(A)\rho^\prime|(\sigma^\prime,\rho^\prime)\in\llbracket U(\overline{t})[\overline{q}]\rrbracket(\sigma,\rho), \sigma^\prime\models\varphi,\ \mbox{and}\ \sigma^\prime(A)\ \mbox{is\ well-defined}|\}. 
\end{align*}

{\vskip 3pt}

(\textbf{Axiom-Meas}): For any classical variable $y\in\mathit{free}(\varphi)\cup\mathit{free}(A)\cup\{x\}$, we prove: $$\models_\mathit{par}\left\{\varphi[y/x],F_M(y)[\overline{q}](A[y/x])\right\}\ x:=M[\overline{q}]\ \{\varphi\wedge x=y,A\}.$$ For any input state $(\sigma,\rho)\in\Omega$, if $\sigma\models\varphi[y/x]$ and $\sigma(F_M(y)[\overline{q}](A[y/x]))$ is well-defined, then $\sigma\models\mathit{Dist}(\overline{q})$ and $\sigma(A[y/x])$ is well-defined. 
By transition rule (Meas), we have: 
\begin{equation}\llbracket x:=M[\overline{q}]\rrbracket(\sigma,\rho)=\left\{(\sigma[x:=m],M^\sigma_m\rho (M^\sigma_m)^\dag) | m\ {\rm ranges\ over\ all possible\ outcomes}\ m\right\}\end{equation} where $M_m^\sigma$ means that operator $M_m$ acting on quantum systems $\overline{q}$.

Now  Let us set $m=\sigma(y)\in T$ (the outcome type of measurement $M$), $\sigma_0=\sigma[x:=m]$ and $\rho_0=M_m^\sigma\rho(M_m^\sigma)^\dag$. Then it follows from the assumption $\sigma\models\varphi[y/x]$ and the substitution lemma for classical first-order logic (Lemma 2.4 in \cite{Apt09}) that 
\begin{equation}\label{eq-meas-sub}\sigma_0=\sigma[x:=\sigma(y)]\models\varphi.
\end{equation} On the other hand, we have: $$\sigma_0(x)=\sigma[x:=m](x)=m=\sigma(y)=\sigma_0(y).$$ Thus, $\sigma_0\models x=y$. Combining this with (\ref{eq-meas-sub}), we obtain $\sigma_0\models\varphi\wedge x=y$.     

Furthermore, we have: \begin{align*}
\sigma(F_M(y)[\overline{q}](A[y/x]))=&\left(M_{\sigma(y)}^\sigma\right)^\dag\sigma(A[y/x])M_{\sigma(y)}^\sigma\\ =&(M_m^\sigma)^\dag\sigma(A[y/x])M_m^\sigma\\ =&(M_m^\sigma)^\dag\sigma_0(A)M_m^\sigma
\end{align*} because it follows from the substitution lemma for quantum predicate formulas (Lemma \ref{lem-sub}) that $$\sigma(A[y/x])=\sigma[x:=\sigma(y)](A)=\sigma[x:=m](A)=\sigma_0(A).$$ In addition, from the fact that $\sigma(A[y/x])$ is well-defined and Lemma \ref{lem-sub}, we can assert that $\sigma_0(A)=\sigma[x:=\sigma(y)](A)$ is well-defined. Therefore, we have: \begin{align*}
&\tr\left[\sigma(F_M(y)[\overline{q}](A[y/x]))\rho\right]=\tr\left[\left((M_m^\sigma)^\dag \sigma_0(A) M_m^\sigma\right)\rho\right]\\ 
&=\tr\left[\sigma_0(A)\left(M_m^\sigma\rho(M_m^\sigma)^\dag\right)\right]\\ 
&=\tr(\sigma_0(A)\rho_0)\\
&=\sum\left\{|\tr(\sigma^\prime(A)\rho^\prime)|(\sigma^\prime,\rho^\prime)\in\llbracket x:=M[\overline{q}]\rrbracket(\sigma,\rho), \sigma^\prime\models\varphi\wedge x=y,\ \mbox{and}\ \sigma^\prime(A)\ \mbox{is\ well-defined} |\right\}. 
\end{align*}

{\vskip 3pt} 

(\textbf{Rule-Seq}): We prove that $\models_\mathit{par}\{\varphi,A\}\ P_1\ \{\psi,B\}$ and $\models_\mathit{par}\{\psi,B\}\ P_2\ \{\theta,C\}$ imply $$\models_\mathit{par}\{\varphi,A\}\ P_1;P_2\ \{\theta,C\}.$$ First, for any state $(\sigma,\rho)\in\Omega$, it is easy to show from transition rule (Seq) that \begin{equation}\label{eq-seq}\llbracket P_1;P_2\rrbracket(\sigma,\rho)=\llbracket P_2\rrbracket\left(\llbracket P_1\rrbracket(\sigma,\rho)\right)=\bigcup_{(\sigma^\prime,\rho^\prime)\in \llbracket P_1\rrbracket(\sigma,\rho)}\llbracket P_2\rrbracket(\sigma^\prime,\rho^\prime).
\end{equation} Now we set:\begin{align*}
&\mathcal{L}_1(\sigma,\rho)=\left\{|(\sigma^\prime,\rho^\prime)\in \llbracket P_1\rrbracket(\sigma,\rho) | \sigma^\prime\models\psi\ \mbox{and}\ \sigma^\prime(B)\ \mbox{is\ well-defined}\right|\},\\
&\mathcal{L}_2(\sigma^\prime,\rho^\prime)=\left\{|(\sigma^{\prime\prime},\rho^{\prime\prime})\in \llbracket P_2\rrbracket(\sigma^\prime,\rho^\prime) | \sigma^{\prime\prime}\models\theta\ \mbox{and}\ \sigma^{\prime\prime}(C)\ \mbox{is\ well-defined}|\right\}.
\end{align*} Using (\ref{eq-seq}), we can see that 
\begin{equation}\label{seq-mid}
\bigcup_{(\sigma^\prime,\rho^\prime)\in\mathcal{L}_1(\sigma,\rho)}\mathcal{L}_2(\sigma^\prime,\rho^\prime)\subseteq\left\{|(\sigma^{\prime\prime},\rho^{\prime\prime})\in\llbracket P_1;P_2\rrbracket(\sigma,\rho)|\sigma^{\prime\prime}\models\theta\ \mbox{and}\ \sigma^{\prime\prime}(C)\ \mbox{is\ well-define}|\right\}.
\end{equation}

By the assumptions $\models_\mathit{par}\{\varphi,A\}\ P_1\ \{\psi,B\}$ and $\models_\mathit{par}\{\psi,B\}\ P_2\ \{\theta,C\}$, we obtain:

{\vskip 3pt}

(\textbf{\textit{Claim 1}}) If $\sigma\models\varphi$ and $\sigma(A)$ is well-defined, then 
\begin{equation}\label{seq-1}\tr(\sigma(A)\rho)\leq\sum\left\{|\tr(\sigma^\prime(B)\rho^\prime)|(\sigma^\prime,\rho^\prime)\in\mathcal{L}_1(\sigma,\rho)|\right\}+\mathit{NT}(P_1)(\sigma,\rho).\end{equation}

{\vskip 3pt}

(\textbf{\textit{Claim 2}}) If $\sigma^\prime\models\psi$ and $\sigma(B)$ is well-defined, then 
\begin{equation}\label{seq-2}\tr(\sigma^\prime(B)\rho^\prime)\leq\sum\left\{|\tr(\sigma^{\prime\prime}(C)\rho^{\prime\prime})|(\sigma^{\prime\prime},\rho^{\prime\prime})\in\mathcal{L}_2(\sigma^\prime,\rho^\prime)|\right\}+\mathit{NT}(P_2)(\sigma^\prime,\rho^\prime).\end{equation}

Then by combining (\ref{seq-1}) and (\ref{seq-2}), we have: if $\sigma\models\varphi$ and $\sigma(A)$ is well-defined, then
\begin{equation}\label{seq-3}\begin{split}\tr(\sigma(A)\rho)&\leq\sum\left\{|\mu(\sigma^\prime,\rho^\prime)+\mathit{NT}(P_2)(\sigma^\prime,\rho^\prime)|(\sigma^\prime,\rho^\prime)\in\mathcal{L}_1(\sigma,\rho)|\right\}+\mathit{NT}(P_1)(\sigma,\rho)
\\ &= \sum\left\{|\mu(\sigma^\prime,\rho^\prime)|(\sigma^\prime,\rho^\prime)\in\mathcal{L}_1(\sigma,\rho)|\right\}\\ &\qquad\qquad + \left(\sum\left\{|\mathit{NT}(P_2)(\sigma^\prime,\rho^\prime)|(\sigma^\prime,\rho^\prime)\in\mathcal{L}_1(\sigma,\rho)|\right\} +\mathit{NT}(P_1)(\sigma,\rho)\right)\end{split}\end{equation} 
where $$\mu(\sigma^\prime,\rho^\prime)=\sum\left\{|\tr(\sigma^{\prime\prime}(C)\rho^{\prime\prime}|(\sigma^{\prime\prime},\rho^{\prime\prime})\in\mathcal{L}_2(\sigma^\prime,\rho^\prime) |\right\}.$$
It follows from (\ref{seq-mid}) that 
\begin{equation}\label{seq-4}\begin{split}&\sum\left\{|\mu(\sigma^\prime,\rho^\prime)|(\sigma^\prime,\rho^\prime)\in\mathcal{L}_1(\sigma,\rho)|\right\}\\ &=\sum\left\{|\tr(\sigma^{\prime\prime}(C)\rho^{\prime\prime}|(\sigma^{\prime\prime},\rho^{\prime\prime})\in\bigcup_{(\sigma^\prime,\rho^\prime)\in\mathcal{L}_1(\sigma,\rho)}\mathcal{L}_2(\sigma^\prime,\rho^\prime) |\right\}\\ &\leq\sum\left\{|\tr(\sigma^{\prime\prime}(C)\rho^{\prime\prime}| (\sigma^{\prime\prime},\rho^{\prime\prime})\in\llbracket P_1;P_2\rrbracket(\sigma,\rho), \sigma^{\prime\prime}\models\theta\ \mbox{and}\ \sigma^{\prime\prime}\ \mbox{is\ well-define}|\right\}.
\end{split}\end{equation}
On the other hand, by the definition of nontermination probability $\mathit{NT}(P)$ and (\ref{eq-seq}) we have: 
\begin{equation}\label{seq-5}\begin{split}&\sum\left\{|\mathit{NT}(P_2)(\sigma^\prime,\rho^\prime)|(\sigma^\prime,\rho^\prime)\in\mathcal{L}_1(\sigma,\rho)|\right\} +\mathit{NT}(P_1)(\sigma,\rho)\\ 
&\leq \sum\left\{|\mathit{NT}(P_2)(\sigma^\prime,\rho^\prime)|(\sigma^\prime,\rho^\prime)\in\llbracket P_1\rrbracket(\sigma,\rho)|\right\} +\mathit{NT}(P_1)(\sigma,\rho)\\ 
&= \sum\left\{|\tr(\rho^\prime)-\sum\left\{|\tr(\rho{^{\prime\prime}})|(\sigma^{\prime\prime},\rho^{\prime\prime})\in\llbracket P_2\rrbracket(\sigma^\prime,\rho^\prime)|\right\}|(\sigma^\prime,\rho^\prime)\in\llbracket P_1\rrbracket(\sigma,\rho)|\right\}\\ &\qquad\qquad\qquad\qquad\qquad\qquad\qquad\qquad\qquad
+\mathit{NT}(P_1)(\sigma,\rho)\\ 
&= \sum\left\{|\tr(\rho^\prime)|(\sigma^\prime,\rho^\prime)\in\llbracket P_1\rrbracket(\sigma,\rho)|\right\}\\ &\qquad -\sum\left\{|\tr(\rho{^{\prime\prime}})|(\sigma^{\prime\prime},\rho^{\prime\prime})\in\bigcup_{(\sigma^\prime,\rho^\prime)\in\llbracket P_1\rrbracket(\sigma,\rho)}\llbracket P_2\rrbracket(\sigma^\prime,\rho^\prime)|\right\}
+\mathit{NT}(P_1)(\sigma,\rho)\\ 
&= \sum\left\{|\tr(\rho^\prime)|(\sigma^\prime,\rho^\prime)\in\llbracket P_1\rrbracket(\sigma,\rho)|\right\}\\ &\qquad\qquad\qquad\qquad -\sum\left\{|\tr(\rho{^{\prime\prime}})|(\sigma^{\prime\prime},\rho^{\prime\prime})\in \llbracket P_1;P_2\rrbracket(\sigma,\rho)|\right\}
+\mathit{NT}(P_1)(\sigma,\rho)\\ 
&= \left(\sum\left\{|\tr(\rho^\prime)|(\sigma^\prime,\rho^\prime)\in\llbracket P_1\rrbracket(\sigma,\rho)|\right\}+\mathit{NT}(P_1)(\sigma,\rho)\right)\\ &\qquad\qquad\qquad\qquad\qquad\qquad -\sum\left\{|\tr(\rho{^{\prime\prime}})|(\sigma^{\prime\prime},\rho^{\prime\prime})\in \llbracket P_1;P_2\rrbracket(\sigma,\rho)|\right\}
\\ &=\tr(\rho)-\sum\left\{|\tr(\rho{^{\prime\prime}})|(\sigma^{\prime\prime},\rho^{\prime\prime})\in \llbracket P_1;P_2\rrbracket(\sigma,\rho)|\right\}\\
&=\mathit{NT}(P_1;P_2)(\sigma,\rho).
\end{split}\end{equation} 
Finally, plugging (\ref{seq-4}) and (\ref{seq-5}) into (\ref{seq-3}), we obtain: 
\begin{align*}\tr(\sigma(A)\rho)&\leq\sum\left\{|\tr(\sigma^{\prime\prime}(C)\rho^{\prime\prime}| (\sigma^{\prime\prime},\rho^{\prime\prime})\in\llbracket P_1;P_2\rrbracket(\sigma,\rho), \sigma^{\prime\prime}\models\theta\ \mbox{and}\ \sigma^{\prime\prime}\ \mbox{is\ well-defined}|\right\}\\ &\qquad\qquad\qquad\qquad\qquad\qquad\qquad +\mathit{NT}(P_1;P_2)(\sigma,\rho).
\end{align*} 
Therefore, it holds that $\models_\mathit{par}\{\varphi,A\}\ P_1;P_2\ \{\theta,C\}$. 

{\vskip 3pt}

(\textbf{Rule-Cond}): We prove that $\models_\mathit{par}\{\varphi\wedge b,A\}\ P_1\ \{\psi,B\}$ and $\models_\mathit{par}\{\varphi\wedge\neg b,A\}\ P_0\ \{\psi,B\}$ imply $$\models_\mathit{par}\{\varphi,A\}\ \mathbf{if}\ b\ \mathbf{then}\ P_1\ \mathbf{else}\ P_0\ \{\psi,B\}.$$ For simplicity of the presentation, let us write \textquotedblleft$\mathbf{if}...$\textquotedblright\ for \textquotedblleft$\mathbf{if}\ b\ \mathbf{then}\ P_1\ \mathbf{else}\ P_0$\textquotedblright. Then for any input state $\sigma,\rho)\in\Omega$, by transition rule (Cond), we obtain:
\begin{equation}\label{eq-cond}\llbracket\mathbf{if}...\rrbracket(\sigma,\rho)=\begin{cases}\llbracket P_1\rrbracket(\sigma,\rho)\ {\rm if}\ \sigma\models b,\\
\llbracket P_0\rrbracket(\sigma,\rho)\ {\rm if}\ \sigma\models \neg b. 
\end{cases}\end{equation}

Now assume that $\sigma\models\varphi$ and $\sigma(A)$ is well-defined. We consider the following two cases:

{\vskip 3pt}

\textbf{\textit{Case 1}}. $\sigma\models b$. Then it holds that $\sigma\models\varphi\wedge b$. By the assumption $\models_\mathit{par}\{\varphi\wedge b,A\}\ P_1\ \{\psi,B\}$, we have: \begin{align*}
&\tr(\sigma(A)\rho)\leq\sum\left\{|\tr(\sigma^\prime(A)\rho^\prime)|(\sigma^\prime,\rho^\prime)\in\llbracket P_1\rrbracket(\sigma,\rho), \sigma^\prime\models\psi\ \mbox{and}\ \sigma^\prime(B)\ \mbox{is\ well-defined} 
|\right\}\\ &\qquad\qquad\qquad\qquad\qquad\qquad\qquad\qquad+\mathit{NP}(P_1)(\sigma,\rho)\\
&\leq\sum\left\{|\tr(\sigma^\prime(A)\rho^\prime)|(\sigma^\prime,\rho^\prime)\in\llbracket\mathbf{if}...\rrbracket(\sigma,\rho), \sigma^\prime\models\psi\ \mbox{and}\ \sigma^\prime(B)\ \mbox{is\ well-defined}
|\right\}\\ &\qquad\qquad\qquad\qquad\qquad\qquad\qquad\qquad+\mathit{NT}(\mathbf{if}...)(\sigma,\rho). 
\end{align*}

{\vskip 3pt}

\textbf{\textit{Case 2}}. $\sigma\models\neg b$. Similarly, we can prove \begin{align*}
&\tr(\sigma(A)\rho)\leq\sum\left\{|\tr(\sigma^\prime(A)\rho^\prime)|(\sigma^\prime,\rho^\prime)\in\llbracket\mathbf{if}...\rrbracket(\sigma,\rho), \sigma^\prime\models\psi\ \mbox{and}\ \sigma^\prime(B)\ \mbox{is\ well-defined}
|\right\}\\ &\qquad\qquad\qquad\qquad\qquad\qquad\qquad\qquad+\mathit{NT}(\mathbf{if}...)(\sigma,\rho).\end{align*}

Finally, by combining the above two cases, we assert that $$\models_\mathit{par}\{\varphi,A\}\ \mathbf{if}\ b\ \mathbf{then}\ P_1\ \mathbf{else}\ P_0\ \{\psi,B\}.$$

{\vskip 3pt}

(\textbf{Rule-Loop-par}): Let us first introduce several notations. For any $n\geq 0$, we set:
\begin{align*}&\mathcal{P}_n(\sigma,\rho)=\\ & \ \left\{|(\sigma^\prime,\rho^\prime)|(\sigma,\rho)=(\sigma_0,\rho_0)\stackrel{P}{\Rightarrow}...\stackrel{P}{\Rightarrow}(\sigma_k,\rho_k)=(\sigma^\prime, \rho^\prime), k\leq n, \sigma_i\models b\ (0\leq i<k)\ \mbox{and}\ \sigma_k\models\neg b|\right\},\\ 
&\mathcal{Q}_n(\sigma,\rho)=\\ &\ \ \ \ \ \ \left\{|(\sigma^\prime,\rho^\prime)|(\sigma,\rho)=(\sigma_0,\rho_0)\stackrel{P}{\Rightarrow}...\stackrel{P}{\Rightarrow}(\sigma_n,\rho_n)=(\sigma^\prime,\rho^\prime)\ \mbox{and}\ \sigma_i\models b\ (0\leq i\leq n)|\right\}.
\end{align*} Also, to simplify the presentation, we write \textquotedblleft $\mathbf{while}$\textquotedblright\ for \textquotedblleft $\mathbf{while}\ b\ \mathbf{d}\ P$\textquotedblright\  and \textquotedblleft :WD\textquotedblright\ for \textquotedblleft is well-defined\textquotedblright. Then $\left\{\mathcal{P}_n\right\}$ is an increasing sequence, and Proposition \ref{prop-structure}(8) can be rewritten as 
\begin{equation}\label{P-infty}\llbracket\mathbf{while}\rrbracket(\sigma,\rho)=\bigcup_{n=0}^\infty\mathcal{P}_n(\sigma,\rho).\end{equation}

Now we assume:\begin{equation}\label{assume-loop}\models_\mathit{par}\{\varphi\wedge b,A\}\ P\ \{\varphi,A\}\end{equation} and prove:
\begin{equation}\label{concl-loop}\models_\mathit{par}\{\varphi,A\}\ \mathbf{while}\ \{\varphi\wedge\neg b,A\},\end{equation} that is, for any input state $(\sigma,\rho)\in\Omega$, if $\sigma\models\varphi$ and $\sigma(A): WD$, then
\begin{equation}\label{ineq-loop}\begin{split}
\tr(\sigma(A)\rho)\leq\sum&\left\{|\tr(\sigma^\prime(A)\rho^\prime)|(\sigma^\prime,\rho^\prime)\in\llbracket\mathbf{while}\rrbracket(\sigma,\rho), \sigma^\prime\models\varphi\wedge\neg b\ \mbox{and}\ \sigma^\prime(A):WD|\right\}\\ 
&\qquad\qquad\qquad\qquad\qquad\qquad\qquad\qquad+\mathit{NT}(\mathbf{while})(\sigma, \rho)
\end{split}\end{equation} where $$\mathit{NT}(\mathbf{while})(\sigma,\rho)=\tr(\rho)-\sum\left\{|\tr(\rho^\prime)|(\sigma^\prime,\rho^\prime)\in\llbracket\mathbf{while}\rrbracket(\sigma,\rho)|\right\}.$$In fact, whenever $\sigma\models\neg b$, then $\sigma\models\varphi\wedge\neg b$, $(\sigma,\rho)\in\mathcal{P}_0(\sigma,\rho)$, and   
\begin{equation*}\tr(\sigma(A)\rho)\leq\sum\left\{|\tr(\sigma^\prime(A)\rho^\prime)|(\sigma^\prime,\rho^\prime)\in\mathcal{P}_0(\sigma,\rho), \sigma^\prime\models\varphi\wedge\neg b\ \mbox{and}\ \sigma^\prime(A):WD|\right\}.\end{equation*}
Thus, (\ref{ineq-loop}) is correct. Otherwise, that is, $\sigma\models b$, then we prove: 

{\vskip 3pt}

\textbf{\textit{Claim}}: For any integer $n\geq 1$, it holds that 
\begin{equation}\label{ineq-loop-0}\tr(\sigma(A)\rho)\leq\Lambda_n+\Delta_n+\tr(\rho)-\Gamma_n-\Omega_n\end{equation}
where: 
\begin{equation*}\begin{split}
\Lambda_n&=\sum\left\{|\tr(\sigma^\prime(A)\rho^\prime)|(\sigma^\prime,\rho^\prime)\in\mathcal{P}_n(\sigma,\rho), \sigma^\prime\models\varphi\ \mbox{and}\ \sigma^\prime(A):WD|\right\},\\
\Delta_n&=\sum\left\{|\tr(\sigma^\prime(A)\rho^\prime)|(\sigma^\prime,\rho^\prime)\in\mathcal{Q}_n(\sigma,\rho), \sigma^\prime\models\varphi\ \mbox{and}\ \sigma^\prime(A):WD|\right\},\\
\Gamma_n&=\sum\left\{|\tr(\rho^\prime)|(\sigma^\prime,\rho^\prime)\in\mathcal{P}_n(\sigma,\rho)|\right\},\\
\Omega_n&=\sum\left\{|\tr(\rho^\prime)|(\sigma^\prime,\rho^\prime)\in\mathcal{Q}_n(\sigma,\rho)|\right\}.
\end{split}\end{equation*} 

We proceed by induction on $n$. For the case of $n=1$, note that $\sigma\models\varphi\wedge b$. Then by the assumption (\ref{assume-loop}) we have: 
\begin{equation}\label{ineq-loop-1}\begin{split}&tr(\sigma(A)\rho)\leq\sum\left\{|\tr(\sigma^\prime(A)\rho^\prime)|(\sigma^\prime,\rho^\prime)\in\llbracket P\rrbracket(\sigma,\rho), \sigma^\prime\models\varphi\ \mbox{and}\ \sigma^\prime(A):WD|\right\}\\ 
&\qquad\qquad\qquad\qquad\qquad\qquad\qquad+\mathit{NT}(P)(\sigma,\rho)
\\ &=\sum\left\{|\tr(\sigma^\prime(A)\rho^\prime)|(\sigma^\prime,\rho^\prime)\in\llbracket P\rrbracket(\sigma,\rho), \sigma^\prime\models\varphi\ \mbox{and}\ \sigma^\prime(A):WD|\right\}\\ &\qquad\qquad\qquad\qquad
+\tr(\rho)-\sum\left\{|\tr(\rho^\prime)|(\sigma^\prime,\rho^\prime)\in\llbracket P\rrbracket(\sigma,\rho)|\right\}\\ 
&=\sum\left\{|\tr(\sigma^\prime(A)\rho^\prime)|(\sigma^\prime,\rho^\prime)\in\mathcal{P}_1(\sigma,\rho), \sigma^\prime\models\varphi\ \mbox{and}\ \sigma^\prime(A):WD|\right\}\\
&\qquad+\sum\left\{|\tr(\sigma^\prime(A)\rho^\prime)|(\sigma^\prime,\rho^\prime)\in\mathcal{Q}_1(\sigma,\rho), \sigma^\prime\models\varphi\ \mbox{and}\ \sigma^\prime(A):WD|\right\}
\\ &\ \ \ \ \ \ \ \ +\tr(\rho)-\sum\left\{|\tr(\rho^\prime)|(\sigma^\prime,\rho^\prime)\in\mathcal{P}_1(\sigma,\rho)|\right\}-\sum\left\{|\tr(\rho^\prime)|(\sigma^\prime,\rho^\prime)\in\mathcal{Q}_1(\sigma,\rho)|\right\}\\ &=\Lambda_1+\Delta_1+\tr(\rho)-\Gamma_1-\Omega_1
\end{split}\end{equation} because $\llbracket P\rrbracket(\sigma,\rho)=\mathcal{P}_1(\sigma,\rho)\cup\mathcal{Q}_1(\sigma,\rho)$ and $\mathcal{P}_1(\sigma,\rho)\cap\mathcal{Q}_1(\sigma,\rho)=\emptyset$.
So, (\ref{ineq-loop-0}) is true for $n=1$. 

Now assume (\ref{ineq-loop-0})  is true for $n$, and we prove it is also true for $n+1$. We first note that for any $(\sigma^\prime,\rho^\prime)\in\mathcal{Q}_n(\sigma,\rho)$ with $\sigma^\prime\models\varphi$ and $\sigma^\prime(A):WD$, it holds that $\sigma^\prime\models\varphi\wedge b$. Then by by the assumption (\ref{assume-loop}) we obtain: 
\begin{equation}\label{ineq-loop-2}\begin{split}&\tr(\sigma^\prime(A)\rho^\prime)\leq\sum\left\{|\tr(\sigma^{\prime\prime}(A)\rho^{\prime\prime})|(\sigma^{\prime\prime},\rho^{\prime\prime})\in\llbracket P\rrbracket(\sigma^\prime,\rho^\prime), \sigma^{\prime\prime}\models\varphi\ \mbox{and}\ \sigma{^{\prime\prime}}(A):WD|\right\}\\ &\qquad\qquad\qquad\qquad\qquad\qquad\qquad +\mathit{NT}(P)(\sigma^\prime,\rho^\prime)\\
&=\sum\left\{|\tr(\sigma^{\prime\prime}(A)\rho^{\prime\prime})|(\sigma^{\prime\prime},\rho^{\prime\prime})\in\llbracket P\rrbracket(\sigma^\prime,\rho^\prime), \sigma^{\prime\prime}\models\varphi\ \mbox{and}\ \sigma{^{\prime\prime}}(A):WD|\right\}\\ 
&\qquad\qquad\qquad\qquad\qquad\qquad\qquad +\mathit{NT}(P)(\sigma^\prime,\rho^\prime)\\
&=\sum\left\{|\tr(\sigma^{\prime\prime}(A)\rho^{\prime\prime})|(\sigma^{\prime\prime},\rho^{\prime\prime})\in\mathcal{P}_1(\sigma^\prime,\rho^\prime), \sigma^{\prime\prime}\models\varphi\ \mbox{and}\ \sigma^{\prime\prime}(A):WD|\right\}\\
&\qquad+\sum\left\{|\tr(\sigma^{\prime\prime}(A)\rho^{\prime\prime})|(\sigma^{\prime\prime},\rho^{\prime\prime})\in\mathcal{Q}_1(\sigma^\prime,\rho^\prime), \sigma^{\prime\prime}\models\varphi\ \mbox{and}\ \sigma^{\prime\prime}(A):WD|\right\}\\
&\qquad\qquad\qquad\qquad\qquad\qquad\qquad +\mathit{NT}(P)(\sigma^\prime,\rho^\prime)\end{split}.\end{equation}
Taking the summation over $(\sigma^\prime,\rho^\prime)\in\mathcal{Q}_n(\sigma,\rho)$ with $\sigma^\prime\models\varphi$ and $\sigma^\prime(A):WD$, we obtain:
\begin{equation}\label{ineq-loop-21}\begin{split}\Delta_n&\leq\lambda_n+\delta_n+\sum\left\{|\mathit{NT}(P)(\sigma^\prime,\rho^\prime)|(\sigma^\prime,\rho^\prime)\in\mathcal{Q}_n(\sigma,\rho), \sigma^\prime\models\varphi\ \mbox{and}\ \sigma^\prime(A):WD|\right\}\\ &\leq\lambda_n+\delta_n+\sum\left\{|\mathit{NT}(P)(\sigma^\prime,\rho^\prime)|(\sigma^\prime,\rho^\prime)\in\mathcal{Q}_n(\sigma,\rho)|\right\}
\end{split}\end{equation} 
where:
\begin{align*}&\lambda_n=\\ &\sum\left\{|\sum\left\{|\tr(\sigma^{\prime\prime}(A)\rho^{\prime\prime})|(\sigma^{\prime\prime},\rho^{\prime\prime})\in\mathcal{P}_1(\sigma^\prime,\rho^\prime), \sigma^{\prime\prime}\models\varphi\ \mbox{and}\ \sigma^{\prime\prime}(A):WD|\right\}|(\sigma^\prime,\rho^\prime)\in\mathcal{Q}_n(\sigma,\rho)|\right\},\\ 
&\delta_n=\\ &\sum\left\{|\sum\left\{|\tr(\sigma^{\prime\prime}(A)\rho^{\prime\prime})|(\sigma^{\prime\prime},\rho^{\prime\prime})\in\mathcal{Q}_1(\sigma^\prime,\rho^\prime), \sigma^{\prime\prime}\models\varphi\ \mbox{and}\ \sigma^{\prime\prime}(A):WD|\right\}|(\sigma^\prime,\rho^\prime)\in\mathcal{Q}_n(\sigma,\rho)|\right\}.\end{align*}
Plugging (\ref{ineq-loop-21}) into (\ref{ineq-loop-0}) yields: 
\begin{equation}\label{ineq-loop-3}\begin{split}&\tr(\sigma(A)\rho)\leq \Lambda_n+\left (\lambda_n+\delta_n+\sum\left\{|\mathit{NT}(P)(\sigma^\prime,\rho^\prime)|(\sigma^\prime,\rho^\prime)\in\mathcal{Q}_n(\sigma,\rho)|\right\}
\right)+\tr(\rho)-\Gamma_n-\Omega_n\\ 
&=\left (\Lambda_n+\lambda_n\right)+\delta_n+\tr(\rho)+\left(\sum\left\{|\mathit{NT}(P)(\sigma^\prime,\rho^\prime)|(\sigma^\prime,\rho^\prime)\in\mathcal{Q}_n(\sigma,\rho)|\right\}
-\Gamma_n-\Omega_n\right).  
\end{split}\end{equation}
It is easy to see that \begin{align}\label{loop-par1}&\Lambda_n+\lambda_n\leq \sum\left\{|\tr(\sigma^\prime(A)\rho^\prime)|(\sigma^\prime,\rho^\prime)\in\mathcal{P}_{n+1}(\sigma,\rho), \sigma^\prime\models\varphi\ \mbox{and}\ \sigma^\prime(A):WD|\right\}=\Lambda_{n+1} \end{align}
and \begin{align}\label{loop-par2}\delta_n=
\sum\left\{|\tr(\sigma^{\prime\prime}(A)\rho^{\prime\prime})|(\sigma^{\prime\prime},\rho^{\prime\prime})\in\mathcal{Q}_{n+1}(\sigma,\rho), \sigma^{\prime\prime}\models\varphi\ \mbox{and}\ \sigma^{\prime\prime}(A):WD|\right\}=\Delta_{n+1}.
\end{align} 
By the definition of $\mathit{NT}(P)$, we have:
\begin{align*}&\sum\left\{|\mathit{NT}(P)(\sigma^\prime,\rho^\prime)|(\sigma^\prime,\rho^\prime)\in\mathcal{Q}_n(\sigma,\rho)|\right\}\\ 
&=\sum\left\{|\tr(\rho^\prime)-\sum\left\{|\tr(\rho^{\prime\prime})|(\sigma^{\prime\prime},\rho^{\prime\prime})\in\llbracket P\rrbracket(\sigma^\prime,\rho^\prime)|\right\}|(\sigma^\prime,\rho^\prime)\in\mathcal{Q}_n(\sigma,\rho)|\right\}\\
&=\sum\left\{|\tr(\rho^\prime)|(\sigma^\prime,\rho^\prime)\in\mathcal{Q}_n(\sigma,\rho)|\right\}\\ 
&\qquad\qquad\qquad -\sum\left\{|\tr(\rho^{\prime\prime})|(\sigma^{\prime\prime},\rho^{\prime\prime})\in\llbracket P\rrbracket(\sigma^\prime,\rho^\prime)\ \mbox{and}\ (\sigma^\prime,\rho^\prime)\in\mathcal{Q}_n(\sigma,\rho)|\right\}\\
 &=\Omega_n-\sum\left\{|\tr(\rho^{\prime\prime})|(\sigma^{\prime\prime},\rho^{\prime\prime})\in\llbracket P\rrbracket(\sigma^\prime,\rho^\prime)\ \mbox{and}\ (\sigma^\prime,\rho^\prime)\in\mathcal{Q}_n(\sigma,\rho)|\right\}
\end{align*} Moreover, we have:
\begin{equation*}\begin{split}&\sum\left\{|\tr(\rho^{\prime\prime})|(\sigma^{\prime\prime},\rho^{\prime\prime})\in\llbracket P\rrbracket(\sigma^\prime,\rho^\prime)\ \mbox{and}\ (\sigma^\prime,\rho^\prime)\in\mathcal{Q}_n(\sigma,\rho)|\right\}
 \\ &=\sum\left\{|\tr(\rho^{\prime\prime})|(\sigma^{\prime\prime},\rho^{\prime\prime})\in\mathcal{P}_1(\sigma^\prime,\rho^\prime)\ \mbox{and}\ (\sigma^\prime,\rho^\prime)\in\mathcal{Q}_n(\sigma,\rho)|\right\}\\ &\qquad\qquad\qquad +\sum\left\{|\tr(\rho^{\prime\prime})|(\sigma^{\prime\prime},\rho^{\prime\prime})\in\mathcal{Q}_1(\sigma^\prime,\rho^\prime)\ \mbox{and}\ (\sigma^\prime,\rho^\prime)\in\mathcal{Q}_n(\sigma,\rho)|\right\}
  \\ &=\sum\left\{|\tr(\rho^{\prime\prime})|(\sigma^{\prime\prime},\rho^{\prime\prime})\in\mathcal{P}_1(\sigma^\prime,\rho^\prime)\ \mbox{and}\ (\sigma^\prime,\rho^\prime)\in\mathcal{Q}_n(\sigma,\rho)|\right\}+\Omega_{n+1}.
\end{split}\end{equation*}
Then it follows that 
\begin{equation}\label{loop-pars}\begin{split}&\sum\left\{|\mathit{NT}(P)(\sigma^\prime,\rho^\prime)|(\sigma^\prime,\rho^\prime)\in\mathcal{Q}_n(\sigma,\rho)|\right\}-\Gamma_n-\Omega_n\\ &=-\left(\sum\left\{|\tr(\rho^{\prime\prime})|(\sigma^{\prime\prime},\rho^{\prime\prime})\in\mathcal{P}_1(\sigma^\prime,\rho^\prime)\ \mbox{and}\ (\sigma^\prime,\rho^\prime)\in\mathcal{Q}_n(\sigma,\rho)|\right\}+\Gamma_n\right)-\Omega_{n+1}\\ &=-\Gamma_{n+1}-\Omega_{n+1}. 
\end{split}\end{equation} Combining (\ref{ineq-loop-3}), (\ref{loop-par1}), (\ref{loop-par2}) and (\ref{loop-pars}), we obtain:
$$\tr(\sigma(A)\rho)\leq\Lambda_{n+1}+\Delta_{n+1}+\tr(\rho)-\Gamma_{n+1}-\Omega_{n+1}.$$
Therefore, we see from (\ref{ineq-loop-3}) that (\ref{ineq-loop-0}) is true for $n+1$. Thus, the proof of (\ref{ineq-loop-0}) is completed. 

 Now we prove the following two facts:

{\vskip 3pt}

\textbf{\textit{Fact 1}}: By definition, we know that $\sigma^\prime\models\neg b$ whenever $(\sigma^\prime,\rho^\prime)\in\mathcal{P}_n(\sigma,\rho)$. Then it follows from (\ref{P-infty}) that \begin{equation}\label{ineq-loop-14}\begin{split}
&\lim_{n\rightarrow\infty}\Lambda_n =\sum\left\{|\tr(\sigma^\prime(A)\rho^\prime)|(\sigma^\prime,\rho^\prime)\in\llbracket\mathbf{while}\rrbracket(\sigma,\rho), \sigma^\prime\models\varphi\wedge\neg b\ \mbox{and}\ \sigma^\prime(A):WD|\right\}.
\end{split}\end{equation}

{\vskip 3pt}

\textbf{\textit{Fact 2}}: Note that $\sigma^\prime(A)\leq I$ (the identity operator). Then it follows that $\tr\left(\sigma^\prime(A)\rho\right)\leq\tr(\rho)$. Consequently, we have:
\begin{align*}\Delta_n\leq\sum\left\{|\tr(\sigma^\prime(A)\rho^\prime)|(\sigma^\prime,\rho^\prime)\in\mathcal{Q}_n(\sigma,\rho)|\right\}
\leq \sum\left\{|\tr(\rho^\prime)|(\sigma^\prime,\rho^\prime)\in\mathcal{Q}_n(\sigma,\rho)|\right\}
\leq \Omega_n
\end{align*} and $$\Delta_n-\Gamma_n-\Omega_n\leq -\Gamma_n.$$ Then it holds that 
\begin{equation}\label{lim-loop}\lim_{n\rightarrow\infty}(\Delta_n-\Gamma_n-\Omega_n)\leq -\lim_{n\rightarrow\infty}\Gamma_n=-\sum\left\{|\tr(\rho^\prime)|(\sigma^\prime,\rho^\prime)\in\llbracket\mathbf{while}\rrbracket(\sigma,\rho)|\right\}.\end{equation}

Finally, we take the limit of $n\rightarrow\infty$ in (\ref{ineq-loop-0}) and invoke (\ref{ineq-loop-14}) and (\ref{lim-loop}). Then the conclusion (\ref{ineq-loop}) is achieved, and we complete the proof of the soundness of (Rule-Loop-par). 

{\vskip 3pt}

(\textbf{Rule-Conseq}): We assume that $(\varphi^\prime,A^\prime)\models(\varphi,A)$, $(\psi,B)\models(\psi^\prime,B^\prime)$ and \begin{equation}\label{assume-con}\models_\mathit{par}\{\varphi,A\}\ P\ \{\psi,B\}\end{equation} and prove 
\begin{equation}\label{assume-con-1}\models_\mathit{par}\{\varphi^\prime,A^\prime\}\ P\ \{\psi^\prime,B^\prime\}.\end{equation}
For any input state $(\sigma,\rho)\in\Omega$, if $\sigma\models\varphi^\prime$ and $\sigma(A^\prime)$ is well-defined, then by the assumption $(\varphi^\prime,A^\prime)\models(\varphi,A)$ and Definitions \ref{qp-entail} and \ref{cq-entail}, we have $\sigma\models\varphi$, $\sigma(A)$ is well-defined, and $\sigma(A^\prime)\leq\sigma(A)$. Consequently, it follows from assumptions (\ref{assume-con}) and $(\psi,B)\models(\psi^\prime,B^\prime)$ that 
\begin{align*}&\tr(\sigma(A^\prime)\rho)\leq\tr(\sigma(A)\rho)\\ &\leq\sum\left\{|\tr(\sigma^{\prime\prime}(B)\rho^{\prime\prime})|(\sigma^{\prime\prime},\rho^{\prime\prime})\in\llbracket P\rrbracket(\sigma,\rho),\sigma^{\prime\prime}\models\psi\ \mbox{and}\ \sigma^{\prime\prime}(B)\ \mbox{is\ well-defined}|\right\}\\ &\leq\sum\left\{|\tr(\sigma^{\prime\prime}(B^\prime)\rho^{\prime\prime})|(\sigma^{\prime\prime},\rho^{\prime\prime})\in\llbracket P\rrbracket(\sigma,\rho),\sigma^{\prime\prime}\models\psi^\prime\ \mbox{and}\ \sigma^{\prime\prime}(B^\prime)\ \mbox{is\ well-defined}|\right\}.
\end{align*}Thus, (\ref{assume-con-1}) is proved. 

{\vskip 3pt}

(\textbf{Rule-Accum1}): We assume that $\overline{q}\cap\mathit{qv}(P)=\emptyset$ and \begin{align}\label{accum1}\models_\mathit{par}\{\varphi,A_i\}\ P\ \{\psi_i,B\}\ \mbox{for every}\ i,\\
\label{accum2}\models(\forall i_1,i_2)(i_1\neq i_2\rightarrow\neg (\psi_{i_1}\wedge\psi_{i_2})). 
\end{align} 

Let us write \textquotedblleft:WD\textquotedblright\ for \textquotedblleft is well-defined\textquotedblright. 
Then for any input state $(\sigma,\rho)\in\Omega$, whenever $\sigma\models\varphi$ and $\sigma(A_i)$ is well-defined, then: 
\begin{equation}\label{Kraus-proof1}\begin{split}\tr(\sigma(A_i)\rho)\leq\sum &\left\{|\tr(\sigma^\prime(B)\rho^\prime)|(\sigma^\prime,\rho^\prime)\in\llbracket P\rrbracket(\sigma,\rho),\sigma^\prime\models\psi_i\ \mbox{and}\ \sigma^\prime(B):WD\right\}\\ &\qquad + \mathit{NT}(P)(\sigma,\rho).
\end{split}\end{equation}
We want to prove \begin{equation}\label{conv2}\models_\mathit{par}\left\{\varphi,F(\overline{t})[\overline{q}](\{A_i\})\right\}\ P\ \left\{\bigvee_i\psi_i,F(\overline{t})[\overline{q}](B)\right\}.\end{equation}
For any input state $(\sigma,\rho)\in\Omega$, if $\sigma\models \varphi$ and $\sigma\left(F(\overline{t})[\overline{q}](\{A_i\})\right)$ is well-defined, then for all $i$, $\sigma(A_i)$ is well-defined, and $\sigma\models\mathit{Dist}(\overline{q})$. Assume that the Kraus operator symbol $F$ is interpreted as $F=\left\{F_i(\overline{a})\right\}$. Then we have:
\begin{equation}\label{Kraus-proof2}\begin{split}&\tr\left[\sigma\left(F(\overline{t})[\overline{q}](\{A_i\})\right)\rho\right]=\tr\left[\left(\sum_iF_i(\sigma(\overline{t}))_{\sigma(\overline{q})}\cdot\sigma(A_i)\cdot F^\dag_i(\sigma(\overline{t}))_{\sigma(\overline{q})}\right)\rho\right]\\
&=\sum_i\tr\left[\left(F_i(\sigma(\overline{t}))_{\sigma(\overline{q})}\cdot\sigma(A_i)\cdot F^\dag_i(\sigma(\overline{t})_{\sigma(\overline{q})}\right)\rho\right]\\ 
&=\sum_i\tr\left[\sigma(A_i)\left(F^\dag_i(\sigma(\overline{t}))_{\sigma(\overline{q})}\cdot\rho\cdot F_i(\sigma(\overline{t}))_{\sigma(\overline{q})}\right)\right]\\ 
&\leq\sum_i\Lambda_i\end{split}\end{equation}
where: 
\begin{align*}&\Lambda_i=\\ &\sum \left\{|\tr(\sigma^\prime(B)\rho^\prime)|(\sigma^\prime,\rho^\prime)\in\llbracket P\rrbracket\left(\sigma, F^\dag_i(\sigma(\overline{t}))_{\sigma(\overline{q})}\cdot\rho\cdot F_i(\sigma(\overline{t}))_{\sigma(\overline{q})}\right), \sigma^\prime\models\psi_i\ \mbox{and}\ \sigma^\prime(B):WD|\right\}\\ & +\mathit{NT}(P)\left(\sigma, F^\dag_i(\sigma(\overline{t}))_{\sigma(\overline{q})}\cdot\rho\cdot F_i(\sigma(\overline{t}))_{\sigma(\overline{q})}\right).\end{align*}
Note that the last inequality of (\ref{Kraus-proof2}) comes from the assumption (\ref{Kraus-proof1}). 

Now with the assumption that $\overline{q}\cap\mathit{qv}(P)=\emptyset$ and $[\mathit{var}(\overline{t})\cup\mathit{cv}(\overline{q})]\cap\mathit{change}(P)=\emptyset$, we can prove:
\begin{equation}\label{Kraus-proof3}\begin{split}\llbracket P\rrbracket &\left(\sigma, F^\dag_i(\sigma(\overline{t}))_{\sigma(\overline{q})}\cdot\rho\cdot F_i(\sigma(\overline{t}))_{\sigma(\overline{q})}\right)\\ &\qquad =\left\{|\left(\sigma^\prime, F^\dag_i(\sigma(\overline{t}))_{\sigma(\overline{q})}\cdot\rho^\prime\cdot F_i(\sigma(\overline{t}))_{\sigma(\overline{q})}\right)|(\sigma^\prime,\rho^\prime)\in \llbracket P\rrbracket(\sigma,\rho)|\right\}. 
\end{split}\end{equation} by induction on the structure of $P$. 

(1) By the definition of $\mathit{NT}(\cdot)$ we obtain:
\begin{equation}\label{Kraus-proof4}\begin{split}
\sum_i\mathit{NT}(P)\left(\sigma, F^\dag_i(\sigma(\overline{t}))_{\sigma(\overline{q})}\cdot\rho\cdot F_i(\sigma(\overline{t}))_{\sigma(\overline{q})}\right)=\sum_i\Delta_i\end{split}\end{equation}
where 
\begin{align*}\Delta_i=\tr&\left(F^\dag_i(\sigma(\overline{t}))_{\sigma(\overline{q})}\cdot\rho\cdot F_i(\sigma(\overline{t}))_{\sigma(\overline{q})}\right)\\ &-\sum\left\{|\tr(\rho^{\prime\prime})|(\sigma^\prime,\rho^{\prime\prime})\in\llbracket P\rrbracket\left(\sigma,F^\dag_i(\sigma(\overline{t}))_{\sigma(\overline{q})}\cdot\rho\cdot F_i(\sigma(\overline{t}))_{\sigma(\overline{q})}\right)|\right\}.\end{align*}
Using (\ref{Kraus-proof3}) we have: \begin{equation}\label{Kraus-proof5}\begin{split}
\Delta_i =&\tr\left(F^\dag_i(\sigma(\overline{t}))_{\sigma(\overline{q})}\cdot\rho\cdot F_i(\sigma(\overline{t}))_{\sigma(\overline{q})}\right)\\ &-\sum\left\{|\tr\left(F^\dag_i(\sigma(\overline{t}))_{\sigma(\overline{q})}\cdot\rho^\prime\cdot F_i(\sigma(\overline{t}))_{\sigma(\overline{q})}\right)|(\sigma^\prime,\rho^\prime)\in\llbracket P\rrbracket\left(\sigma,\rho\right)|\right\}\\ 
=&\tr\left[F^\dag_i(\sigma(\overline{t}))_{\sigma(\overline{q})}\cdot\left(\rho-\sum\left\{|\rho^\prime|(\sigma^\prime,\rho^\prime)\in\llbracket P\rrbracket\left(\sigma,\rho\right)|\right\}\right)\cdot F_i(\sigma(\overline{t}))_{\sigma(\overline{q})}\right]\\ 
=&\tr\left[F_i(\sigma(\overline{t}))_{\sigma(\overline{q})}\cdot F^\dag_i(\sigma(\overline{t})_{\sigma(\overline{q})}\cdot\left(\rho-\sum\left\{|\rho^\prime|(\sigma^\prime,\rho^\prime)\in\llbracket P\rrbracket\left(\sigma,\rho\right)|\right\}\right)\right].
\end{split}\end{equation} It follows from condition (\ref{eq-pre-normal}) that $$\sum_iF_i(\sigma(\overline{t}))_{\sigma(\overline{q})}\cdot F^\dag_i(\sigma(\overline{t})_{\sigma(\overline{q})}= I.$$ 
Then plugging (\ref{Kraus-proof5}) into (\ref{Kraus-proof4}) yields: 
 \begin{equation}\label{Kraus-proof6}\begin{split}&\sum_i\mathit{NT}(P)\left(\sigma, F^\dag_i(\sigma(\overline{t}))_{\sigma(\overline{q})}\cdot\rho\cdot F_i(\sigma(\overline{t}))_{\sigma(\overline{q})}\right)\\ &=\sum_i\tr\left[F_i(\sigma(\overline{t}))_{\sigma(\overline{q})}\cdot F^\dag_i(\sigma(\overline{t})_{\sigma(\overline{q})}\cdot\left(\rho-\sum\left\{|\rho^\prime|(\sigma^\prime,\rho^\prime)\in\llbracket P\rrbracket\left(\sigma,\rho\right)|\right\}\right)\right]\\
 &=\tr\left[\left(\sum_iF_i(\sigma(\overline{t}))_{\sigma(\overline{q})}\cdot F^\dag_i(\sigma(\overline{t})_{\sigma(\overline{q})}\right)\cdot\left(\rho-\sum\left\{|\rho^\prime|(\sigma^\prime,\rho^\prime)\in\llbracket P\rrbracket\left(\sigma,\rho\right)|\right\}\right)\right]\\ 
 &=\tr\left(\rho-\sum\left\{|\rho^\prime|(\sigma^\prime,\rho^\prime)\in\llbracket P\rrbracket\left(\sigma,\rho\right)|\right\}\right)\\ 
&=\mathit{NT}(P)(\sigma,\rho).
\end{split}\end{equation} 

(2) Using (\ref{Kraus-proof3}) we obtain: \begin{equation}\label{Kraus-proof7+}\begin{split}&\mu\stackrel{\triangle}{=} \sum_i\sum\\ &\left\{|\tr(\sigma^\prime(B)\rho^{\prime\prime})|(\sigma^\prime,\rho^{\prime\prime})\in\llbracket P\rrbracket\left(\sigma, F^\dag_i(\sigma(\overline{t}))_{\sigma(\overline{q})}\cdot\rho\cdot F_i(\sigma(\overline{t}))_{\sigma(\overline{q})}\right), \sigma^\prime\models\psi_i\ \mbox{and}\ \sigma^\prime(B_i):WD|\right\}\\ &=\sum_i\sum\\ &\left\{|\tr\left[\sigma^\prime(B)\left(F^\dag_i(\sigma(\overline{t}))_{\sigma(\overline{q})}\cdot\rho^\prime\cdot F_i(\sigma(\overline{t}))_{\sigma(\overline{q})}\right)\right]|(\sigma^\prime,\rho^\prime)\in\llbracket P\rrbracket(\sigma, \rho), \sigma^\prime\models\psi_i\ \mbox{and}\ \sigma^\prime(B):WD|\right\}.\end{split}\end{equation} Let $\psi=\bigvee_i\psi_i$. Then using (\ref{Kraus-proof7+}), the assumption (\ref{accum2}), the assumptions: $$\sigma\models T(\overline{t})\propto F^\prime(\overline{t}),$$ $$\sigma\models\left(\forall i_1,i_2\right)\left(i_1\neq i_2\rightarrow\neg(\psi_{i_1}\wedge\psi_{i_2})\right),$$  and the commutativity $\tr(AB)=\tr(BA)$ for trace, we have:  
\begin{equation}\label{Kraus-proof7}\begin{split}
&\mu\leq \sum_i\sum\\ &\left\{|\tr\left[\sigma^\prime(B)\left(F^{\prime\dag}(\sigma(\overline{t}))_{\sigma(\overline{q})}\cdot\rho^\prime\cdot F^\prime(\sigma(\overline{t}))_{\sigma(\overline{q})}\right)\right]|(\sigma^\prime,\rho^\prime)\in\llbracket P\rrbracket(\sigma, \rho), \sigma^\prime\models\psi_i\ \mbox{and}\ \sigma^\prime(B):WD|\right\}\\ 
&\leq \sum\\ &\left\{|\tr\left[\sigma^\prime(B)\left(F^{\prime\dag}(\sigma(\overline{t}))_{\sigma(\overline{q})}\cdot\rho^\prime\cdot F^\prime(\sigma(\overline{t}))_{\sigma(\overline{q})}\right)\right]|(\sigma^\prime,\rho^\prime)\in\llbracket P\rrbracket(\sigma, \rho), \sigma^\prime\models\psi\ \mbox{and}\ \sigma^\prime(B):WD|\right\}\\ 
&=\sum\\ &\left\{|\tr\left[\left(F^\prime(\sigma(\overline{t}))_{\sigma(\overline{q})}\cdot\sigma^\prime(B)\cdot F^{^\prime\dag}(\sigma(\overline{t}))_{\sigma(\overline{q})}\right)\rho^\prime\right]|(\sigma^\prime,\rho^\prime)\in\llbracket P\rrbracket(\sigma, \rho), \sigma^\prime\models\psi\ \mbox{and}\ \sigma^\prime(B):WD|\right\}\\
 &=\sum \left\{|\tr\left[\sigma\left(F^\prime(\overline{t})[\overline{q}](B)\right)\rho^\prime\right]|(\sigma^\prime,\rho^\prime)\in\llbracket P\rrbracket(\sigma, \rho), \sigma^\prime\models\psi\ \mbox{and}\ \sigma^\prime(B):WD|\right\}. 
\end{split}\end{equation}

 Finally, by plugging (\ref{Kraus-proof6}) and (\ref{Kraus-proof7}) into (\ref{Kraus-proof2}), it is derived that 
 \begin{align*}&\tr\left[\sigma\left(F(\overline{t})[\overline{q}](\{A_i\})\right)\rho\right]\\ &\leq 
 \sum \left\{|\tr\left[\sigma\left(F^\prime(\overline{t})[\overline{q}](B)\right)\rho^\prime\right]|(\sigma^\prime,\rho^\prime)\in\llbracket P\rrbracket(\sigma, \rho), \sigma^\prime\models\psi\ \mbox{and}\ \sigma^\prime(B):WD|\right\}
+\mathit{NT}(P)(\sigma,\rho).
 \end{align*}
 Thus, (\ref{conv2}) is proved. 
 
{\vskip 3pt}

(\textbf{Rule-Accum2}): Let us use the same notations as in (\textbf{Rule-Accum2}). We assume that $\overline{q}\cap\mathit{qv}(P)=\emptyset$ and for every $i$, \begin{equation}\label{conv10}\models_\mathit{par}\{\varphi,A_i\}\ P\ \{\psi,B_i\};\end{equation} that is, for any input state $(\sigma,\rho)\in\Omega$, whenever $\sigma\models\varphi$ and $\sigma(A_i)$ is well-defined, then: 
\begin{equation}\label{Kraus-proof10}\begin{split}\tr(\sigma(A_i)\rho)\leq\sum &\left\{|\tr(\sigma^\prime(B_i)\rho^\prime)|(\sigma^\prime,\rho^\prime)\in\llbracket P\rrbracket(\sigma,\rho),\sigma^\prime\models\psi\ \mbox{and}\ \sigma^\prime(B_i):WD\right\}\\ &\qquad + \mathit{NT}(P)(\sigma,\rho).
\end{split}\end{equation}
We want to prove \begin{equation}\label{conv20}\models_\mathit{par}\left\{\varphi,F(\overline{t})[\overline{q}](\{A_i\})\right\}\ P\ \left\{\psi,F(\overline{t})[\overline{q}](\{B_i\})\right\}.\end{equation}
For any input state $(\sigma,\rho)\in\Omega$, if $\sigma(\varphi)$ and $\sigma\left(F(\overline{t})[\overline{q}](\{A_i\})\right)$ is well-defined, then for all $i$, $\sigma(A_i)$ is well-defined, and $\sigma\models\mathit{Dist}(\overline{q})$. Assume that the Kraus operator symbol $F$ is interpreted as $F=\left\{F_i(\overline{a})\right\}$. Then by the assumption (\ref{Kraus-proof10}) we can derive: 
\begin{equation}\label{Kraus-proof20}\begin{split}&\tr\left[\sigma\left(F(\overline{t})[\overline{q}](\{A_i\})\right)\rho\right]=\tr\left[\left(\sum_iF_i(\sigma(\overline{t}))_{\sigma(\overline{q})}\cdot\sigma(A_i)\cdot F^\dag_i(\sigma(\overline{t}))_{\sigma(\overline{q})}\right)\rho\right]\\
&\leq\sum_i\Gamma_i\end{split}\end{equation} in a way similar to (\ref{Kraus-proof2}), 
where: 
\begin{align*}&\Gamma_i=\sum\\ &\left\{|\tr(\sigma^\prime(B_i)\rho^\prime)|(\sigma^\prime,\rho^\prime)\in\llbracket P\rrbracket\left(\sigma, F^\dag_i(\sigma(\overline{t}))_{\sigma(\overline{q})}\cdot\rho\cdot F_i(\sigma(\overline{t}))_{\sigma(\overline{q})}\right), \sigma^\prime\models\psi\ \mbox{and}\ \sigma^\prime(B_i):WD|\right\}\\ & +\mathit{NT}(P)\left(\sigma, F^\dag_i(\sigma(\overline{t}))_{\sigma(\overline{q})}\cdot\rho\cdot F_i(\sigma(\overline{t}))_{\sigma(\overline{q})}\right).\end{align*}

Now using (\ref{Kraus-proof3}) we obtain: \begin{equation}\label{Kraus-proof70}\begin{split}&\sum_i\sum\\ &\left\{|\tr(\sigma^\prime(B_i)\rho^{\prime\prime})|(\sigma^\prime,\rho^{\prime\prime})\in\llbracket P\rrbracket\left(\sigma, F^\dag_i(\sigma(\overline{t}))_{\sigma(\overline{q})}\cdot\rho\cdot F_i(\sigma(\overline{t}))_{\sigma(\overline{q})}\right), \sigma^\prime\models\psi\ \mbox{and}\ \sigma^\prime(B_i):WD|\right\}\\  
&=\sum_i\sum\\ &\left\{|\tr\left[\sigma^\prime(B_i)\left(F^\dag_i(\sigma(\overline{t}))_{\sigma(\overline{q})}\cdot\rho^\prime\cdot F_i(\sigma(\overline{t}))_{\sigma(\overline{q})}\right)\right]|(\sigma^\prime,\rho^\prime)\in\llbracket P\rrbracket(\sigma, \rho), \sigma^\prime\models\psi\ \mbox{and}\ \sigma^\prime(B_i):WD|\right\}\\ 
&=\sum\\ &\left\{|\sum_i\tr\left[\sigma^\prime(B_i)\left(F^\dag_i(\sigma(\overline{t}))_{\sigma(\overline{q})}\cdot\rho^\prime\cdot F_i(\sigma(\overline{t}))_{\sigma(\overline{q})}\right)\right]|(\sigma^\prime,\rho^\prime)\in\llbracket P\rrbracket(\sigma, \rho), \sigma^\prime\models\psi\ \mbox{and}\ \sigma^\prime(B_i):WD|\right\}\\ 
&=\sum\\ &\left\{|\sum_i\tr\left[\left(F_i(\sigma(\overline{t}))_{\sigma(\overline{q})}\cdot\sigma^\prime(B_i)F^\dag_i(\sigma(\overline{t})_{\sigma(\overline{q})}\right)\rho^\prime\right]|(\sigma^\prime,\rho^\prime)\in\llbracket P\rrbracket(\sigma, \rho), \sigma^\prime\models\psi\ \mbox{and}\ \sigma^\prime(B_i):WD|\right\}\\
&=\sum\\ &\left\{|\tr\left[\sum_i\left(F_i(\sigma(\overline{t}))_{\sigma(\overline{q})}\cdot\sigma^\prime(B_i)F^\dag_i(\sigma(\overline{t})_{\sigma(\overline{q})}\right)\rho^\prime\right]|(\sigma^\prime,\rho^\prime)\in\llbracket P\rrbracket(\sigma, \rho), \sigma^\prime\models\psi\ \mbox{and}\ \sigma^\prime(B_i):WD|\right\}\\ 
&=\sum \left\{|\tr\left[\sigma\left(F(\overline{t})[\overline{q}](\{B_i\})\right)\rho^\prime\right]|(\sigma^\prime,\rho^\prime)\in\llbracket P\rrbracket(\sigma, \rho), \sigma^\prime\models\psi\ \mbox{and}\ \sigma^\prime(B_i):WD|\right\}. 
\end{split}\end{equation}

 Finally, by plugging (\ref{Kraus-proof6}) and (\ref{Kraus-proof70}) into (\ref{Kraus-proof20}), it is derived that 
 \begin{align*}&\tr\left[\sigma\left(F(\overline{t})[\overline{q}](\{A_i\})\right)\rho\right] \leq 
 \sum\\ &\left\{|\tr\left[\sigma\left(F(\overline{t})[\overline{q}](\{B_i\})\right)\rho^\prime\right]|(\sigma^\prime,\rho^\prime)\in\llbracket P\rrbracket(\sigma, \rho), \sigma^\prime\models\psi\ \mbox{and}\ \sigma^\prime(B_i):WD|\right\}. 
+\mathit{NT}(P)(\sigma,\rho).
 \end{align*}
 Thus, (\ref{conv20}) is proved. 

\subsubsection{The Case of Total Correctness} Now we turn to prove the soundness of proof system $\mathit{QHL}^+_\mathit{tot}$. The validity of all axioms and rules except (Rule-Loop-tot) in the sense of total correctness can be proved in a way similar to (but easier than) the case of partial correctness. So, we only prove the validity of (Rule-Loop-tot) here. We use the same notation as in the validity proof of (Rule-Loop-par). Assume \begin{align}
\label{loop-s1}&\models_\mathit{tot}\{\varphi\wedge b,A\}\ P\ \{\varphi,A\},\\ \label{loop-s2} &\models_\mathit{tot}\{\varphi\wedge b\wedge t=z,I\}\ P\ \{t<z,I\},\\ \label{loop-s3} &\models\varphi\rightarrow t\geq 0
\end{align} 
where $z$ is an integer variable with $z\notin\mathit{free}(\varphi)\cup\mathit{var}(b)\cup\mathit{var}(t)\cup\mathit{cv}(P)$, and $I$ stands for the identity quantum predicate.  
We want to prove: 
\begin{equation}\label{loop-s4}\models_\mathit{tot}\{\varphi,A\}\ \mathbf{while}\ \{\varphi\wedge \neg b,A\};\end{equation}
that is, for any input state $(\sigma,\rho)\in\Omega$, if $\sigma\models\varphi$ and $\sigma(A)$ is well-defined, then 
\begin{equation}\label{ineq-loop-00}\tr(\sigma(A)\rho)\leq\sum\left\{|\tr(\sigma^\prime(A)\rho^\prime))|(\sigma^\prime,\rho^\prime)\in\llbracket\mathbf{while}\rrbracket(\sigma,\rho), \sigma^\prime\models\varphi\wedge\neg b\ \mbox{and}\ \sigma^\prime(A):WD|\right\}.\end{equation}

First, by the assumption (\ref{loop-s1}), we are able to prove the following claim in a way similar to the proof of claim (\ref{ineq-loop-0}):

\textbf{\textit{Claim 1}}: For any integer $n\geq 1$, 
\begin{equation}\label{ineq-loop-10}\begin{split}\tr(\sigma(A)\rho)&\leq\sum\left\{|\tr(\sigma^\prime(A)\rho^\prime)|(\sigma^\prime,\rho^\prime)\in\mathcal{P}_n(\sigma,\rho), \sigma^\prime\models\varphi\ \mbox{and}\ \sigma^\prime(A):WD|\right\}+\Delta^\ast_n
\end{split}\end{equation} where: \begin{align}\label{def-Delta}\Delta^\ast_n &=\sum\left\{|\tr(\sigma^\prime(A)\rho^\prime))\mid (\sigma^\prime,\rho^\prime)\in \mathcal{Q}^\ast_n(\sigma,\rho)\ \mbox{and}\ \sigma^\prime(A):WD|\right\},\\
\label{def-Q}\mathcal{Q}^\ast_n(\sigma,\rho)&=\left\{|(\sigma^\prime,\rho^\prime)\mid (\sigma,\rho)=(\sigma_0,\rho_0)\stackrel{P}{\Rightarrow}...\stackrel{P}{\Rightarrow}(\sigma_n,\rho_n)=(\sigma^\prime,\rho^\prime)\ \mbox{and}\ \sigma_i\models \varphi\wedge b\ (0\leq i\leq n)|\right\}.
\end{align}
Let us take the limit of $n\rightarrow\infty$ in the right-hand side of (\ref{ineq-loop-10}). Then it follows from (\ref{ineq-loop-14}) that \begin{equation}\label{eq-4z}\begin{split}
\tr(\sigma(A)\rho)\leq &\sum\left\{|\tr(\sigma^\prime(A)\rho^\prime)|(\sigma^\prime,\rho^\prime)\in\llbracket\mathbf{while}\rrbracket(\sigma,\rho), \sigma^\prime\models\varphi\wedge\neg b\ \mbox{and}\ \sigma^\prime(A):WD|\right\}\\ &\qquad\qquad+\lim_{n\rightarrow\infty}\Delta^\ast_n
\end{split}\end{equation} Thus, it suffices to show that $\lim_{n\rightarrow\infty}\Delta^\ast_n=0$. We do this by refutation. Suppose that $\lim_{n\rightarrow\infty}\Delta^\ast_n>0$. We derive a contradiction in four steps:

\textbf{\textit{Step 1}}: If $\tr(\sigma(A)\rho)=0$, then (\ref{ineq-loop-00}) is obviously true. So, we now assume $\tr(\sigma(A)\rho)>0$ and use the assumption (\ref{loop-s2}) to prove the following:

\textbf{\textit{Claim 2}}: If $(\sigma,\rho)\stackrel{P}{\Rightarrow}(\sigma^\prime,\rho^\prime)$ and $\rho^\prime\neq 0$, then $\sigma^\prime(t)<\sigma(t)$.  

To prove this claim, by the assumptions, we have $\sigma\models\varphi\wedge b$ and $\sigma(I):WD$. We can introduce a fresh integer variable $z$ such that \begin{equation}\label{z-cond}z\notin\mathit{free}(\varphi)\cup\mathit{var}(b)\cup\mathit{var}(t)\cup\mathit{cv}(P),\end{equation} and set $\sigma(z)=\sigma(t)$. Thus, it holds that $\sigma\models\varphi\wedge b\wedge t=z$. By the assumption (\ref{loop-s2}), we obtain:   
\begin{equation}\label{def-R}\begin{split}0<\tr(\rho)&=\tr(\sigma(I)\rho)\\ &\leq\sum\left\{|\tr(\sigma^\prime(I)\rho^\prime)|(\sigma,\rho)\stackrel{P}{\Rightarrow}(\sigma^\prime,\rho^\prime)\ \mbox{and}\ \sigma^\prime\models t<z|\right\}\\ 
&=\sum\left\{|\tr(\rho^\prime)|(\sigma,\rho)\stackrel{P}{\Rightarrow}(\sigma^\prime,\rho^\prime)\ \mbox{and}\ \sigma^\prime\models t<z|\right\}. 
\end{split}\end{equation} 
For any $(\sigma^\prime,\rho^\prime)$ in the right-hand side of (\ref{def-R}), by the condition (\ref{z-cond}) we know that the program $P$ does not change the value of $z$, and thus $\sigma^\prime(z)=\sigma(z)$. On the other hand, $\sigma^\prime\models t<z$. Then $\sigma^\prime(t)<\sigma^\prime(z)=\sigma(z)=\sigma(t)$. Therefore, we obtain:
\begin{equation}\label{zz-cond}\tr(\rho)\leq  \sum\left\{|\tr(\rho^\prime)|(\sigma,\rho)\stackrel{P}{\Rightarrow}(\sigma^\prime,\rho^\prime)\ \mbox{and}\ \sigma^\prime(t)<\sigma(t)|\right\}
\end{equation} from (\ref{def-R}). 

It follows from Proposition \ref{trace-decrease} that 
\begin{equation}\label{zzz-cond}\sum\left\{|\tr(\rho^\prime)|(\sigma,\rho)\stackrel{P}{\Rightarrow}(\sigma^\prime,\rho^\prime)|\right\}\leq\tr(\rho).\end{equation}
By combining (\ref{zz-cond}) and (\ref{zzz-cond}) we have: 
\begin{equation}\label{zzzz-cond}\sum\left\{|\tr(\rho^\prime)|(\sigma,\rho)\stackrel{P}{\Rightarrow}(\sigma^\prime,\rho^\prime)|\right\}\leq\sum\left\{|\tr(\rho^\prime)|(\sigma,\rho)\stackrel{P}{\Rightarrow}(\sigma^\prime,\rho^\prime)\ \mbox{and}\ \sigma^\prime(t)<\sigma(t)|\right\}.\end{equation} 
If \textbf{Claim 2} is not true; that is, there exists $(\sigma^\prime_0,\rho^\prime_0)$ such that $(\sigma,\rho)\stackrel{P}{\Rightarrow}(\sigma^\prime_0,\rho^\prime_0)$, $\tr(\rho^\prime_0)>0$ and $\sigma^\prime_0(t)\geq\sigma(t)$, 
then it holds that \begin{align*}\mbox{The right-side of}\ (\ref{zzzz-cond})&\leq \sum\left\{|\tr(\rho^\prime)|(\sigma,\rho)\stackrel{P}{\Rightarrow}(\sigma^\prime,\rho^\prime)|\right\}-\tr(\rho^\prime)\\ 
&< \mbox{the left-side of}\ (\ref{zzzz-cond}),\end{align*} and inequality (\ref{zzzz-cond}) is violated, a contradiction. Thus, \textbf{Claim 2} is proved. 

\textbf{\textit{Step 2}}: We construct a tree using \textbf{Claim 2}. The root of the tree is labelled by $(\sigma_0,\rho_0)=(\sigma,\rho)$. 
Let 
\begin{align*}\mathcal{R}_1=\left\{|(\sigma_1,\rho_1)|(\sigma_1,\rho_1)\in\mathcal{Q}_1^\ast(\sigma,\rho)\ \mbox{and}\ \rho_1\neq 0|\right\}.
\end{align*} 

From the assumption $\lim_{n\rightarrow\infty}\Delta^\ast_n>0$, we have:

\textbf{\textit{Claim 3}}: There exists an integer $n_0\geq 0$ such that $\Delta^\ast_n>0$ for all $n\geq n_0$. 

Then from \textbf{Claim 3} and the defining equations (\ref{def-Delta}) and (\ref{def-Q}) of $\Delta^\ast_n$ and $\mathcal{Q}^\ast_n(\sigma,\rho)$, it is easy to see that $\mathcal{R}_1\neq\emptyset$. We expand the tree by letting each $(\sigma_1,\rho_1)\in\mathcal{R}_1$ with $\sigma_1\models\varphi\wedge b$ as an immediate child of $(\sigma_0,\rho_0)$. It follows from \textbf{Claim 2} that $\sigma_1(t)<\sigma_0(t)$. 

Next, for each newly added node $(\sigma_1,\rho_1)$, we repeat the above process and generate its immediate childs $(\sigma_2,\rho_2)$. In this way, we expand the tree step by step, and the tree can be constructed as desired. We need to remember that in this tree:\begin{enumerate}
\item[(i)] for any node $(\sigma_i,\rho_i)$, it holds that $\sigma_n\models\varphi\wedge b$; and \item[(ii)] if $(\sigma_{i+1},\rho_{i+1})$ is a child of $(\sigma_i,\rho_i)$, then $\sigma_{i+1}(t)<\sigma_i(t)$.
\end{enumerate}

\textbf{\textit{Step 3}}: For any $n\geq n_0$, using \textbf{Claim 3} and the defining equations (\ref{def-Delta}) and (\ref{def-Q}) of $\Delta^\ast_n$ and $\mathcal{Q}^\ast_n(\sigma,\rho)$, we can assert that there exists $(\sigma^\prime,\rho^\prime)\in\mathcal{Q}_n(\sigma,\rho)$ such that $\sigma^\prime\models\varphi$, $\sigma^\prime(A):WD$ and $\tr(\sigma^\prime(A)\rho^\prime)>0$. Consequently, there exists an execution path: \begin{equation}\label{eq-path}(\sigma,\rho)=(\sigma_0,\rho_0)\stackrel{P}{\Rightarrow} ...\stackrel{P}{\Rightarrow}(\sigma_n,\rho_n)=(\sigma^\prime,\rho^\prime)\end{equation} such that $\sigma_i\models\varphi\wedge b\ (0\leq i\leq n).$ 
Since $\tr(\sigma^\prime(A)\rho^\prime)>0$, it must hold that $\rho_i\neq 0$ for $i=0,1,...,n$. By \textbf{Claim 2}, we have $\sigma_{i+1}(t)<\sigma_i(t)$ for $i=0,1,...,n-1$. 
Therefore, execution path (\ref{eq-path}) is a path in the tree that we constructed in Step 2. 

As a conclusion of this step, since for any $n\geq n_0$, the tree contains a path of length $n+1$, it must be an infinite tree.

\textbf{\textit{Step 4}}: Note that we assume that the classical (outcome) type $T$ of any measurement $M$ is finite. Then from the transition rules given in Definition \ref{def-operational}, we know that the tree constructed in Step 2 must be finitely branching. On the other hand, we proved it is an infinite tree in Step 3. Thus, by K\"{o}nig's lemma, the tree has an infinite path     
$$(\sigma,\rho)=(\sigma_0,\rho_0)\stackrel{P}{\Rightarrow} ...\stackrel{P}{\Rightarrow}(\sigma_n,\rho_n)\stackrel{P}{\Rightarrow}(\sigma_{n+1},\rho_{n+1})\stackrel{P}{\Rightarrow} ....$$
Consequently, by the fact (ii) at the end of Step 2, we obtain an infinite decreasing sequence $$\sigma(t)=\sigma_0(t)>...>\sigma_n(t)>\sigma_{n+1}(t)>...$$ of integers. On the other hand, by the fact (i) at the end of Step 2, we have $\sigma_n\models\varphi$ for all $n\geq 0$.  
Invoking the assumption (\ref{loop-s3}), i.e. $\models\varphi\rightarrow t\geq 0$, we then obtain $\sigma_n(t)\geq 0$ for all $n\geq 0$. This is a contradiction. 

Therefore, we have proved that $\lim_{n\rightarrow\infty}\Delta_n^\ast=0$. Substituting this into (\ref{eq-4z}), we obtain (\ref{ineq-loop-00}) and the proof is completed.

\subsection{Verification Example: Quantum Fourier Transform}\label{app-QFT}

In this section, we provide the proof details omitted in Example \ref{QFT-proof}.

(1) Proof of correctness formula (\ref{proof-reverse}): First, we notice that \begin{align}\label{x-reverse}\begin{cases}\mathit{Reverse}[q[1:1]]=I\ (\mbox{the identity operator on})\ \hs_2,
\\ \mathit{Reverse}[q[1:n]](|\Phi\rangle\otimes|\psi\rangle)=|\psi\rangle\otimes\mathit{Reverse}[q[1:n]]|\Phi\rangle
\end{cases}\end{align} for any $\psi\rangle\in\hs_2$ and $|\Phi\rangle\in\hs_2^{\otimes(n-1)}$, and $$\mathit{Reverse}^\dag[q[1:n]]=\mathit{Reverse}[q[1:n]].$$ Then we can prove 
\begin{equation}\label{xx-reverse}\mathit{Reverse}^\dag[q[1:n]]|\mathit{QFT}(j,k:l)\rangle=|\mathit{QFT}^\ast(j,n-l+1:n-k+1)\rangle\end{equation} by induction on $l-k$. Indeed, the basis case of $k=l$ is immediate from the first equality in (\ref{x-reverse}). The induction step is then derived as follows. By the induction hypothesis and the second equality in (\ref{x-reverse}) we obtain:\begin{align*}
&\mathit{Reverse}^\dag[q[1:n]]|\mathit{QFT}(j,k:l)\rangle=\mathit{Reverse}[[1:n]]|\mathit{QFT}(j,k:l)\rangle\\ 
&=\mathit{Reverse}[[q[1:n]]\left(|\mathit{QFT}(j,k:l-1)\rangle\otimes\frac{1}{\sqrt{2}}\left(|0\rangle+e^{2\pi 0.j[l-k-1:l]}|1\rangle\right)\right)\\ &=\frac{1}{\sqrt{2}}\left(|0\rangle+e^{2\pi 0.j[l-k-1:l]}|1\rangle\right)\otimes \mathit{Reverse}[[q[1:n]]|\mathit{QFT}(j,k:l-1)\rangle
\\ &=\frac{1}{\sqrt{2}}\left(|0\rangle+e^{2\pi 0.j[l-k-1:l]}|1\rangle\right)\otimes |\mathit{QFT}^\ast(j,n-l+2:n-k+1)\rangle\\
&=|\mathit{QFT}^\ast(j,n-l+1:n-k+1)\rangle.
\end{align*} Here, the fourth equality comes from the induction hypothesis for $n-1$. Therefore, it holds that 
$$B^\ast(j,1:n)=\mathit{Reverse}^\dag[q[1:n]] B(j,1:n)\mathit{Reverse}[q[1:n]]$$ and thus (\ref{proof-reverse}) is established by (Axiom-Uni). 

(2) Proof of correctness formula (\ref{inductxx-QFT}): Let $A(j,m:n)=1$ (constant) when $m>n$. We want to prove:
\begin{equation}\label{CR-proof}\left\{m\leq n,|\varphi\rangle_{q[m]}\langle\varphi|\otimes A(j,m+1:n)\right\}\ \mathit{CR}[q[m:n]]\ \left\{m\leq n,|\psi\rangle_{q[m]}\langle\psi|\otimes A(j,m+1:n)\right\}\end{equation} by induction on the length $n-m$ of the arrays. 
If $m=n$, then $|\varphi\rangle=|\psi\rangle$ and it is obvious that (\ref{CR-proof}) holds. By the induction hypothesis, we have: 
\begin{equation*}\begin{split}&\left\{m\leq n-1,|\varphi\rangle_{q[m]}\langle\varphi|\otimes A(j,m+1:n-1)\right\}\ \mathit{CR}[q[m:n-1]]\\ &\qquad\qquad\qquad\qquad\qquad\qquad\left\{m\leq n-1,|\psi^\prime\rangle_{q[m]}\langle\psi^\prime|\otimes A(j,m+1:n-1)\right\}\end{split}\end{equation*}
where $$|\psi^\prime=\frac{1}{\sqrt{2}}(|0\rangle+e^{2\pi i 0.j[m:n-1]}|1\rangle).$$ Note that $q[n]\notin q[m:n-1]$ and 
\begin{equation}\label{CR-proof2}A(j,m+1:n)=A(j,m+1:n-1)\otimes |j[n]\rangle_{q[n]}\langle j[n]|.\end{equation} Then it holds that 
\begin{equation}\label{CR-proof1}\begin{split}\left\{m\leq n,|\varphi\rangle_{q[m]}\langle\varphi|\otimes A(j,m+1:n)\right\}\ &\mathit{CR}[q[m:n-1]]\\ &\left\{m\leq n,|\psi^\prime\rangle_{q[m]}\langle\psi^\prime|\otimes A(j,m+1:n)\right\}.\end{split}\end{equation}

On the other hand, by (Axiom-Uni) we have: 
\begin{align*}\left\{m\leq n,|\psi^\prime\rangle_{q[m]}\langle\psi^\prime|\otimes |j[n]\rangle_{q[n]}\langle j[n]|\right\}\ &C(R_{n-m+1})[q_n,q_m]\\ &\left\{m\leq n,|\psi\rangle_{q[m]}\langle\psi| \otimes |j[n]\rangle_{q[n]}\langle j[n]|\right\}.
\end{align*} Note that $q[m]$ and $q[n]$ do not occur in $A(j,m+1:n-1)$. Then by (\ref{CR-proof2}) we obtain:
\begin{equation}\label{CR-proof3}\begin{split}\left\{m\leq n,|\psi^\prime\rangle_{q[m]}\langle\psi^\prime|\otimes A(j,m+1:n)\right\}\ &C(R_{n-m+1})[q_n,q_m]\\ &\left\{m\leq n,|\psi\rangle_{q[m]}\langle\psi| \otimes A(j,m+1:n)\right\}.\end{split}\end{equation}
Now we can use (Rule-Seq) to derive:
\begin{equation*}\begin{split}\left\{m\leq n,|\varphi\rangle_{q[m]}\langle\varphi|\otimes A(j,m+1:n)\right\}\ &\mathit{CR}[q[m:n-1]]; C(R_{n-m+1})[q_n,q_m]\\ &\left\{m\leq n,|\psi\rangle_{q[m]}\langle\psi| \otimes A(j,m+1:n)\right\}.\end{split}\end{equation*}
from (\ref{CR-proof1}) and (\ref{CR-proof3}). Thus, by the inductive definition of $\mathit{CR}[q[m:n]]$, we know that (\ref{CR-proof}) holds. 

\end{document}